\providecommand{\U}[1]{\protect\rule{.1in}{.1in}}
\newtheorem{theorem}{Theorem}
\newtheorem{conjecture}[theorem]{Conjecture}
\newtheorem{lemma}[theorem]{Lemma}
\newtheorem{proposition}[theorem]{Proposition}
\newtheorem{question}[theorem]{Question}
\newtheorem{remark}[theorem]{Remark}
\newenvironment{proof}[1][Proof]{\noindent\textbf{#1.} }{\ \rule{0.5em}{0.5em}}
\def\U{\mathrm{U}}
\numberwithin{equation}{section}
\begin{document}

\title{\textbf{Applications of position-based coding to classical communication over
quantum channels}}
\author{Haoyu Qi\thanks{Hearne Institute for Theoretical Physics and Department of
Physics \& Astronomy, Louisiana State University, Baton Rouge, Louisiana
70803, USA}
\and Qingle Wang\footnotemark[1] \thanks{State Key Laboratory of Networking and
Switching Technology, Beijing University of Posts and Telecommunications,
Beijing 100876, China}
\and Mark M. Wilde\footnotemark[1] \thanks{Center for Computation and Technology,
Louisiana State University, Baton Rouge, Louisiana 70803, USA}}
\maketitle

\begin{abstract}
Recently, a coding technique called \textit{position-based coding} has been
used to establish achievability statements for various kinds of classical
communication protocols that use quantum channels. In the present paper, we
apply this technique in the entanglement-assisted setting in order to
establish lower bounds for error exponents, lower bounds on the second-order
coding rate, and one-shot lower bounds. We also demonstrate that
position-based coding can be a powerful tool for analyzing other communication
settings. In particular, we reduce the quantum simultaneous decoding
conjecture for entanglement-assisted or unassisted communication over a
quantum multiple access channel to open questions in multiple quantum
hypothesis testing. We then determine achievable rate regions for
entanglement-assisted or unassisted classical communication over a quantum
multiple-access channel, when using a particular quantum simultaneous decoder.
The achievable rate regions given in this latter case are generally
suboptimal, involving differences of R\'{e}nyi-2 entropies and conditional
quantum entropies.

\end{abstract}

\section{Introduction}

Understanding optimal rates for classical communication over both
point-to-point quantum channels and quantum network channels are fundamental
tasks in quantum Shannon theory (see, e.g.,
\cite{H12,Wat16,wilde2011classical,W17}). Early developments of quantum
Shannon theory are based on the assumption that the information is transmitted
over an arbitrarily large number of independent and identically distributed
(i.i.d.)~uses of a given quantum channel. By taking advantage of this
assumption, general formulas have been established for capacities of various
communication protocols, with or without preshared entanglement. When a sender
and receiver do not share entanglement before communication begins, it is
known that the Holevo information of a quantum channel is an achievable rate
for classical communication \cite{holevo1998capacity,schumacher1997sending}.
Regularizing the Holevo information leads to a multi-letter formula that
characterizes the capacity for this task. Regarding communication over quantum
network channels, an achievable rate region for classical communication over
quantum multiple-access channels was given in \cite{winter2001capacity} and
regularizing it leads to a characterization of the capacity region for this
task. However, only inner bounds on the capacity region for general broadcast
channels are known
\cite{yard2011quantum,radhakrishnan2016one,savov2015classical}, except when
the quantum broadcast channel is a particular kind of degraded channel
\cite{wang2016hadamard}. When there is entanglement shared between the
communicating parties, many scenarios have also been studied, including
classical communication over point-to-point quantum channels
\cite{bennett2002entanglement,Hol01a}, quantum multiple-access channels
\cite{hsieh2008entanglement,Xu2013}, and quantum broadcast channels
\cite{dupuis2010father,wang2016hadamard}.

Although channel capacity gives a fundamental characterization of the
communication capabilities of a quantum channel, many practically important
properties of quantum channels are not captured by this quantity. To close
this gap, several works have focused on the study of refined notions of
capacity, including error exponents
\cite{BH98,thesis1999winter,H00,hayashi2007error,dalai2013lower,DW14} and
second-order asymptotics \cite{tomamichel2013second,wilde2016second,DTW14}.
The latter works build upon strong connections between hypothesis testing and
coding, as considered in \cite{H09,polyanskiy10,wang2012one}. The refined
characterizations of capacity are of importance for regimes of practical
interest, in which a limited number of uses of a quantum channel are
available. Complementary to these developments, to go beyond the
i.i.d.~assumption, many works have been dedicated to the one-shot formalism
\cite{wang2012one,datta2013one,datta2013one-EA,matthews2014finite} and the
information-spectrum approach
\cite{hayashi2003generalCapacity,Hay06,bowen2006beyond}, with very few
assumptions made on the structure of quantum channels.

In a recent work \cite{anshu2017one}, a technique called position-based coding
was developed in order to give one-shot achievability bounds for various
classical communication protocols that use entanglement assistance. This was
then extended to the cases of unassisted classical communication and private
classical communication \cite{wilde2017position}. The method of position-based
coding is a derivative of the well known and long studied coding technique
called pulse position modulation (see, e.g., \cite{verdu1990channel,eltit03}).
In pulse position modulation, a sender encodes a message by placing a pulse in
one slot and having no pulse in the other available slots. Each slot is then
communicated over the channel, one by one. The receiver can decode well if he
can distinguish \textquotedblleft pulse\textquotedblright\ from
\textquotedblleft no pulse.\textquotedblright\ Position-based coding borrows
this idea: a sender and receiver are allowed to share many copies of a
bipartite quantum state. The sender encodes a message by sending a share of
one of the bipartite states through a channel to the receiver. From the
receiver's perspective, only one share of his systems will be correlated with
the channel output (the pulse in one slot), while the others will have no
correlation (no pulse in the other slots). So if the receiver can distinguish
\textquotedblleft pulse\textquotedblright\ from \textquotedblleft no
pulse\textquotedblright\ in this context, then he will be able to decode well,
just as is the case in pulse position modulation. The authors of
\cite{anshu2017one}\ applied the position-based coding technique to a number
of problems that have already been addressed in the literature, including
point-to-point entanglement-assisted communication
\cite{bennett2002entanglement,Hol01a}, entanglement-assisted coding with side
information at the transmitter \cite{D09}, and entanglement-assisted
communication over broadcast channels \cite{dupuis2010father}.

In the present paper, we use position-based coding to establish several new
results. First, we establish lower bounds on the entanglement-assisted error
exponent and on the one-shot entanglement-assisted capacity. The latter
improves slightly upon the result from \cite{anshu2017one} and in turn gives a
simpler proof of one of the main results of \cite{DTW14}, i.e., a lower bound
on the second-order coding rate for entanglement-assisted classical
communication. We then turn to communication over quantum multiple-access
channels when using a quantum simultaneous decoder, considering both cases of
entanglement assistance and no assistance. The quantum simultaneous decoding
conjecture from \cite{fawzi2012classical,Wil11}\ stands as one of the most
important open problems in network quantum information theory. Here we report
progress on this conjecture and connect it to some open questions from
\cite{AM14,BHOS14}\ in multiple quantum hypothesis testing. At the same time,
we give new achievable rate regions for entanglement-assisted classical
communication over multiple-access channels, where the bounds on achievable
rates are expressed as a difference of a R\'{e}nyi entropy of order two and a
conditional quantum entropy.

This paper is organized as follows. We first summarize relevant definitions
and lemmas in Section~\ref{sec:preliminary}. In this section, we also prove
Proposition~\ref{prop:ineq-hypo-renyi}, which relates the hypothesis testing
relative entropy to the quantum R\'{e}nyi relative entropy and is an
interesting counterpart to \cite[Lemma~5]{CMW14}. In
Section~\ref{sec:EA-point-to-point}, we consider entanglement-assisted
point-to-point classical communication. By using position-based coding, we
establish a lower bound on the entanglement-assisted error exponent. We also
establish a lower bound on the one-shot entanglement-assisted capacity in
terms of hypothesis-testing mutual information and state how it is close to a
previously known upper bound from \cite{matthews2014finite}. At the same time,
we provide a simpler proof of the upper bound from \cite{matthews2014finite},
which follows from a lemma regarding generalized mutual information. Based
upon this one-shot lower bound, we then rederive a lower bound on the
second-order coding rate for entanglement-assisted communication with a proof
that is arguably simpler than that given in \cite{DTW14}. In
Section~\ref{sec:EA-MAC}, we apply position-based coding to
entanglement-assisted classical communication over multiple-access channels
and establish an explicit link to multiple quantum hypothesis testing. We give
an achievable rate region for i.i.d.~channels by using techniques from the
theory of quantum typicality. We demonstrate the power of position-based
coding technique in unassisted classical communication in
Section~\ref{sec:UA-MAC}, by considering classical communication over
multiple-access channel. We explicitly show how to derandomize a
randomness-assisted protocol. In Section~\ref{sec:disucssion}, we tie open
questions in multiple quantum hypothesis testing to quantum simultaneous
decoding for the quantum multiple-access channel. Finally, we summarize our
main results and discuss open questions in Section~\ref{sec:conclusion}.

\section{Preliminaries}

\label{sec:preliminary}

\textbf{Trace distance, fidelity, and gentle measurement.} Let $\mathcal{D}%
(\mathcal{H})$ denote the set of density operators (positive semi-definite
operators with unit trace) acting on a Hilbert space~$\mathcal{H}$. The
\textit{trace distance} between two density operators $\rho,\sigma
\in\mathcal{D}(\mathcal{H})$ is equal to $\Vert\rho-\sigma\Vert_{1}$, where
$\Vert A\Vert_{1}\equiv\operatorname{Tr}\{\sqrt{A^{\dagger}A}\}$. Another
quantity to measure the closeness between two quantum states is the
\textit{fidelity}, defined as $F(\rho,\sigma)\equiv\Vert\sqrt{\rho}%
\sqrt{\sigma}\Vert_{1}^{2}$ \cite{uhlmann1976transition}. Two inequalities
relating trace distance and quantum measurement operators are as follows:

\begin{lemma}
(Gentle operator \cite{itit1999winter,ON07}) \label{lemma:gentle} Consider a
density operator $\rho\in\mathcal{D}(\mathcal{H})$ and a measurement operator
$\Lambda$ where $0\leq\Lambda\leq I$. Suppose that the measurement operator
$\Lambda$ detects the state $\rho$ with high probability $\operatorname{Tr}%
\{\Lambda\rho\}\geq1-\varepsilon$, where $\varepsilon\in\lbrack0,1]$. Then%
\begin{equation}
\left\Vert \rho-\sqrt{\Lambda}\rho\sqrt{\Lambda}\right\Vert _{1}\leq
2\sqrt{\varepsilon}~.
\end{equation}

\end{lemma}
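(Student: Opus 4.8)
The plan is to bound the trace norm directly, rather than route through the Fuchs--van de Graaf inequalities, since the latter only yields a weaker constant. First I would split the difference via the triangle inequality:
\begin{equation}
\rho-\sqrt{\Lambda}\,\rho\,\sqrt{\Lambda}=(I-\sqrt{\Lambda})\rho+\sqrt{\Lambda}\,\rho\,(I-\sqrt{\Lambda}),
\end{equation}
so that
\begin{equation}
\left\Vert \rho-\sqrt{\Lambda}\,\rho\,\sqrt{\Lambda}\right\Vert_{1}\leq\left\Vert (I-\sqrt{\Lambda})\rho\right\Vert_{1}+\left\Vert \sqrt{\Lambda}\,\rho\,(I-\sqrt{\Lambda})\right\Vert_{1}.
\end{equation}

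Next I would estimate each term by writing $\rho=\sqrt{\rho}\sqrt{\rho}$ and applying the Cauchy--Schwarz inequality for the Hilbert--Schmidt inner product (equivalently, H\"{o}lder's inequality for Schatten norms with exponents $2,2$), namely $\Vert XY\Vert_{1}\leq\Vert X\Vert_{2}\,\Vert Y\Vert_{2}$. For the first term this gives $\Vert(I-\sqrt{\Lambda})\rho\Vert_{1}\leq\Vert(I-\sqrt{\Lambda})\sqrt{\rho}\Vert_{2}\,\Vert\sqrt{\rho}\Vert_{2}$, where $\Vert\sqrt{\rho}\Vert_{2}=\sqrt{\operatorname{Tr}\{\rho\}}=1$ and $\Vert(I-\sqrt{\Lambda})\sqrt{\rho}\Vert_{2}^{2}=\operatorname{Tr}\{\rho\,(I-\sqrt{\Lambda})^{2}\}$ by cyclicity of the trace. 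For the second term, similarly, $\Vert\sqrt{\Lambda}\,\rho\,(I-\sqrt{\Lambda})\Vert_{1}\leq\Vert\sqrt{\Lambda}\sqrt{\rho}\Vert_{2}\,\Vert\sqrt{\rho}\,(I-\sqrt{\Lambda})\Vert_{2}$, with $\Vert\sqrt{\Lambda}\sqrt{\rho}\Vert_{2}^{2}=\operatorname{Tr}\{\Lambda\rho\}\leq1$ and $\Vert\sqrt{\rho}\,(I-\sqrt{\Lambda})\Vert_{2}^{2}=\operatorname{Tr}\{\rho\,(I-\sqrt{\Lambda})^{2}\}$ once more.

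The crux of the argument is then the operator inequality $(I-\sqrt{\Lambda})^{2}\leq I-\Lambda$, which holds because $0\leq\Lambda\leq I$ implies that every eigenvalue $\lambda$ of $\Lambda$ lies in $[0,1]$, so $(1-\sqrt{\lambda})^{2}=1-2\sqrt{\lambda}+\lambda\leq1-\lambda$ (equivalently $\lambda\leq\sqrt{\lambda}$). Combined with the hypothesis $\operatorname{Tr}\{\Lambda\rho\}\geq1-\varepsilon$, this yields $\operatorname{Tr}\{\rho\,(I-\sqrt{\Lambda})^{2}\}\leq\operatorname{Tr}\{\rho\,(I-\Lambda)\}\leq\varepsilon$. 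Substituting back, each of the two terms above is bounded by $\sqrt{\varepsilon}$, and the sum is $2\sqrt{\varepsilon}$, as claimed.

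I do not expect a genuine obstacle here; the only point requiring care is the choice of method. Passing instead through the inequality $F(\rho,\sqrt{\Lambda}\rho\sqrt{\Lambda})\geq(\operatorname{Tr}\{\sqrt{\Lambda}\rho\})^{2}\geq(1-\varepsilon)^{2}$ and the Fuchs--van de Graaf bound only gives $\Vert\rho-\sqrt{\Lambda}\rho\sqrt{\Lambda}\Vert_{1}\leq2\sqrt{2\varepsilon-\varepsilon^{2}}$, which is not sharp enough; it is the direct Schatten-norm manipulation together with the pointwise estimate $(1-\sqrt{\lambda})^{2}\leq1-\lambda$ that delivers the constant $2$.
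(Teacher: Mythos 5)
Your proof is correct. The paper states this lemma without proof, citing it from the literature, so there is no in-paper argument to compare against; your derivation is the standard one appearing in the cited references (triangle inequality on the decomposition $\rho-\sqrt{\Lambda}\rho\sqrt{\Lambda}=(I-\sqrt{\Lambda})\rho+\sqrt{\Lambda}\rho(I-\sqrt{\Lambda})$, H\"older with exponents $(2,2)$, and the pointwise estimate $(1-\sqrt{\lambda})^{2}\leq1-\lambda$ on $[0,1]$), and it correctly delivers the sharp constant $2\sqrt{\varepsilon}$ rather than the weaker bound obtained via fidelity and Fuchs--van de Graaf. Every step checks out, including $\Vert\sqrt{\Lambda}\sqrt{\rho}\Vert_{2}^{2}=\operatorname{Tr}\{\Lambda\rho\}\leq1$ and $\operatorname{Tr}\{\rho(I-\sqrt{\Lambda})^{2}\}\leq\operatorname{Tr}\{(I-\Lambda)\rho\}\leq\varepsilon$.
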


\begin{lemma}
\label{lemma:close} Consider two quantum states $\rho,\sigma\in\mathcal{D}%
(\mathcal{H})$ and a measurement operator $\Lambda$ where $0\leq\Lambda\leq
I$. Then we have
\begin{equation}
\operatorname{Tr}\{\Lambda\rho\}\geq\operatorname{Tr}\{\Lambda\sigma
\}-\left\Vert \rho-\sigma\right\Vert _{1}~.
\end{equation}
More generally, the same bound holds when $\rho$ and $\sigma$ are
subnormalized, i.e., $\operatorname{Tr}\{\rho\},\operatorname{Tr}%
\{\sigma\}\leq1$.
\end{lemma}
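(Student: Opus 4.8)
The plan is to rewrite the difference of the two expectation values as a single trace against the operator $\rho-\sigma$ and then control it by the trace norm. Since $\operatorname{Tr}\{\Lambda\rho\}-\operatorname{Tr}\{\Lambda\sigma\}=\operatorname{Tr}\{\Lambda(\rho-\sigma)\}$, it suffices to prove the bound $\operatorname{Tr}\{\Lambda(\rho-\sigma)\}\geq-\left\Vert\rho-\sigma\right\Vert_{1}$. I would obtain this from H\"{o}lder's inequality: $\left\vert\operatorname{Tr}\{\Lambda(\rho-\sigma)\}\right\vert\leq\left\Vert\Lambda\right\Vert_{\infty}\left\Vert\rho-\sigma\right\Vert_{1}$, and since $0\leq\Lambda\leq I$ forces $\left\Vert\Lambda\right\Vert_{\infty}\leq1$, we conclude $\operatorname{Tr}\{\Lambda(\rho-\sigma)\}\geq-\left\vert\operatorname{Tr}\{\Lambda(\rho-\sigma)\}\right\vert\geq-\left\Vert\rho-\sigma\right\Vert_{1}$, as desired.

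An alternative, self-contained route avoids invoking H\"{o}lder and uses the Jordan--Hahn decomposition instead: write $\rho-\sigma=P-Q$ with $P,Q\geq0$ supported on orthogonal subspaces, so that $\left\Vert\rho-\sigma\right\Vert_{1}=\operatorname{Tr}\{P\}+\operatorname{Tr}\{Q\}$. Then $\operatorname{Tr}\{\Lambda(\rho-\sigma)\}=\operatorname{Tr}\{\Lambda P\}-\operatorname{Tr}\{\Lambda Q\}\geq-\operatorname{Tr}\{\Lambda Q\}\geq-\operatorname{Tr}\{Q\}\geq-\left\Vert\rho-\sigma\right\Vert_{1}$, where the first inequality uses $\Lambda,P\geq0$, the second uses $\Lambda\leq I$ together with $Q\geq0$, and the last uses $\operatorname{Tr}\{P\}\geq0$. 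I would present whichever of these two arguments fits the surrounding exposition best; both are short.

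For the subnormalized generalization, the key observation is that neither argument ever uses $\operatorname{Tr}\{\rho\}=1$ or $\operatorname{Tr}\{\sigma\}=1$: H\"{o}lder's inequality and the operator-norm bound $\left\Vert\Lambda\right\Vert_{\infty}\leq1$ are insensitive to normalization, and likewise the Jordan--Hahn step only relies on positivity of $Q$ and of $\operatorname{Tr}\{P\}$. Hence the same chain of inequalities applies verbatim when $\operatorname{Tr}\{\rho\},\operatorname{Tr}\{\sigma\}\leq1$, and indeed even without that restriction. I would simply remark this at the end rather than repeat the computation.

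There is no real obstacle here — the statement is a standard consequence of the duality between the trace norm and the operator norm, and the only thing warranting explicit comment is the (trivial) verification that the subnormalized case needs no extra hypothesis.
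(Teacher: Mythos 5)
Your proposal is correct: both the H\"{o}lder/duality argument and the Jordan--Hahn decomposition argument are complete and valid, and you are right that neither uses normalization of $\rho$ or $\sigma$, so the subnormalized case follows with no extra work. The paper itself states Lemma~\ref{lemma:close} without proof, treating it as a standard fact, so there is nothing to compare against; either of your two routes would serve as a self-contained justification, with the Jordan--Hahn version being the more elementary of the two since it avoids invoking trace-norm duality as a black box.
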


\bigskip

\textbf{Information spectrum.} The information spectrum approach
\cite{koga2013information,nagaoka2007information,hayashi2003generalCapacity,DR09}
gives one-shot bounds for operational tasks in quantum Shannon theory, with
very few assumptions made about the source or channel
\cite{nagaoka2007information,hayashi2003generalEntangle,hayashi2003generalCapacity,hayashi2007error}%
. What plays an important role in the information spectrum method is the
positive spectral projection of an operator. For a Hermitian operator $X$ with
spectral decomposition $X=\sum_{i}\lambda_{i}|i\rangle\langle i|$, the
associated \textit{positive spectral projection} is denoted and defined as
\begin{equation}
\{X\geq0\}\equiv\sum_{i\, :\, \lambda_{i}\geq0}|i\rangle\langle i|~.
\end{equation}

\bigskip

\textbf{Relative entropies and mutual informations.} For a state $\rho
\in\mathcal{D}(\mathcal{H})$ and a positive semi-definite operator $\sigma$,
the \textit{quantum R\'{e}nyi relative entropy} of order $\alpha$, where
$\alpha\in\lbrack0,1)\cup(1,+\infty)$ is defined as \cite{P86,TCR09}
\begin{equation}
D_{\alpha}(\rho\Vert\sigma)\equiv\frac{1}{\alpha-1}\log_{2}\operatorname{Tr}%
\{\rho^{\alpha}\sigma^{1-\alpha}\}~.
\end{equation}
If $\alpha>1$ and $\operatorname{supp}(\rho)\not \subseteq \operatorname{supp}%
(\sigma)$, it is set to $+\infty$. In the limit as $\alpha\rightarrow1$, the
above definition reduces to the \textit{quantum relative entropy} \cite{U62}
\begin{equation}
D(\rho\Vert\sigma)\equiv\operatorname{Tr}\{\rho\lbrack\log_{2}\rho-\log
_{2}\sigma]\}~,
\end{equation}
which is defined as above when $\operatorname{supp}(\rho)\subseteq
\operatorname{supp}(\sigma)$ and it is set to $+\infty$ otherwise. Using the
above definition, we can define the \textit{R\'{e}nyi mutual information} for
a bipartite state $\theta_{RB}$ as
\begin{equation}
I_{\alpha}(R;B)_{\theta}\equiv D_{\alpha}(\theta_{RB}\Vert\theta_{R}%
\otimes\theta_{B})~. \label{eq:Renyi-MI}%
\end{equation}
The $\varepsilon$\textit{-hypothesis testing relative entropy} for a state
$\rho$ and a positive semi-definite $\sigma$ is defined for $\varepsilon
\in\lbrack0,1]$ as \cite{BD10,wang2012one}
\begin{equation}
D_{H}^{\varepsilon}(\rho\Vert\sigma)\equiv-\log_{2}\inf_{\Lambda
}\{\operatorname{Tr}\{\Lambda\sigma\}:0\leq\Lambda\leq I\wedge
\operatorname{Tr}\{\Lambda\rho\}\geq1-\varepsilon\}~.
\end{equation}
Similarly, we define the $\varepsilon$\textit{-hypothesis testing mutual
information} of a bipartite state $\theta_{RB}$ as
\begin{equation}
I_{H}^{\varepsilon}(R;B)_{\theta}\equiv D_{H}^{\varepsilon}(\theta_{RB}%
\Vert\theta_{R}\otimes\theta_{B})~. \label{eq:hyp-test-MI}%
\end{equation}
Note that there are alternative definitions \cite{matthews2014finite} of
$\varepsilon$-hypothesis testing mutual information that involve an
optimization with respect to the marginal state on system $B$:%
\begin{equation}
\widetilde{I}_{H}^{\varepsilon}(R;B)_{\theta}\equiv\min_{\sigma_{B}}%
D_{H}^{\varepsilon}(\theta_{RB}\Vert\theta_{R}\otimes\sigma_{B}).
\end{equation}

The following proposition establishes an inequality relating hypothesis
testing relative entropy and the quantum R\'{e}nyi relative entropy, and it
represents a counterpart to \cite[Lemma~5]{CMW14}. We give its proof in the
appendix, where we also mention how \cite[Lemma~5]{CMW14} and the following
proposition lead to a transparent proof of the quantum Stein's lemma
\cite{hiai1991proper,ogawa2000strong}.

\begin{proposition}
\label{prop:ineq-hypo-renyi}Let $\rho$ be a density operator and let $\sigma$
be a positive semi-definite operator. Let $\alpha\in(0,1)$ and $\varepsilon
\in(0,1)$. Then the following inequality holds%
\begin{equation}
D_{H}^{\varepsilon}(\rho\Vert\sigma)\geq\frac{\alpha}{\alpha-1}\log
_{2}\!\left(  \frac{1}{\varepsilon}\right)  +D_{\alpha}(\rho\Vert\sigma).
\end{equation}

\end{proposition}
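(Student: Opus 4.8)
The plan is to start from the optimal test for the hypothesis testing relative entropy and lower-bound its "type-II" error in terms of the Rényi quantity. Let $\Lambda$ be an operator achieving (or nearly achieving) the infimum in the definition of $D_H^\varepsilon(\rho\Vert\sigma)$, so that $0\leq\Lambda\leq I$, $\operatorname{Tr}\{\Lambda\rho\}\geq 1-\varepsilon$, and $\operatorname{Tr}\{\Lambda\sigma\}=2^{-D_H^\varepsilon(\rho\Vert\sigma)}$. The goal is to show $\operatorname{Tr}\{\Lambda\sigma\}\leq \varepsilon^{\alpha/(1-\alpha)}\,2^{-D_\alpha(\rho\Vert\sigma)}$, since taking $-\log_2$ of this inequality yields exactly the claimed bound. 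So everything reduces to proving
\begin{equation}
\operatorname{Tr}\{\Lambda\sigma\}\leq \varepsilon^{\frac{\alpha}{1-\alpha}}\left(\operatorname{Tr}\{\rho^{\alpha}\sigma^{1-\alpha}\}\right)^{\frac{1}{1-\alpha}}.
\end{equation}

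The key step is a Hölder-type argument. I would write $\operatorname{Tr}\{\Lambda\sigma\}$ and split the operator $\sigma=\sigma^{\alpha}\sigma^{1-\alpha}$ in a way compatible with inserting $\rho$. A clean route is: first use $\operatorname{Tr}\{\Lambda\rho\}\geq 1-\varepsilon$ rephrased as $\operatorname{Tr}\{(I-\Lambda)\rho\}\leq\varepsilon$, but more directly I expect to use the test $\Lambda$ itself together with the operator Hölder inequality $\operatorname{Tr}\{XY\}\leq \Vert X\Vert_p\Vert Y\Vert_q$ with $1/p+1/q=1$. Setting $p=1/\alpha$ and $q=1/(1-\alpha)$ (valid since $\alpha\in(0,1)$), one writes $\operatorname{Tr}\{\Lambda\sigma\}$ against $\rho^{\alpha}$ and $\sigma^{1-\alpha}$ appropriately. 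Concretely, consider $\operatorname{Tr}\{\Lambda\sigma\} = \operatorname{Tr}\{\Lambda\sigma\Lambda\} + (\text{cross terms})$ — actually the cleanest version is to bound $1-\varepsilon \leq \operatorname{Tr}\{\Lambda\rho\} \leq \Vert \Lambda^{1/2}\rho^{\alpha}\Vert_{1/\alpha}\cdot(\text{something})$ is the wrong direction; instead I would lower-bound $\operatorname{Tr}\{\rho^\alpha \sigma^{1-\alpha}\}$ from below. Write $\rho^\alpha\sigma^{1-\alpha}$, insert $\Lambda+(I-\Lambda)=I$, and estimate: the term with $\Lambda$ is controlled by Hölder against $\operatorname{Tr}\{\Lambda\rho\}^{\alpha}$ and $\operatorname{Tr}\{\Lambda\sigma\}^{1-\alpha}$, while the term with $I-\Lambda$ is controlled using $\operatorname{Tr}\{(I-\Lambda)\rho\}\leq\varepsilon$. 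Taking $\varepsilon$ small relative to the dominant term, or more carefully using a two-term inequality $a+b\geq a$, yields $\operatorname{Tr}\{\rho^\alpha\sigma^{1-\alpha}\}\geq (1-\varepsilon)^{\alpha}\operatorname{Tr}\{\Lambda\sigma\}^{1-\alpha}$ — but this has the wrong constant. The correct constant $\varepsilon^{\alpha}$ (not $(1-\varepsilon)^\alpha$) suggests instead applying Hölder with the roles arranged so that it is $\varepsilon$, not $1-\varepsilon$, that appears; this is the delicate bookkeeping point.

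Here is the version I believe works: apply the operator Hölder inequality in the form $\operatorname{Tr}\{A^{1/\alpha}\}^{\alpha}\operatorname{Tr}\{B^{1/(1-\alpha)}\}^{1-\alpha}\geq |\operatorname{Tr}\{AB\}|$ to a judicious choice of $A,B$ built from $\rho$, $\sigma$, and $\Lambda$. Taking $A = (\Lambda^{1/2}\rho\Lambda^{1/2})^{\alpha}$-flavored and $B$ built from $\sigma^{1-\alpha}$ and $\Lambda$, and using that $\Lambda\leq I$ so $\Lambda^{1/\alpha}\leq\Lambda$, one should arrive at
\begin{equation}
\operatorname{Tr}\{\Lambda\rho\}^{\alpha}\operatorname{Tr}\{\Lambda\sigma\}^{1-\alpha}\geq \operatorname{Tr}\{\Lambda\rho^{\alpha}\sigma^{1-\alpha}\}
\end{equation}
or a close variant, but since $\alpha\in(0,1)$ the map $t\mapsto t^\alpha$ is operator concave, which runs the wrong way; so I expect the actual argument uses $1/\alpha>1$ and $1/(1-\alpha)>1$ as the Hölder conjugates of a single application to $\rho^{\alpha}$ and $\sigma^{1-\alpha}$ directly, giving $\operatorname{Tr}\{\rho^{\alpha}\sigma^{1-\alpha}\}\leq \operatorname{Tr}\{\rho\}^{\alpha}\operatorname{Tr}\{\sigma\}^{1-\alpha}$-type statements, then localizing via $\Lambda$. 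The main obstacle is precisely getting the factor $\varepsilon^{\alpha/(1-\alpha)}$ with the right sign and exponent: one must be careful whether to bound $\operatorname{Tr}\{\Lambda\rho\}$ or $\operatorname{Tr}\{(I-\Lambda)\rho\}$, and the $\frac{\alpha}{\alpha-1}\log_2(1/\varepsilon)$ term (which is \emph{negative} since $\alpha<1$) indicates the inequality loses a controlled amount; I would pin this down by testing the scalar (commuting) case first, where $\rho,\sigma,\Lambda$ are simultaneously diagonal, to fix the exponents, and then lift to the operator case using Araki–Lieb–Thirring or the Hölder inequality for Schatten norms. Once the scalar case confirms $\operatorname{Tr}\{\Lambda\sigma\}\leq \varepsilon^{-\alpha/(1-\alpha)}\cdots$ — wait, the sign — it confirms the precise constant, and the operator lift is then routine.

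Finally, I would note two special cases as sanity checks: as $\alpha\to 1^{-}$, the prefactor $\frac{\alpha}{\alpha-1}\log_2(1/\varepsilon)\to-\infty$ while $D_\alpha\to D$, so the bound becomes vacuous, consistent with the fact that $D_H^\varepsilon$ is only asymptotically close to $D$ after regularization (quantum Stein's lemma); and combined with the reverse-direction bound of \cite[Lemma~5]{CMW14}, optimizing over $\alpha$ recovers the first-order term $D(\rho\Vert\sigma)$ with the correct second-order behavior, which is the application advertised in the text. The write-up will therefore be: (i) reduce to the norm inequality on $\operatorname{Tr}\{\Lambda\sigma\}$; (ii) prove it via Hölder with conjugate exponents $1/\alpha$ and $1/(1-\alpha)$, handling the $1-\varepsilon$ detection constraint to extract the $\varepsilon$-dependent factor; (iii) take logarithms.
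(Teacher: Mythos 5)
There is a genuine gap here, on two counts. First, your target inequality has the wrong sign in the exponent of $\varepsilon$: since $\alpha\in(0,1)$, the term $\frac{\alpha}{\alpha-1}\log_{2}(1/\varepsilon)$ is \emph{negative}, so what must be exhibited is a feasible test $T$ with $\operatorname{Tr}\{T\sigma\}\leq\varepsilon^{-\alpha/(1-\alpha)}\left[\operatorname{Tr}\{\rho^{\alpha}\sigma^{1-\alpha}\}\right]^{1/(1-\alpha)}$, a bound whose $\varepsilon$-prefactor exceeds one. The inequality you display, with prefactor $\varepsilon^{\alpha/(1-\alpha)}\leq1$, is strictly stronger and is false: take $\rho=\sigma$ pure, where the optimal test has $\operatorname{Tr}\{\Lambda\sigma\}=1-\varepsilon$ while your right-hand side is $\varepsilon^{\alpha/(1-\alpha)}$. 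Second, and more fundamentally, your plan is to take the (near-)optimal test $\Lambda$ and upper-bound $\operatorname{Tr}\{\Lambda\sigma\}$ using only the feasibility constraint $\operatorname{Tr}\{\Lambda\rho\}\geq1-\varepsilon$ together with a H\"older manipulation. No argument of that shape can succeed, because it would apply equally to every feasible test, and $\Lambda=I$ is feasible with $\operatorname{Tr}\{\Lambda\sigma\}=\operatorname{Tr}\{\sigma\}$, which exceeds the desired bound whenever $\rho$ and $\sigma$ are far apart. To exploit optimality one must \emph{construct} a specific good test and bound both of its errors; your sketch never does this, and you note yourself that each H\"older configuration you try ``runs the wrong way'' or produces the wrong constant. (The intermediate inequality $\operatorname{Tr}\{\Lambda\rho\}^{\alpha}\operatorname{Tr}\{\Lambda\sigma\}^{1-\alpha}\geq\operatorname{Tr}\{\Lambda\rho^{\alpha}\sigma^{1-\alpha}\}$ would, even if true, give a \emph{lower} bound on $\operatorname{Tr}\{\Lambda\sigma\}$, again the wrong direction.)

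The missing ingredient is Lemma~\ref{lemma:spectral-ineq} (the Audenaert et al.~inequality underlying the quantum Chernoff bound): for positive semi-definite $A,B$, the minimal total error $\inf_{0\leq T\leq I}\operatorname{Tr}\{(I-T)A\}+\operatorname{Tr}\{TB\}$ is at most $\operatorname{Tr}\{A^{\alpha}B^{1-\alpha}\}$. This is a genuinely nontrivial matrix inequality, not a consequence of trace H\"older. The paper applies it to $A=p\rho$ and $B=(1-p)\sigma$ for a free prior $p\in(0,1)$, obtaining a test $T^{\ast}$ with $p\operatorname{Tr}\{(I-T^{\ast})\rho\}+(1-p)\operatorname{Tr}\{T^{\ast}\sigma\}\leq p^{\alpha}(1-p)^{1-\alpha}\operatorname{Tr}\{\rho^{\alpha}\sigma^{1-\alpha}\}$. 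Dropping one nonnegative summand at a time bounds each error separately; the prior $p$ is then tuned so that the resulting bound on $\operatorname{Tr}\{(I-T^{\ast})\rho\}$ equals exactly $\varepsilon$ (making $T^{\ast}$ feasible for $D_{H}^{\varepsilon}$), and substituting that same $p$ into the bound on $\operatorname{Tr}\{T^{\ast}\sigma\}$ yields $\varepsilon^{\alpha/(\alpha-1)}\left[\operatorname{Tr}\{\rho^{\alpha}\sigma^{1-\alpha}\}\right]^{1/(1-\alpha)}$. The adjustable prior $p$ is precisely the bookkeeping device you were searching for to extract the $\varepsilon$-dependent factor with the correct sign and exponent.
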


The hypothesis testing relative entropy has the following second-order
expansion \cite{tomamichel2013hierarchy,li2014second,datta2016second}:
\begin{equation}
D_{H}^{\varepsilon}(\rho^{\otimes n}\Vert\sigma^{\otimes n})=nD(\rho
\Vert\sigma)+\sqrt{nV(\rho\Vert\sigma)}\Phi^{-1}(\varepsilon)+O(\log n)~,
\label{eq:second-order-expansion}%
\end{equation}
where $V(\rho\Vert\sigma)=\operatorname{Tr}\{\rho\lbrack\log_{2}\rho-\log
_{2}\sigma]^{2}\}-\left[  D(\rho\Vert\sigma)\right]  ^{2}$ is the
\textit{quantum relative entropy variance} and the function $\Phi(a)$ is the
cumulative distribution function for a standard normal distribution:
\begin{equation}
\Phi(a)\equiv\frac{1}{\sqrt{2\pi}}\int_{-\infty}^{a}dx\ e^{-x^{2}/2}~.
\end{equation}

Let $\sigma$ be a quantum state now. The hypothesis testing relative entropy
is relevant for asymmetric hypothesis testing, in which the goal is to
minimize the error probability $\operatorname{Tr}\{\Lambda\sigma\}$ subject to
a constraint on the other kind of error probability $\operatorname{Tr}%
\{(I-\Lambda)\rho\}\leq\varepsilon$. We could also consider symmetric
hypothesis testing, in which the goal is to minimize both kinds of error
probabilities simultaneously. It is useful for us here to take the approach of
\cite{AM14}\ and consider general positive semi-definite operators $A$ and $B$
rather than states $\rho$ and $\sigma$. As in \cite{AM14}, we can define the
error \textquotedblleft probability\textquotedblright\ in identifying the
operators $A$ and $B$ as follows:%
\begin{align}
P_{e}^{\ast}(A,B)  &  \equiv\inf_{T\ :\ 0\leq T\leq I}\operatorname{Tr}%
\{(I-T)A\}+\operatorname{Tr}\{TB\}\label{eq:err-prob-sym-A-B-1}\\
&  =\operatorname{Tr}\{A\}-\sup_{T\ :\ 0\leq T\leq I}\operatorname{Tr}%
\{T(A-B)\}\\
&  =\operatorname{Tr}\{A\}-\operatorname{Tr}\{\{A-B\geq0\}(A-B)\}\\
&  =\frac{1}{2}\left(  \operatorname{Tr}\{A+B\}-\left\Vert A-B\right\Vert
_{1}\right)  . \label{eq:err-prob-sym-A-B-4}%
\end{align}
The following lemma allows for bounding $P_{e}^{\ast}(A,B)$ from above, and we
use it to establish bounds on the error exponent for entanglement-assisted communication.

\begin{lemma}
[\cite{ACMBMAV07}]\label{lemma:spectral-ineq} Let $A$ and $B$ be positive
semi-definite operators and $s\in\left[  0,1\right]  $. Then the following
inequality holds%
\begin{equation}
P_{e}^{\ast}(A,B)=\frac{1}{2}\left(  \operatorname{Tr}\{A+B\}-\left\Vert
A-B\right\Vert _{1}\right)  \leq\operatorname{Tr}\{A^{s}B^{1-s}\}.
\end{equation}

\end{lemma}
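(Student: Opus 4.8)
The plan is to combine the chain of identities \eqref{eq:err-prob-sym-A-B-1}--\eqref{eq:err-prob-sym-A-B-4}, which is already in place, with the operator inequality of \cite{ACMBMAV07}. From \eqref{eq:err-prob-sym-A-B-1}--\eqref{eq:err-prob-sym-A-B-4} we already have $P_{e}^{\ast}(A,B)=\operatorname{Tr}\{A\}-\operatorname{Tr}\{(A-B)_{+}\}=\tfrac{1}{2}\!\left(\operatorname{Tr}\{A+B\}-\Vert A-B\Vert_{1}\right)$, where $(A-B)_{+}$ denotes the positive part of the Hermitian operator $A-B$ and we used $\{A-B\geq0\}(A-B)=(A-B)_{+}$. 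Hence the entire statement reduces to the single operator inequality
\begin{equation}
\operatorname{Tr}\{A-(A-B)_{+}\}\leq\operatorname{Tr}\{A^{s}B^{1-s}\}. \label{eq:plan-goal}
\end{equation}

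The conceptual heart of \eqref{eq:plan-goal} is the elementary scalar inequality $\min\{a,b\}=a-(a-b)_{+}\leq a^{s}b^{1-s}$ for $a,b\geq0$ and $s\in[0,1]$: if $a\leq b$ then $a=a^{s}a^{1-s}\leq a^{s}b^{1-s}$ since $x\mapsto x^{1-s}$ is nondecreasing, while if $a\geq b$ then $b=b^{s}b^{1-s}\leq a^{s}b^{1-s}$ since $x\mapsto x^{s}$ is nondecreasing. Diagonalizing $A-B=\sum_{j}\mu_{j}|j\rangle\langle j|$ one even gets the suggestive identity $P_{e}^{\ast}(A,B)=\sum_{j}\min\{\langle j|A|j\rangle,\langle j|B|j\rangle\}$. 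The main obstacle is that one cannot just bound each summand by $\langle j|A|j\rangle^{s}\langle j|B|j\rangle^{1-s}$ and resum: by H\"{o}lder this resummation is controlled only by $(\operatorname{Tr}\{A\})^{s}(\operatorname{Tr}\{B\})^{1-s}$, which is weaker than (indeed, exceeds) $\operatorname{Tr}\{A^{s}B^{1-s}\}$ in general, because $\{|j\rangle\}$ is not adapted to $A^{s}B^{1-s}$. In other words, the noncommutativity of $A$ and $B$ must be dealt with without collapsing to a single eigenbasis, and this is precisely the nontrivial content of \cite{ACMBMAV07}.

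For \eqref{eq:plan-goal} I would therefore invoke \cite{ACMBMAV07}; the argument there uses the integral representation $\lambda^{1-s}=\frac{\sin(\pi s)}{\pi}\int_{0}^{\infty}d\mu\,\mu^{-s}\lambda(\lambda+\mu)^{-1}$, valid for $\lambda\geq0$ and $s\in(0,1)$ (the cases $s\in\{0,1\}$ of \eqref{eq:plan-goal} being immediate). Together with the resolvent identity $A(A+\mu)^{-1}-B(B+\mu)^{-1}=\mu(A+\mu)^{-1}(A-B)(B+\mu)^{-1}$, this rewrites $\operatorname{Tr}\{A\}-\operatorname{Tr}\{A^{s}B^{1-s}\}=\frac{\sin(\pi s)}{\pi}\int_{0}^{\infty}d\mu\,\mu^{1-s}\operatorname{Tr}\{K_{\mu}(A-B)\}$ with $K_{\mu}=(A+\mu)^{-1}A^{s}(B+\mu)^{-1}$, and the task becomes to bound this by $\operatorname{Tr}\{(A-B)_{+}\}=\sup_{0\leq T\leq I}\operatorname{Tr}\{T(A-B)\}$; controlling the $\mu$-integral of the Hermitized kernel $\tfrac{1}{2}(K_{\mu}+K_{\mu}^{\dagger})$ uniformly in $\mu$ is the crux. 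An alternative, more classical route passes to the Nussbaum--Szko{\l}a distributions $P_{ij}=a_{i}|\langle e_{i}|f_{j}\rangle|^{2}$ and $Q_{ij}=b_{j}|\langle e_{i}|f_{j}\rangle|^{2}$ built from $A=\sum_{i}a_{i}|e_{i}\rangle\langle e_{i}|$ and $B=\sum_{j}b_{j}|f_{j}\rangle\langle f_{j}|$: one checks the identity $\sum_{ij}P_{ij}^{s}Q_{ij}^{1-s}=\operatorname{Tr}\{A^{s}B^{1-s}\}$, so that \eqref{eq:plan-goal} follows from the scalar inequality applied entrywise provided one establishes $\Vert A-B\Vert_{1}\geq\sum_{ij}|P_{ij}-Q_{ij}|$, which is then the only nontrivial input. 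Either way, substituting \eqref{eq:plan-goal} into \eqref{eq:err-prob-sym-A-B-4} yields the stated bound.
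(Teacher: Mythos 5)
Your proposal matches the paper's treatment: the paper offers no proof of this lemma beyond the identities \eqref{eq:err-prob-sym-A-B-1}--\eqref{eq:err-prob-sym-A-B-4} and the citation of \cite{ACMBMAV07} (with pointers to Ozawa's simpler proof in \cite[Proposition~1.1]{JOPS12} and \cite[Theorem~1]{A14}), and you likewise reduce the statement to the matrix inequality $\operatorname{Tr}\{A\}-\operatorname{Tr}\{(A-B)_{+}\}\leq\operatorname{Tr}\{A^{s}B^{1-s}\}$ and defer that to the same reference, with a correct diagnosis of why the naive eigenbasis/H\"{o}lder argument fails. The only caution is that your alternative Nussbaum--Szko{\l}a route rests on the comparison $\Vert A-B\Vert_{1}\geq\sum_{ij}|P_{ij}-Q_{ij}|$, which you rightly flag as the sole nontrivial input there but which is not established in your text and is not the route taken by \cite{ACMBMAV07}.
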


\noindent The above lemma was first proved in \cite{ACMBMAV07}, but the reader
should note that a much simpler proof due to N.~Ozawa is presented in
\cite[Proposition~1.1]{JOPS12} and \cite[Theorem~1]{A14}.

\bigskip\textbf{Weak typicality.} We will use results from the theory of weak
typicality in some of our achievability proofs (see, e.g.,
\cite{wilde2011classical,W17} for a review). Consider a density operator
$\rho_{A}$ with spectral decomposition: $\rho_{A}=\sum_{x}p_{X}(x)|x\rangle
\langle x|_{A}$. The weakly \textit{$\delta$-typical subspace} $T_{A^{n}%
}^{\rho,\delta}$ is defined as the span of all unit vectors $|x^{n}%
\rangle\equiv|x_{1}\rangle\otimes|x_{2}\rangle\otimes\cdots\otimes
|x_{n}\rangle$ such that the sample entropy $\overline{H}(x^{n})$ of their
classical label is close to the true entropy $H(X)=H(A)_{\rho}$ of the
distribution $p_{X}(x)$:%
\begin{equation}
T_{A^{n}}^{\rho,\delta}\equiv\text{span}\left\{  \left\vert x^{n}\right\rangle
:\left\vert \overline{H}(x^{n})-H(X)\right\vert \leq\delta\right\}  ,
\end{equation}
where $\overline{H}(x^{n})\equiv-\frac{1}{n}\log_{2}(p_{X^{n}}(x^{n}))$ and
$H(X)\equiv-\sum_{x}p_{X}(x)\log_{2}p_{X}(x)$. The $\delta$\textit{-typical
projector} $\Pi_{\rho,\delta}^{n}$\ onto the typical subspace of $\rho$ is
defined as%
\begin{equation}
\Pi_{A^{n}}^{\rho,\delta}\equiv\sum_{x^{n}\in T_{\delta}^{X^{n}}}|x^{n}%
\rangle\langle x^{n}|,
\end{equation}
where we have used the symbol $T_{X^{n}}^{\delta}$ to refer to the set of
$\delta$-typical sequences:%
\begin{equation}
T_{X^{n}}^{\delta}\equiv\left\{  x^{n}:\left\vert \overline{H}(x^{n}%
)-H(X)\right\vert \leq\delta\right\}  .
\end{equation}
Three important properties of the typical projector are as follows:%
\begin{align}
\operatorname{Tr}\{\Pi_{A^{n}}^{\rho,\delta}\rho^{\otimes n}\}  &
\geq1-\varepsilon,\label{ieq:typical-unit-prob}\\
\operatorname{Tr}\{\Pi_{A^{n}}^{\rho,\delta}\}  &  \leq2^{n\left[  H\left(
A\right)  +\delta\right]  },\\
2^{-n[H(A)+\delta]}\Pi_{A^{n}}^{\rho,\delta}\leq\Pi_{A^{n}}^{\rho,\delta}%
\rho^{\otimes n}\Pi_{A^{n}}^{\rho,\delta}  &  \leq2^{-n\left[  H(A)-\delta
\right]  }\Pi_{A^{n}}^{\rho,\delta}, \label{ieq:typical-equal-partition}%
\end{align}
where the first property holds for arbitrary $\varepsilon\in(0,1)$, $\delta
>0$, and sufficiently large $n$. We will also need the following `projector
trick' inequality \cite{giovannetti2012achieving,fawzi2012classical}:%
\begin{equation}
\Pi_{A^{n}}^{\rho,\delta}\leq2^{n[H(A)+\delta]}\rho_{A}^{\otimes n}~,
\label{ieq:project-trick}%
\end{equation}
which follows as a consequence of the leftmost inequality in
\eqref{ieq:typical-equal-partition} and the fact that $\Pi_{A^{n}}%
^{\rho,\delta}\rho^{\otimes n}\Pi_{A^{n}}^{\rho,\delta}=\sqrt{\rho^{\otimes
n}}\Pi_{A^{n}}^{\rho,\delta}\sqrt{\rho^{\otimes n}}\leq\rho^{\otimes n}$. A
final inequality we make use of is the following one%
\begin{equation}
\Pi_{A^{n}}^{\rho,\delta}\leq2^{-n\left[  H(A)-\delta\right]  /2}\left[
\rho^{\otimes n}\right]  ^{-1/2}, \label{eq:sqrt-root-proj-trick}%
\end{equation}
which is a consequence of sandwiching the rightmost inequality of
\eqref{ieq:typical-equal-partition} by $\left[  \rho^{\otimes n}\right]
^{-1/2}$, applying $\Pi_{A^{n}}^{\rho,\delta}\leq I^{\otimes n}$, and operator
monotonicity of the square root function.

\bigskip\textbf{Hayashi-Nagaoka operator inequality.} We repeatedly use the
following operator inequality from \cite{hayashi2003generalCapacity} when
analyzing error probability:

\begin{lemma}
\label{lemma:Hayashi-Nagaoka} Given operators $S$ and $T$ such that $0\leq
S\leq I$ and $T\geq0$, the following inequality holds for all $c>0$:
\begin{equation}
I-(S+T)^{-1/2}S(S+T)^{-1/2}\leq c_{\operatorname{I}}%
\,(I-S)+c_{\operatorname{II}}\,T~,
\end{equation}
where $c_{\operatorname{I}}\equiv1+c$ and $c_{\operatorname{II}}%
\equiv2+c+c^{-1}$.
\end{lemma}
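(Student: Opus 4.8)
The plan is to route everything through the elementary quadratic operator inequality
$$(A+B)^{\dagger}(A+B)\le(1+c)A^{\dagger}A+(1+c^{-1})B^{\dagger}B,$$
valid for arbitrary operators $A,B$ and any $c>0$; it follows at once by expanding $0\le(\sqrt{c}\,A-\tfrac{1}{\sqrt{c}}B)^{\dagger}(\sqrt{c}\,A-\tfrac{1}{\sqrt{c}}B)$. The shape of the target constants, $c_{\operatorname{I}}=1+c$ and $c_{\operatorname{II}}=2+c+c^{-1}=(1+c)(1+c^{-1})$, already signals that a single application of this inequality (with parameter $c$) should produce the result, the extra factor $(1+c)$ in $c_{\operatorname{II}}$ arising from a second, crude, bound on one of the two terms.

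First I would dispose of domain issues. Write $M\equiv S+T$, let $\Pi$ be the projection onto $\operatorname{supp}(M)$, and read $M^{-1/2}$ as the inverse square root on that support. On $\ker(M)$ both $S$ and $T$ vanish (they are positive and sum to $M$), so there the asserted inequality reads $I\le(1+c)I$ and holds trivially; hence I may work entirely on $\operatorname{supp}(M)$. There the identity $M^{-1/2}MM^{-1/2}=\Pi$ gives the convenient rewriting
$$I-M^{-1/2}SM^{-1/2}=M^{-1/2}(M-S)M^{-1/2}=M^{-1/2}TM^{-1/2}.$$

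The heart of the argument is to feed a well-chosen splitting of $T^{1/2}M^{-1/2}$ into the quadratic inequality. Since $T\le M$ forces $\operatorname{supp}(T)\subseteq\operatorname{supp}(M)$, so that $T^{1/2}\Pi=T^{1/2}$, I would set $A\equiv T^{1/2}(M^{-1/2}-\Pi)$ and $B\equiv T^{1/2}$, whence $A+B=T^{1/2}M^{-1/2}$ and therefore $(A+B)^{\dagger}(A+B)=M^{-1/2}TM^{-1/2}$, exactly the operator to be bounded. With $B^{\dagger}B=T$, the quadratic inequality yields
$$M^{-1/2}TM^{-1/2}\le(1+c)\,(M^{-1/2}-\Pi)\,T\,(M^{-1/2}-\Pi)+(1+c^{-1})\,T.$$
To finish I would bound the first term using $T\le M$ and conjugating by the self-adjoint operator $M^{-1/2}-\Pi$, giving $(M^{-1/2}-\Pi)T(M^{-1/2}-\Pi)\le(M^{-1/2}-\Pi)M(M^{-1/2}-\Pi)=(M^{1/2}-\Pi)^{2}$.

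It then remains to verify the purely $M$- and $S$-dependent inequality $(M^{1/2}-\Pi)^{2}\le(I-S)+T$, which, after substituting $T=M-S$ and expanding $(M^{1/2}-\Pi)^{2}=M-2M^{1/2}+\Pi$, collapses to the single inequality $S\le M^{1/2}$. This last step is where I expect the only real subtlety to sit, since $S$ and $M$ need not commute; the clean way through is the two-step monotonicity chain $S\le S^{1/2}\le M^{1/2}$, where $S\le S^{1/2}$ uses $0\le S\le I$ and $S^{1/2}\le M^{1/2}$ uses $S\le M$ together with operator monotonicity of the square root. Granting $S\le M^{1/2}$, the displayed bounds combine to
$$M^{-1/2}TM^{-1/2}\le(1+c)\big[(I-S)+T\big]+(1+c^{-1})T=(1+c)(I-S)+(2+c+c^{-1})T,$$
which is the assertion. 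Thus the main obstacle is not the quadratic inequality itself but choosing the splitting of $T^{1/2}M^{-1/2}$ that routes the ``large'' part into the $(I-S)$ slot and the ``small'' part into the $T$ slot, and recognizing that the non-commutativity of $S$ and $M$ is absorbed entirely by the monotonicity fact $S\le M^{1/2}$; this is also what forces the constant $2+c+c^{-1}$ here, larger than what would suffice in the commuting case.
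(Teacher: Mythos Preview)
The paper does not give a proof of this lemma; it is simply cited from Hayashi--Nagaoka. Your argument is correct and is essentially the original Hayashi--Nagaoka proof: rewrite the left-hand side as $M^{-1/2}TM^{-1/2}$ with $M=S+T$, decompose $T^{1/2}M^{-1/2}=T^{1/2}(M^{-1/2}-\Pi)+T^{1/2}$ and apply the elementary quadratic inequality, and then close the estimate via $S\le S^{1/2}\le M^{1/2}$, the last step using operator monotonicity of the square root. One cosmetic remark: your phrase ``collapses to the single inequality $S\le M^{1/2}$'' is exactly right once you are working on $\operatorname{supp}(M)$ (where $\Pi$ acts as the identity), which you already announced; a reader skimming might benefit from a reminder at that point that $\Pi=I$ on the relevant subspace, so that the reduction $M-2M^{1/2}+\Pi\le I+M-2S\Leftrightarrow S\le M^{1/2}$ is literal rather than merely sufficient.
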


\section{Entanglement-assisted point-to-point classical communication}

\label{sec:EA-point-to-point}

We begin by defining the information-processing task of point-to-point
entanglement-assisted classical communication, originally considered in
\cite{bennett2002entanglement,Hol01a}\ and studied further in
\cite{datta2013one-EA,matthews2014finite,DTW14}. Before communication begins,
the sender Alice and the receiver Bob share entanglement in whatever form they
wish, and we denote their shared state as $\Psi_{RA}$. Suppose Alice would
like to communicate some classical message $m$ from a set $\mathcal{M}%
\equiv\{1,\ldots,M\}$ over a quantum channel $\mathcal{N}_{A^{\prime
}\rightarrow B}$, where $M\in\mathbb{N}$ denotes the cardinality of the set
$\mathcal{M}$. An $(M,\varepsilon)$ entanglement-assisted classical code, for
$\varepsilon\in\lbrack0,1]$, consists of a collection $\{\mathcal{E}%
_{A\rightarrow A^{\prime}}^{m}\}_{m}$ of encoders and a decoding POVM
$\{\Lambda_{RB}^{m}\}_{m}$, such that the average error probability is bounded
from above by~$\varepsilon$:
\begin{equation}
\frac{1}{M}\sum_{m=1}^{M}\operatorname{Tr}\{(I-\Lambda_{RB}^{m})\mathcal{N}%
_{A^{\prime}\rightarrow B}(\mathcal{E}_{A\rightarrow A^{\prime}}^{m}(\Psi
_{RA}))\}\leq\varepsilon,
\end{equation}

For fixed $\varepsilon$, let $M^{\ast}(\mathcal{N},\varepsilon)$ denote the
largest $M$ for which there exists an $(M,\varepsilon)$ entanglement-assisted
classical communication code for the channel $\mathcal{N}$. Then we define the
$\varepsilon$-one-shot entanglement-assisted classical capacity as $\log
_{2}M^{\ast}(\mathcal{N},\varepsilon)$. We note that one could alternatively
consider maximum error probability when defining this capacity. The
entanglement-assisted capacity of a channel $\mathcal{N}$ is then defined as
\begin{equation}
C_{\operatorname{EA}}(\mathcal{N})\equiv\lim_{\varepsilon\rightarrow0}%
\liminf_{n\rightarrow\infty}\frac{1}{n}\log_{2}M^{\ast}(\mathcal{N}^{\otimes
n},\varepsilon).
\end{equation}

For fixed $M$, the one-shot entanglement-assisted error exponent $-\log
_{2}\varepsilon^{\ast}(\mathcal{N},M)$ is such that $\varepsilon^{\ast}$ is
equal to the smallest~$\varepsilon$ for which there exists an $(M,\varepsilon
)$ entanglement-assisted classical communication code. In the i.i.d.~setting,
the entanglement-assisted error exponent is defined for a fixed rate $R\geq0$
as
\begin{equation}
E_{\operatorname{EA}}(\mathcal{N},R)\equiv\limsup_{n\rightarrow\infty}\left[
-\frac{1}{n}\log_{2}\varepsilon^{\ast}(\mathcal{N}^{\otimes n},2^{nR})\right]
. \label{eq:EA-err-exp-def}%
\end{equation}

\subsection{One-shot position-based coding}

\label{subsec:EA-classical-one-shot}

We now review the method of position-based coding \cite{anshu2017one}, as
applied to point-to-point entanglement-assisted communication, and follow the
review by showing how the approach leads to a lower bound on the error
exponent for entanglement-assisted communication, a lower bound for one-shot
entanglement-assisted capacity, and a simple proof for a lower bound on the
second-order coding rate for entanglement-assisted communication. We note that
a lower bound for one-shot entanglement-assisted capacity using position-based
coding was already given in \cite{anshu2017one}, but the lower bound given
here leads to a lower bound on the entanglement-assisted second-order coding
rate that is optimal for covariant channels \cite{DTW14}.

The position-based entanglement-assisted communication protocol consists of
two steps, encoding and decoding, and we follow that discussion with an error
analysis of its performance.

\textbf{Encoding:} Before communication begins, Alice and Bob share the
following state:%
\begin{equation}
\theta_{RA}^{\otimes M}\equiv\theta_{R_{1}A_{1}}\otimes\cdots\otimes
\theta_{R_{M}A_{M}},
\end{equation}
where Alice possesses the $A$ systems and Bob has the $R$ systems.
To send message $m$, Alice simply sends the $m$th $A$ system through the
channel. So this leads to the following state for Bob:%
\begin{equation}
\rho_{R^{M}B}^{m}\equiv\theta_{R}^{\otimes m-1}\otimes\mathcal{N}%
_{A\rightarrow B}(\theta_{R_{m}A_{m}})\otimes\theta_{R}^{\otimes M-m}.
\end{equation}

\textbf{Decoding:} Define the following measurement:%
\begin{equation}
\Gamma_{R^{M}B}^{m}\equiv I_{R^{m-1}}\otimes T_{R_{m}B_{m}}\otimes I_{R^{M-m}%
},
\end{equation}
where $T_{R_{m}B_{m}}=T_{RB}$ is a \textquotedblleft test\textquotedblright%
\ or measurement operator satisfying $0\leq T_{RB}\leq I_{RB}$, which we will
specify later. For now, just think of it as corresponding to a measurement
that should distinguish well between $\mathcal{N}_{A\rightarrow B}(\theta
_{RA})$ and $\theta_{R}\otimes\mathcal{N}_{A\rightarrow B}(\theta_{A})$. This
is important for the following reason:\ If message $m$ is transmitted and the
test is performed on the $m$th $R$ system and the channel output system $B$,
then the probability of it accepting is%
\begin{equation}
\operatorname{Tr}\{\Gamma_{R^{M}B}^{m}\rho_{R^{M}B}^{m}\}=\operatorname{Tr}%
\{T_{RB}\mathcal{N}_{A\rightarrow B}(\theta_{RA})\}.
\end{equation}
If however the test is performed on the $m^{\prime}$th $R$ system and $B$,
where $m^{\prime}\neq m$, then the probability of it accepting is%
\begin{equation}
\operatorname{Tr}\{\Gamma_{R^{M}B}^{m^{\prime}}\rho_{R^{M}B}^{m}%
\}=\operatorname{Tr}\{T_{RB}[\theta_{R}\otimes\mathcal{N}_{A\rightarrow
B}(\theta_{A})]\}.
\end{equation}
We use these facts in the forthcoming error analysis.

We use a square-root measurement to form a decoding POVM for Bob as follows:%
\begin{equation}
\Lambda_{R^{M}B}^{m}\equiv\left(  \sum_{m^{\prime}=1}^{M}\Gamma_{R^{M}%
B}^{m^{\prime}}\right)  ^{-1/2}\Gamma_{R^{M}B}^{m}\left(  \sum_{m^{\prime}%
=1}^{M}\Gamma_{R^{M}B}^{m^{\prime}}\right)  ^{-1/2}.
\end{equation}
This is called the position-based decoder.

\textbf{Error analysis:} The error probability under this coding scheme is the
same for each message~$m$ (see, e.g., \cite{anshu2017one,wilde2017position})
and is as follows:%
\begin{equation}
p_{e}(m)\equiv\operatorname{Tr}\{(I_{R^{M}B}-\Lambda_{R^{M}B}^{m})\rho
_{R^{M}B}^{m}\}.
\end{equation}
Applying Lemma \ref{lemma:Hayashi-Nagaoka} with $S=\Gamma_{R^{M}B}^{m}$ and
$T=\sum_{m^{\prime}\neq m}\Gamma_{R^{M}B}^{m^{\prime}}$, we find that this
error probability can be bounded from above as%
\begin{align}
&  \operatorname{Tr}\{(I_{R^{M}B}-\Lambda_{R^{M}B}^{m})\rho_{R^{M}B}%
^{m}\}\nonumber\\
&  \leq c_{\operatorname{I}}\operatorname{Tr}\{(I_{R^{M}B}-\Gamma_{R^{M}B}%
^{m})\rho_{R^{M}B}^{m}\}+c_{\operatorname{II}}\sum_{m^{\prime}\neq
m}\operatorname{Tr}\{\Gamma_{R^{M}B}^{m^{\prime}}\rho_{R^{M}B}^{m}\}\\
&  =c_{\operatorname{I}}\operatorname{Tr}\{(I_{RB}-T_{RB})\mathcal{N}%
_{A\rightarrow B}(\theta_{RA})\}+c_{\operatorname{II}}\sum_{m^{\prime}\neq
m}\operatorname{Tr}\{T_{RB}\left[  \theta_{R}\otimes\mathcal{N}_{A\rightarrow
B}(\theta_{A})\right]  \}\\
&  =c_{\operatorname{I}}\operatorname{Tr}\{(I_{RB}-T_{RB})\mathcal{N}%
_{A\rightarrow B}(\theta_{RA})\}+c_{\operatorname{II}}(M-1)\operatorname{Tr}%
\{T_{RB}\left[  \theta_{R}\otimes\mathcal{N}_{A\rightarrow B}(\theta
_{A})\right]  \}.
\end{align}
The same bound applies for both the average and the maximum error probability,
due to the symmetric construction of the code.

Our bound for a test operator $T_{RB}$\ is thus as follows and highlights, as
in \cite{anshu2017one}, an important connection between quantum hypothesis
testing (i.e., the ability to distinguish the states $\mathcal{N}%
_{A\rightarrow B}(\theta_{RA})$ and $\theta_{R}\otimes\mathcal{N}%
_{A\rightarrow B}(\theta_{A})$) and entanglement-assisted communication:%
\begin{equation}
p_{e}(m)\leq c_{\operatorname{I}}\operatorname{Tr}\{(I_{RB}-T_{RB}%
)\mathcal{N}_{A\rightarrow B}(\theta_{RA})\}+c_{\operatorname{II}%
}(M-1)\operatorname{Tr}\{T_{RB}\left[  \theta_{R}\otimes\mathcal{N}%
_{A\rightarrow B}(\theta_{A})\right]  \}.
\label{eq:EA-classical-on-shot-bound}%
\end{equation}

\subsection{Lower bounds on one-shot and i.i.d.~entanglement-assisted error
exponents}

\label{subsec:EA-err-exp} We first prove a lower bound on the one-shot error
exponent, and then a lower bound for the entanglement-assisted error exponent
in the i.i.d.~case directly follows.

\begin{theorem}
For a quantum channel $\mathcal{N}_{A\rightarrow B}$, a lower bound on the
one-shot entanglement-assisted error exponent for fixed message size $M$ is as
follows:%
\begin{equation}
-\log_{2}\varepsilon^{\ast}(\mathcal{N},M)\geq\sup_{s\in\lbrack0,1]}\left(
1-s\right)  \left[  \sup_{\theta_{RA}}I_{s}(R;B)_{\mathcal{N}(\theta)}%
-\log_{2}M\right]  -2~, \label{eq:one-shot-err-exp-bound}%
\end{equation}
where $\theta_{RA}$ is a pure bipartite entangled state and $I_{s}%
(R;B)_{\mathcal{N}(\theta)}$ is the R\'{e}nyi mutual information defined in \eqref{eq:Renyi-MI}.
\end{theorem}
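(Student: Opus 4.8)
The plan is to feed the one-shot error bound \eqref{eq:EA-classical-on-shot-bound} into the Audenaert et al.\ inequality (Lemma~\ref{lemma:spectral-ineq}) after using the freedom to choose the test operator $T_{RB}$ optimally. Write $A \equiv \N_{A\rightarrow B}(\theta_{RA})$ and $B \equiv \theta_R\otimes\N_{A\rightarrow B}(\theta_A)$. The key observation is that the right-hand side of \eqref{eq:EA-classical-on-shot-bound} is of the form $\tr\{(I_{RB}-T_{RB})\widetilde A\}+\tr\{T_{RB}\widetilde B\}$ with $\widetilde A\equiv c_{\operatorname{I}}A$ and $\widetilde B\equiv c_{\operatorname{II}}(M-1)B$ both positive semi-definite. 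Since $T_{RB}$ is a design choice, we minimize over all $0\le T_{RB}\le I_{RB}$; by \eqref{eq:err-prob-sym-A-B-1} this minimum equals $P_e^\ast(\widetilde A,\widetilde B)$, and Lemma~\ref{lemma:spectral-ineq} bounds it above by $\tr\{\widetilde A^{\,s}\widetilde B^{\,1-s}\} = c_{\operatorname{I}}^{\,s}\,(c_{\operatorname{II}}(M-1))^{1-s}\,\tr\{A^sB^{1-s}\}$ for every $s\in[0,1]$.

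Next I would identify the surviving trace with a R\'{e}nyi quantity. Because $\N_{A\rightarrow B}$ acts only on the $A$ system, the marginals of $\N_{A\rightarrow B}(\theta_{RA})$ are precisely $\theta_R$ and $\N_{A\rightarrow B}(\theta_A)$, so $B$ is exactly the product of these marginals; hence $\tr\{A^sB^{1-s}\} = 2^{(s-1)D_s(A\Vert B)} = 2^{-(1-s)I_s(R;B)_{\N(\theta)}}$ by the definitions of $D_s$ and of the R\'{e}nyi mutual information \eqref{eq:Renyi-MI}. For the constants, choosing the Hayashi--Nagaoka parameter $c=1$ gives $c_{\operatorname{I}}=2$, $c_{\operatorname{II}}=4$, whence $c_{\operatorname{I}}^{\,s}c_{\operatorname{II}}^{\,1-s}=2^{2-s}\le 4$ uniformly on $s\in[0,1]$; combining this with $M-1\le M=2^{\log_2 M}$ yields
\[
p_e(m)\ \le\ 2^{\,2-(1-s)\left[\,I_s(R;B)_{\N(\theta)}-\log_2 M\,\right]}
\]
for every $s\in[0,1]$ and every pure bipartite state $\theta_{RA}$.

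To conclude, note that for each fixed $s$ and $\theta_{RA}$ the construction above produces an $(M,\varepsilon)$ entanglement-assisted code whose error is no larger than the displayed quantity, so $\varepsilon^\ast(\N,M)$ is bounded above by it as well. Taking the infimum over $s\in[0,1]$ and (since $1-s\ge 0$) the supremum over pure $\theta_{RA}$, and finally applying $-\log_2$, gives exactly \eqref{eq:one-shot-err-exp-bound}. I do not expect a genuine obstacle here: the only delicate points are the bookkeeping of $c_{\operatorname{I}},c_{\operatorname{II}}$ --- the additive loss of $2$ in the exponent is simply the cost of the crude estimate $c_{\operatorname{I}}^{\,s}c_{\operatorname{II}}^{\,1-s}\le 4$, and a sharper choice of $c$ could in principle be optimized $s$-by-$s$ --- together with checking that maximizing over $\theta_{RA}$ indeed minimizes the bound before one optimizes over $s$. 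The i.i.d.\ lower bound on $E_{\operatorname{EA}}(\N,R)$ in \eqref{eq:EA-err-exp-def} then follows by applying the theorem to $\N^{\otimes n}$, restricting the input to tensor-power states $\theta^{\otimes n}$ and using additivity of $I_s$ on such states, dividing by $n$, and taking $\limsup_{n\to\infty}$ so that the $2/n$ term vanishes.
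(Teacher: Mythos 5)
Your proposal is correct and follows essentially the same route as the paper: start from the position-based bound \eqref{eq:EA-classical-on-shot-bound}, optimize the test operator so the expression becomes $P_e^{\ast}$ of the two (scaled) positive semi-definite operators, apply Lemma~\ref{lemma:spectral-ineq}, identify the surviving trace with $2^{-(1-s)I_s(R;B)_{\mathcal{N}(\theta)}}$, and absorb the Hayashi--Nagaoka constants into the additive $-2$. The only cosmetic difference is that you carry $c_{\operatorname{I}}^{\,s}c_{\operatorname{II}}^{\,1-s}$ through the trace inequality and bound it by $4$ afterwards, whereas the paper bounds both coefficients by $4$ up front; the resulting constant and all other steps coincide.
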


\begin{proof}
Following the position-based encoding and decoding procedure described in
Section~\ref{subsec:EA-classical-one-shot} and setting $c=1$ in
\eqref{eq:EA-classical-on-shot-bound}, the error probability for each message
can be bounded as%
\begin{align}
p_{e}(m)  &  =\operatorname{Tr}\{(I_{R^{M}B}-\Lambda_{R^{M}B}^{m})\rho
_{R^{M}B}^{m}\}\\
&  \leq4\left[  \operatorname{Tr}\{(I_{RB}-T_{RB})\mathcal{N}_{A\rightarrow
B}(\theta_{RA})\}+M\operatorname{Tr}\{T_{RB}\left[  \theta_{R}\otimes
\mathcal{N}_{A\rightarrow B}(\theta_{A})\right]  \}\right] \\
&  =4\left[  \operatorname{Tr}\{\mathcal{N}_{A\rightarrow B}(\theta
_{RA})\}-\operatorname{Tr}\{T_{RB}\left(  \mathcal{N}_{A\rightarrow B}%
(\theta_{RA})-M\left[  \theta_{R}\otimes\mathcal{N}_{A\rightarrow B}%
(\theta_{A})\right]  \right)  \}\right]  . \label{eq:EA-exp-first-bound}%
\end{align}
To minimize the term in the last line above, it is well known that one should
take the test operator $T_{RB}$ as follows:%
\begin{equation}
T_{RB}=\left\{  \mathcal{N}_{A\rightarrow B}(\theta_{RA})-M\left[  \theta
_{R}\otimes\mathcal{N}_{A\rightarrow B}(\theta_{A})\right]  \geq0\right\}  .
\end{equation}
The statement for quantum states is due to \cite{H69,japan1973holevo,Hel76}%
\ and the extension (relevant for us) to the more general case of positive
semi-definite operators appears in \cite[Eq.~(22)]{AM14} (see also
\eqref{eq:err-prob-sym-A-B-1}--\eqref{eq:err-prob-sym-A-B-4}). This then leads
to the following upper bound on the error probability:%
\begin{align}
p_{e}(m)  &  \leq4\left[  \operatorname{Tr}\{\mathcal{N}_{A\rightarrow
B}(\theta_{RA})\}-\operatorname{Tr}\{T_{RB}\left(  \mathcal{N}_{A\rightarrow
B}(\theta_{RA})-M\left[  \theta_{R}\otimes\mathcal{N}_{A\rightarrow B}%
(\theta_{A})\right]  \right)  \}\right] \\
&  =2\Big[\operatorname{Tr}\{\mathcal{N}_{A\rightarrow B}(\theta
_{RA})+M\left[  \theta_{R}\otimes\mathcal{N}_{A\rightarrow B}(\theta
_{A})\right]  \}\nonumber\\
&  \qquad\qquad-\left\Vert \mathcal{N}_{A\rightarrow B}(\theta_{RA})-M\left[
\theta_{R}\otimes\mathcal{N}_{A\rightarrow B}(\theta_{A})\right]  \right\Vert
_{1}\Big]\\
&  \leq4\operatorname{Tr}\{\mathcal{N}_{A\rightarrow B}(\theta_{RA}%
)^{s}\left(  M\theta_{R}\otimes\mathcal{N}_{A\rightarrow B}(\theta
_{A})\right)  ^{1-s}\}\\
&  =4M^{1-s}\operatorname{Tr}\{\mathcal{N}_{A\rightarrow B}(\theta_{RA}%
)^{s}\left[  \theta_{R}\otimes\mathcal{N}_{A\rightarrow B}(\theta_{A})\right]
^{1-s}\}\\
&  =4\left(  2^{-(1-s)\left[  I_{s}(R;B)_{\mathcal{N}(\theta)}-\log
_{2}M\right]  }\right)  .
\end{align}
The first equality is standard, using the relation of the positive part of an
operator to its modulus (see, e.g., \cite[Eq.~(23)]{AM14}). The second
inequality is a consequence of \cite[Theorem~1]{ACMBMAV07}, recalled as
Lemma~\ref{lemma:spectral-ineq}\ in Section~\ref{sec:preliminary}, and holds
for all $s\in\lbrack0,1]$ (see \cite[Proposition~1.1]{JOPS12} and
\cite[Theorem~1]{A14} for a simpler proof of \cite[Theorem~1]{ACMBMAV07} due
to N.~Ozawa). The last equality follows from the definition of R\'{e}nyi
mutual information in \eqref{eq:Renyi-MI}. Since this bound holds for an
arbitrary $s\in\lbrack0,1]$ and an arbitrary input state $\theta_{RA}$, we can
conclude the following bound:
\begin{equation}
p_{e}(m)\leq4\left(  2^{-\sup_{s\in\lbrack0,1]}(1-s)\left[  \sup_{\theta_{RA}%
}I_{s}(R;B)_{\mathcal{N}(\theta)}-\log_{2}M\right]  }\right)  ~.
\label{eq:final-1-shot-err-exp}%
\end{equation}
Note that it suffices to take $\theta_{RA}$ as a pure bipartite state, due to
the ability to purify a mixed $\theta_{RA}$\ and the data-processing
inequality for $I_{s}(R;B)_{\mathcal{N}(\theta)}$, holding for all
$s\in\lbrack0,1]$ \cite{P86}. Finally taking a negative binary logarithm of
both sides of \eqref{eq:final-1-shot-err-exp} gives \eqref{eq:one-shot-err-exp-bound}.
\end{proof}

We remark that the above proof bears some similarities to that given in
\cite{hayashi2007error} (one can find a related result in the later work
\cite[Lemma~3.1]{MD09}). One of the results in \cite{hayashi2007error}
concerns a bound on the error exponent for classical communication over
classical-input quantum-output channels. The fundamental tool used in the
proof of this result in \cite{hayashi2007error} is
Lemma~\ref{lemma:spectral-ineq}, attributed above to \cite{ACMBMAV07}. Our
proof above clearly follows the same approach.\bigskip

Applying the above result in the i.i.d.~case for a memoryless channel
$\mathcal{N}_{A\rightarrow B}^{\otimes n}$ leads to the following:

\begin{proposition}
For a quantum channel $\mathcal{N}_{A\rightarrow B}$, a lower bound on the
entanglement-assisted error exponent $E_{\operatorname{EA}}(\mathcal{N},R)$
(defined in \eqref{eq:EA-err-exp-def}) for fixed rate $R\geq0$ is as follows:%
\begin{equation}
E_{\operatorname{EA}}(\mathcal{N},R)\geq\sup_{s\in\lbrack0,1]}\left(
1-s\right)  \left[  \sup_{\theta_{RA}}I_{s}(R;B)_{\mathcal{N}(\theta
)}-R\right]  ~, \label{eq:EA-err-exp-lower}%
\end{equation}
where $\theta_{RA}$ is a pure bipartite entangled state and $I_{s}%
(R;B)_{\mathcal{N}(\theta)}$ is the R\'{e}nyi mutual information defined in \eqref{eq:Renyi-MI}.
\end{proposition}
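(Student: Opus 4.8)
The plan is to apply the one-shot error exponent bound \eqref{eq:one-shot-err-exp-bound} to the memoryless channel $\mathcal{N}_{A\rightarrow B}^{\otimes n}$ with message size $M = 2^{nR}$, and then take the asymptotic limit. First I would substitute into \eqref{eq:one-shot-err-exp-bound} to obtain
\begin{equation}
-\log_{2}\varepsilon^{\ast}(\mathcal{N}^{\otimes n},2^{nR})\geq\sup_{s\in\lbrack0,1]}\left(1-s\right)\left[\sup_{\theta_{R^{n}A^{n}}}I_{s}(R^{n};B^{n})_{\mathcal{N}^{\otimes n}(\theta)}-nR\right]-2~.
\end{equation}
The key observation is that by restricting the optimization over $\theta_{R^{n}A^{n}}$ to tensor-power inputs $\theta_{RA}^{\otimes n}$, and using additivity of the R\'{e}nyi mutual information under tensor products (i.e.\ $I_{s}(R^{n};B^{n})_{(\mathcal{N}(\theta))^{\otimes n}} = n\,I_{s}(R;B)_{\mathcal{N}(\theta)}$, which follows from the multiplicativity of $\operatorname{Tr}\{\rho^{s}\sigma^{1-s}\}$ under tensor products), the right-hand side is lower bounded by $n\sup_{s}(1-s)[\sup_{\theta_{RA}}I_{s}(R;B)_{\mathcal{N}(\theta)}-R]-2$.

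Next I would divide by $n$, multiply by $-1$ with appropriate care on the direction of inequalities, and pass to the limit. Concretely, the chain gives
\begin{equation}
-\frac{1}{n}\log_{2}\varepsilon^{\ast}(\mathcal{N}^{\otimes n},2^{nR})\geq\sup_{s\in\lbrack0,1]}\left(1-s\right)\left[\sup_{\theta_{RA}}I_{s}(R;B)_{\mathcal{N}(\theta)}-R\right]-\frac{2}{n}~,
\end{equation}
and then taking $\limsup_{n\rightarrow\infty}$ of the left-hand side (which is precisely $E_{\operatorname{EA}}(\mathcal{N},R)$ by the definition \eqref{eq:EA-err-exp-def}) while noting that the $-2/n$ term vanishes yields \eqref{eq:EA-err-exp-lower}. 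The restriction to pure bipartite $\theta_{RA}$ is justified exactly as in the one-shot theorem: purify any mixed state and invoke the data-processing inequality for $I_{s}$.

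The main obstacle — and it is a mild one — is ensuring the interchange of the tensor-power restriction with the suprema is done correctly: one is lower bounding a supremum over all $\theta_{R^{n}A^{n}}$ by its value on product inputs, which is valid, but one must be careful that the $\sup_{s}$ sits outside and is shared across all $n$, so that the single-letter quantity $\sup_{s}(1-s)[\sup_{\theta}I_{s}-R]$ genuinely lower bounds $\frac{1}{n}$ times the $n$-letter expression for every $n$. Since the additive term $-2$ is independent of $n$, it contributes nothing in the limit, and no regularization is needed for this direction. One might also remark that the resulting bound is in single-letter form, in contrast to capacity-type statements, which is a pleasant feature of the error-exponent setting.
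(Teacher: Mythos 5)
Your proposal is correct and matches the paper's own argument essentially verbatim: plug $\mathcal{N}^{\otimes n}$ and $M=2^{nR}$ into the one-shot bound, restrict the $n$-letter input optimization to tensor powers, invoke additivity of $I_{s}$ to single-letterize, and let the $-2/n$ term vanish in the limit. The paper additionally notes an alternative route via the explicit error-probability bound \eqref{eq:final-1-shot-err-exp}, but that is only a cosmetic variation of the same idea.
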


\begin{proof}
A proof follows by plugging in the memoryless channel $\mathcal{N}%
_{A\rightarrow B}^{\otimes n}$ into \eqref{eq:one-shot-err-exp-bound}, setting
the number of messages to be $M=2^{nR}$, and considering that
\begin{align}
\sup_{\theta_{R^{n}A^{n}}^{(n)}}I_{s}(R^{n};B^{n})_{\mathcal{N}^{\otimes
n}(\theta^{(n)})}  &  \geq\sup_{\theta_{RA}^{\otimes n}}I_{s}(R^{n}%
;B^{n})_{\left[  \mathcal{N}(\theta)\right]  ^{\otimes n}}\\
&  =n\sup_{\theta_{RA}}I_{s}(R;B)_{\mathcal{N}(\theta)},
\label{eq:additive-Renyi}%
\end{align}
leading to the following bound:%
\begin{equation}
-\frac{1}{n}\log_{2}\varepsilon^{\ast}(\mathcal{N}^{\otimes n},2^{nR})\geq
\sup_{s\in\lbrack0,1]}\left(  1-s\right)  \left[  \sup_{\theta_{RA}}%
I_{s}(R;B)_{\mathcal{N}(\theta)}-R\right]  -\frac{2}{n}.
\end{equation}
The equality in \eqref{eq:additive-Renyi} follows from the additivity of the
R\'{e}nyi mutual information for tensor-power states. Taking the large $n$
limit then gives \eqref{eq:EA-err-exp-lower}. Alternatively, plugging the
memoryless channel $\mathcal{N}_{A\rightarrow B}^{\otimes n}$ in to
\eqref{eq:final-1-shot-err-exp}, we find that the bound on the error
probability becomes%
\begin{equation}
p_{e}(m)\leq4\left(  2^{-(1-s)n\left[  I_{s}(R;B)_{\mathcal{N}(\theta
)}-R\right]  }\right)  ,
\end{equation}
holding for all $s\in\lbrack0,1]$ and states $\theta_{RA}$. After taking a
negative logarithm, normalizing by $n$, and taking the limit as $n\rightarrow
\infty$, we arrive at \eqref{eq:EA-err-exp-lower}.
\end{proof}

\subsection{Lower bounds on one-shot entanglement-assisted capacity and
entanglement-assisted second-order coding rate}

By using position-based coding, here we establish a lower bound on the
one-shot entanglement-assisted capacity. Note that a similar lower bound was
established in \cite{anshu2017one}, but the theorem below allows for an
additional parameter $\eta\in(0,\varepsilon)$, which is helpful for giving a
lower bound on the entanglement-assisted second-order coding rate.

\begin{theorem}
Given a quantum channel $\mathcal{N}_{A\rightarrow B}$ and fixed
$\varepsilon\in(0,1)$, the $\varepsilon$-one-shot entanglement-assisted
capacity of $\mathcal{N}_{A\rightarrow B}$ is bounded as
\begin{equation}
\log_{2}M^{\ast}(\mathcal{N},\varepsilon)\geq\max_{\theta_{RA}}I_{H}%
^{\varepsilon-\eta}(R;B)_{\mathcal{N}(\theta)}-\log_{2}(4\varepsilon/\eta
^{2}), \label{eq:one-shot-ea-lower}%
\end{equation}
where $\eta\in(0,\varepsilon)$ and the hypothesis testing mutual information
is defined in \eqref{eq:hyp-test-MI}.
\end{theorem}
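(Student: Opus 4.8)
The plan is to start from the one-shot error bound \eqref{eq:EA-classical-on-shot-bound} derived from position-based coding, and then choose the test operator $T_{RB}$ to be the optimal measurement achieving the hypothesis testing relative entropy $D_H^{\varepsilon-\eta}(\mathcal{N}_{A\to B}(\theta_{RA})\Vert\theta_R\otimes\mathcal{N}_{A\to B}(\theta_A))$. Concretely, I would pick $T_{RB}$ so that $\operatorname{Tr}\{(I_{RB}-T_{RB})\mathcal{N}_{A\to B}(\theta_{RA})\}\leq\varepsilon-\eta$ and $\operatorname{Tr}\{T_{RB}[\theta_R\otimes\mathcal{N}_{A\to B}(\theta_A)]\}=2^{-I_H^{\varepsilon-\eta}(R;B)_{\mathcal{N}(\theta)}}$ by definition of $I_H^{\varepsilon-\eta}$ in \eqref{eq:hyp-test-MI}. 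Substituting into \eqref{eq:EA-classical-on-shot-bound} gives
\begin{equation}
p_e(m)\leq c_{\operatorname{I}}(\varepsilon-\eta)+c_{\operatorname{II}}(M-1)\,2^{-I_H^{\varepsilon-\eta}(R;B)_{\mathcal{N}(\theta)}}.
\end{equation}

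Next I would optimize the free Hayashi--Nagaoka constant $c>0$. Writing $c_{\operatorname{I}}=1+c$ and $c_{\operatorname{II}}=2+c+c^{-1}=(1+c)(1+c^{-1})$, I want the second term to be of order $\eta$ so that the total error is at most $\varepsilon$. Choosing $c$ small — roughly $c\approx\eta$ — makes $c_{\operatorname{I}}(\varepsilon-\eta)\leq(1+\eta)(\varepsilon-\eta)\leq\varepsilon-\eta+\eta\varepsilon$ (one must be slightly careful, but $c_{\operatorname{I}}(\varepsilon-\eta)\leq \varepsilon$ is easy to arrange, leaving a margin of roughly $\eta$ for the second term), while $c_{\operatorname{II}}\approx c^{-1}\approx\eta^{-1}$. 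Then demanding $c_{\operatorname{II}}(M-1)2^{-I_H^{\varepsilon-\eta}(R;B)}\leq\eta$ forces $M\lesssim \eta^2\,2^{I_H^{\varepsilon-\eta}(R;B)}/(\text{const})$, i.e. $\log_2 M\geq I_H^{\varepsilon-\eta}(R;B)_{\mathcal{N}(\theta)}-\log_2(4\varepsilon/\eta^2)$, after tracking constants. Finally, since the input state $\theta_{RA}$ is arbitrary, I maximize over it (it suffices to take $\theta_{RA}$ pure, by purifying and using data processing for $I_H^{\varepsilon-\eta}$), which yields \eqref{eq:one-shot-ea-lower}.

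The main obstacle — really just a bookkeeping nuisance rather than a conceptual one — is pinning down the constant $4\varepsilon/\eta^2$ exactly: one has to choose $c$ as an explicit function of $\varepsilon$ and $\eta$ (not merely $c\approx\eta$), verify that $c_{\operatorname{I}}(\varepsilon-\eta)$ plus the second term is genuinely $\leq\varepsilon$, and confirm the product $c_{\operatorname{II}}\cdot(M-1)$ lines up to give precisely the claimed logarithmic penalty. A clean choice is $c=\eta/(\varepsilon-\eta)$, so that $c_{\operatorname{I}}(\varepsilon-\eta)=(\varepsilon-\eta)+\eta=\varepsilon-(\varepsilon-\eta)\cdot 0$ — wait, that gives $\varepsilon$ exactly with no room; instead take $c$ slightly smaller so a margin remains, or absorb the margin by noting $M-1<M$. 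Then $c_{\operatorname{II}}=(1+c)(1+c^{-1})\leq 2\cdot(\varepsilon/\eta)\cdot(\text{something bounded})$ when $\varepsilon\le 1$, and the surviving error budget for the second term, combined with $c_{\operatorname{II}}\leq 2\varepsilon/\eta$ (roughly), gives $M\leq \eta^2/(4\varepsilon)\cdot 2^{I_H^{\varepsilon-\eta}}$. Taking $\log_2$ produces the stated bound. The rest — purification of $\theta_{RA}$ and invoking data processing for the hypothesis testing relative entropy under the channel $\theta_{RA}\mapsto\operatorname{Tr}_{R'}$ for the purifying system $R'$ — is routine.
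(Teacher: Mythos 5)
Your proposal is correct and follows essentially the same route as the paper: take $T_{RB}$ to be the optimal test achieving $I_{H}^{\varepsilon-\eta}(R;B)_{\mathcal{N}(\theta)}$, substitute into \eqref{eq:EA-classical-on-shot-bound}, and tune the Hayashi--Nagaoka constant. The bookkeeping you flag resolves with the paper's choice $c=\eta/(2\varepsilon-\eta)$, which gives $c_{\operatorname{I}}(\varepsilon-\eta)=2\varepsilon(\varepsilon-\eta)/(2\varepsilon-\eta)$, a leftover error budget of $\varepsilon\eta/(2\varepsilon-\eta)$ for the second term, and $c_{\operatorname{II}}=4\varepsilon^{2}/(\eta(2\varepsilon-\eta))$, so that $M=(\eta^{2}/4\varepsilon)\,2^{I_{H}^{\varepsilon-\eta}(R;B)_{\mathcal{N}(\theta)}}$ exactly.
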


\begin{proof}
The idea is to use the same coding scheme described in
Section~\ref{subsec:EA-classical-one-shot} and take the test operator $T_{RB}$
in Bob's decoder to be $\Upsilon_{RB}^{\ast}$, where $\Upsilon_{RB}^{\ast}$ is
the optimal measurement operator for $I_{H}^{\varepsilon-\eta}%
(R;B)_{\mathcal{N}(\theta)}$, with $\eta\in(0,\varepsilon)$. Then, starting
from the upper bound on the error probability in
\eqref{eq:EA-classical-on-shot-bound}, the error analysis reduces to%
\begin{align}
&  \operatorname{Tr}\{(I_{R^{M}B}-\Lambda_{R^{M}B})\rho_{R^{M}B}%
^{m}\}\nonumber\\
&  \leq c_{\operatorname{I}}\operatorname{Tr}\{(I_{RB}-\Upsilon_{RB}^{\ast
})\left[  \mathcal{N}_{A\rightarrow B}(\theta_{RA})\right]
\}+c_{\operatorname{II}}M\operatorname{Tr}\{\Upsilon_{RB}^{\ast}\left[
\theta_{R}\otimes\mathcal{N}_{A\rightarrow B}(\theta_{A})\right]  \}\\
&  \leq c_{\operatorname{I}}\left(  \varepsilon-\eta\right)
+c_{\operatorname{II}}M2^{-I_{H}^{\varepsilon-\eta}(R;B)_{\mathcal{N}(\theta
)}}.
\end{align}
The second inequality follows from the definition of quantum hypothesis
testing relative entropy, which gives that%
\begin{align}
\operatorname{Tr}\{\Upsilon_{RB}^{\ast}\left[  \mathcal{N}_{A\rightarrow
B}(\theta_{RA})\right]  \}  &  \geq1-(\varepsilon-\eta),\\
\operatorname{Tr}\{\Upsilon_{RB}^{\ast}\left[  \theta_{R}\otimes
\mathcal{N}_{A\rightarrow B}(\theta_{A})\right]  \}  &  =2^{-I_{H}%
^{\varepsilon-\eta}(R;B)_{\mathcal{N}(\theta)}}.
\end{align}
To make the error $p_{e}(m)\leq\varepsilon$, we set $c=\eta/(2\varepsilon
-\eta)$ for $\eta\in(0,\varepsilon)$,\ and this leads to
\begin{equation}
\log_{2}M=I_{H}^{\varepsilon-\eta}(R;B)_{\mathcal{N}(\theta)}-\log
_{2}(4\varepsilon/\eta^{2}).
\end{equation}
The inequality in the theorem follows after maximizing $I_{H}^{\varepsilon
-\eta}(R;B)_{\mathcal{N}(\theta)}$ with respect to all input states
$\theta_{RA}$.
\end{proof}

\bigskip\textbf{Comparison to upper bound.} The authors of
\cite{matthews2014finite} established the following upper bound on one-shot
entanglement-assisted capacity:%
\begin{align}
\max_{\theta_{RA}} I_{H}^{\varepsilon}(R;B)_{\mathcal{N}(\theta)}  &  \geq
\max_{\theta_{RA}}\min_{\sigma_{B}}D_{H}^{\varepsilon}(\mathcal{N}%
_{A\rightarrow B}(\theta_{RA})\Vert\theta_{R}\otimes\sigma_{B})\\
&  \geq\log_{2}M^{\ast}(\mathcal{N},\varepsilon).
\end{align}
Thus, there is a sense in which the upper bound from \cite{matthews2014finite}
is close to the lower bound in \eqref{eq:one-shot-ea-lower}. In particular, we
could pick $\eta=\delta\varepsilon$ for any constant $\delta\in(0,1)$, and the
lower bound becomes $\max_{\theta_{RA}}I_{H}^{\varepsilon(1-\delta
)}(R;B)_{\mathcal{N}(\theta)}-\log_{2}(4/\varepsilon\delta^{2})$. Thus the
information term $\max_{\theta_{RA}}I_{H}^{\varepsilon(1-\delta)}%
(R;B)_{\mathcal{N}(\theta)}$ can become arbitrarily close to $I_{H}%
^{\varepsilon}(R;B)_{\mathcal{N}(\theta)}$ by picking $\delta$ smaller, but at
the cost of the term $-\log_{2}(4/\varepsilon\delta^{2})$ becoming more
negative with decreasing $\delta$.

\bigskip\textbf{Lower bound on second-order coding rate.} To get a lower bound
on the entanglement-assisted second-order coding rate for an i.i.d.~channel
$\mathcal{N}^{\otimes n}$, evaluate the formula $I_{H}^{\varepsilon-\eta
}(R;B)_{\mathcal{N}(\theta)}$ for an i.i.d.~state $\mathcal{N}(\theta
)^{\otimes n}$, pick $\eta=1/\sqrt{n}$ and $n$ large enough such that
$\varepsilon-\eta>0$, and use the second-order expansions for $D_{H}%
^{\varepsilon}$ in \eqref{eq:second-order-expansion}. We then recover one of
the main results of \cite{DTW14}:%
\begin{equation}
\log_{2}M^{\ast}(\mathcal{N}^{\otimes n},\varepsilon)\geq nI(R;B)_{\mathcal{N}%
(\theta)}+\sqrt{nV(R;B)_{\mathcal{N}(\theta)}}\Phi^{-1}(\varepsilon)+O(\log
n).
\end{equation}
Interestingly, this is achievable for maximal error in addition to average
error due to the above analysis. Additionally, it does seem that this approach
for arriving at a lower bound on the entanglement-assisted second-order coding
rate is much simpler than the previous approach developed in \cite{DTW14}.

\subsection{Alternative proof of an upper bound on one-shot
entanglement-assisted capacity}

In this section, we provide a proof for an upper bound on the one-shot
entanglement-assisted classical capacity, which is arguably simpler than the
approach taken in \cite{matthews2014finite}. A proof along these lines was
found recently and independently in \cite{AJW18}.

Before doing so, we recall the definition of generalized divergence
$\mathbf{D}(\rho\Vert\sigma)$\ \cite{SW12}\ of two states $\rho$ and $\sigma$
as any function from two density operators to the reals that is monotone under
the action of a quantum channel $\mathcal{N}$, in the sense that%
\begin{equation}
\mathbf{D}(\rho\Vert\sigma)\geq\mathbf{D}(\mathcal{N}(\rho)\Vert
\mathcal{N}(\sigma)).
\end{equation}
From this, we can define the generalized mutual information of a bipartite
state $\rho_{AB}$ as%
\begin{equation}
I_{\mathbf{D}}(A;B)\equiv\inf_{\sigma_{B}}\mathbf{D}(\rho_{AB}\Vert\rho
_{A}\otimes\sigma_{B}),
\end{equation}
where the optimization is with respect to a density operator $\sigma_{B}$
acting on system $B$. We then have the following lemma:

\begin{lemma}
\label{lem:marginal-prod} Let $\rho_{ABC}$ be such that the marginal state
$\rho_{AC}$ is product (i.e., $\rho_{AC}=\rho_{A}\otimes\rho_{C}$). Then%
\begin{equation}
I_{\mathbf{D}}(A;BC)_{\rho}\leq I_{\mathbf{D}}(AC;B)_{\rho}.
\end{equation}

\end{lemma}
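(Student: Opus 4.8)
The plan is to establish the inequality directly from the definition of the generalized mutual information, with no appeal to any deep property of $\mathbf{D}$ beyond the variational structure built into the definition of $I_{\mathbf{D}}$. Unpacking the definitions, one has
\begin{equation}
I_{\mathbf{D}}(A;BC)_{\rho}=\inf_{\sigma_{BC}}\mathbf{D}(\rho_{ABC}\Vert\rho_{A}\otimes\sigma_{BC}),
\end{equation}
where the infimum is over all density operators $\sigma_{BC}$ on $BC$, while, using the hypothesis $\rho_{AC}=\rho_{A}\otimes\rho_{C}$,
\begin{equation}
I_{\mathbf{D}}(AC;B)_{\rho}=\inf_{\sigma_{B}}\mathbf{D}(\rho_{ABC}\Vert\rho_{AC}\otimes\sigma_{B})=\inf_{\sigma_{B}}\mathbf{D}(\rho_{ABC}\Vert\rho_{A}\otimes\rho_{C}\otimes\sigma_{B}),
\end{equation}
where the infimum is over all density operators $\sigma_{B}$ on $B$.

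First I would observe that for every density operator $\sigma_{B}$ on $B$, the operator $\rho_{C}\otimes\sigma_{B}$ is itself a legitimate density operator on $BC$, since $\rho_{C}$ (a marginal of the state $\rho_{ABC}$) is a density operator. Hence $\rho_{C}\otimes\sigma_{B}$ is an admissible choice of $\sigma_{BC}$ in the infimum defining $I_{\mathbf{D}}(A;BC)_{\rho}$, which gives
\begin{equation}
I_{\mathbf{D}}(A;BC)_{\rho}\leq\mathbf{D}(\rho_{ABC}\Vert\rho_{A}\otimes\rho_{C}\otimes\sigma_{B})
\end{equation}
for every $\sigma_{B}$. Taking the infimum over $\sigma_{B}$ on the right-hand side then yields $I_{\mathbf{D}}(A;BC)_{\rho}\leq I_{\mathbf{D}}(AC;B)_{\rho}$, which is the claim.

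There is essentially no obstacle here; the only point to appreciate is that the product form of $\rho_{AC}$ is exactly what lets the reference operator $\rho_{AC}\otimes\sigma_{B}=\rho_{A}\otimes(\rho_{C}\otimes\sigma_{B})$ appearing on the $AC;B$ side be read as $\rho_{A}\otimes\sigma_{BC}$ for a valid $\sigma_{BC}$, i.e., as an admissible reference operator on the $A;BC$ side — so the optimization on the left runs over a larger feasible set. Note in particular that this argument uses only the definition of $I_{\mathbf{D}}$ and not the monotonicity (data-processing) property of $\mathbf{D}$ itself; monotonicity would instead be the relevant tool if one wanted the complementary-looking bound $I_{\mathbf{D}}(A;B)_{\rho}\leq I_{\mathbf{D}}(A;BC)_{\rho}$ without any product assumption.
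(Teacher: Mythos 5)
Your proof is correct and is essentially identical to the paper's: both arguments simply restrict the infimum over $\sigma_{BC}$ in $I_{\mathbf{D}}(A;BC)_{\rho}$ to reference operators of the product form $\rho_{C}\otimes\sigma_{B}$ and then use $\rho_{AC}=\rho_{A}\otimes\rho_{C}$ to identify the resulting expression with $I_{\mathbf{D}}(AC;B)_{\rho}$. Your closing remark that monotonicity of $\mathbf{D}$ is not needed here is also accurate.
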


\begin{proof}
This follows because%
\begin{align}
I_{\mathbf{D}}(A;BC)_{\rho}  &  =\inf_{\sigma_{BC}}\mathbf{D}(\rho_{ABC}%
\Vert\rho_{A}\otimes\sigma_{BC})\\
&  \leq\inf_{\sigma_{B}}\mathbf{D}(\rho_{ABC}\Vert\rho_{A}\otimes\sigma
_{B}\otimes\rho_{C})\\
&  =\inf_{\sigma_{B}}\mathbf{D}(\rho_{ABC}\Vert\rho_{AC}\otimes\sigma_{B})\\
&  =I_{\mathbf{D}}(AC;B)_{\rho}.
\end{align}
This concludes the proof.
\end{proof}

We now apply Lemma~\ref{lem:marginal-prod} in the context of
entanglement-assisted communication, to establish an alternate proof of the
following upper bound (again emphasizing that a proof along these lines was
found recently and independently in \cite{AJW18}):

\begin{proposition}
[\cite{matthews2014finite}]\label{prop-eac:one-shot-bound-meta}Let
$\mathcal{N}_{A\rightarrow B}$ be a quantum channel. For an $(\left\vert
M\right\vert ,\varepsilon)$ entanglement-assisted classical communication
protocol over the channel $\mathcal{N}_{A\rightarrow B}$, the following bound
holds%
\begin{equation}
\log_{2}\left\vert M\right\vert \leq I^{\varepsilon}(\mathcal{N}),
\end{equation}
where the $\varepsilon$-mutual information of $\mathcal{N}_{A\rightarrow B}$
is defined as%
\begin{equation}
I^{\varepsilon}(\mathcal{N})\equiv\max_{\psi_{RA}}\min_{\sigma_{B}}%
D_{H}^{\varepsilon}(\mathcal{N}_{A\rightarrow B}(\psi_{RA})\Vert\psi
_{R}\otimes\sigma_{B}),
\end{equation}
with $\psi_{RA}$ a pure bipartite state such that system $R$ isomorphic to
system $A$.
\end{proposition}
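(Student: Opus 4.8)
The plan is to prove this with a meta-converse argument whose main engine is Lemma~\ref{lem:marginal-prod}, applied with the generalized divergence $\mathbf{D}$ taken to be the $\varepsilon$-hypothesis testing relative entropy $D_{H}^{\varepsilon}$ (which obeys the data-processing inequality and hence qualifies as a generalized divergence). To match the notation of the statement, write the shared entangled state as $\Psi_{RA'}$, the encoders as $\{\mathcal{E}_{A'\rightarrow A}^{m}\}_{m}$, and the decoding POVM as $\{\Lambda_{RB}^{m}\}_{m}$, so that $A$ is the channel input system. Take the message $m$ to be uniformly distributed and introduce a classical register $\overline{M}$ holding a copy of it, forming the augmented output state
\begin{equation}
\sigma_{\overline{M}RB}\equiv\frac{1}{|M|}\sum_{m=1}^{|M|}|m\rangle\langle m|_{\overline{M}}\otimes\mathcal{N}_{A\rightarrow B}\!\left(\mathcal{E}_{A'\rightarrow A}^{m}(\Psi_{RA'})\right).
\end{equation}
The structural fact that makes the argument work is that the marginal $\sigma_{\overline{M}R}$ is a product state: each encoder $\mathcal{E}^{m}$ acts only on $A'$ and is trace preserving, so the $R$-marginal of $\mathcal{E}^{m}(\Psi_{RA'})$ equals $\Psi_{R}$ for every $m$, giving $\sigma_{\overline{M}R}=\pi_{\overline{M}}\otimes\Psi_{R}$ with $\pi$ maximally mixed. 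Hence Lemma~\ref{lem:marginal-prod} applies with $A\leftrightarrow\overline{M}$, $C\leftrightarrow R$, $B\leftrightarrow B$, yielding $I_{\mathbf{D}}(\overline{M};RB)_{\sigma}\leq I_{\mathbf{D}}(\overline{M}R;B)_{\sigma}$ for $\mathbf{D}=D_{H}^{\varepsilon}$.

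Next I would lower bound the left-hand side by $\log_{2}|M|$. For any state $\omega_{RB}$, consider the test $T_{\overline{M}RB}\equiv\sum_{m}|m\rangle\langle m|_{\overline{M}}\otimes\Lambda_{RB}^{m}$. The average-error condition of the code gives $\operatorname{Tr}\{T\sigma_{\overline{M}RB}\}=\frac{1}{|M|}\sum_{m}\operatorname{Tr}\{\Lambda_{RB}^{m}\mathcal{N}(\mathcal{E}^{m}(\Psi_{RA'}))\}\geq1-\varepsilon$, while $\sum_{m}\Lambda_{RB}^{m}=I_{RB}$ together with $\sigma_{\overline{M}}=\pi_{\overline{M}}$ gives $\operatorname{Tr}\{T(\sigma_{\overline{M}}\otimes\omega_{RB})\}=\frac{1}{|M|}\operatorname{Tr}\{\omega_{RB}\}=\frac{1}{|M|}$. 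Therefore $D_{H}^{\varepsilon}(\sigma_{\overline{M}RB}\Vert\sigma_{\overline{M}}\otimes\omega_{RB})\geq\log_{2}|M|$ for every $\omega_{RB}$, and taking the infimum over $\omega_{RB}$ gives $I_{\mathbf{D}}(\overline{M};RB)_{\sigma}\geq\log_{2}|M|$.

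For the right-hand side I would bound $I_{\mathbf{D}}(\overline{M}R;B)_{\sigma}\leq I^{\varepsilon}(\mathcal{N})$ by purifying and invoking data processing. Write $\sigma_{\overline{M}RB}=\mathcal{N}_{A\rightarrow B}(\xi_{\overline{M}RA})$ with $\xi_{\overline{M}RA}\equiv\frac{1}{|M|}\sum_{m}|m\rangle\langle m|_{\overline{M}}\otimes\mathcal{E}^{m}(\Psi_{RA'})$, and let $\psi_{\overline{M}RR''A}$ be a purification of $\xi_{\overline{M}RA}$; set $R'\equiv\overline{M}RR''$. Since $\psi_{R'A}$ is pure its Schmidt rank is at most $\dim A$, so without loss of generality $R'$ is isomorphic to $A$ (an isometric identification that leaves the quantity below unchanged). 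The partial trace $\operatorname{Tr}_{R''}$ is a channel that sends $\mathcal{N}(\psi_{R'A})\mapsto\sigma_{\overline{M}RB}$ and, for any state $\sigma_{B}$, sends $\psi_{R'}\otimes\sigma_{B}\mapsto\psi_{\overline{M}R}\otimes\sigma_{B}=\xi_{\overline{M}R}\otimes\sigma_{B}=\sigma_{\overline{M}R}\otimes\sigma_{B}$. Hence data processing for $D_{H}^{\varepsilon}$ gives $D_{H}^{\varepsilon}(\sigma_{\overline{M}RB}\Vert\sigma_{\overline{M}R}\otimes\sigma_{B})\leq D_{H}^{\varepsilon}(\mathcal{N}(\psi_{R'A})\Vert\psi_{R'}\otimes\sigma_{B})$ for every $\sigma_{B}$; taking the infimum over $\sigma_{B}$ and noting that $\psi_{R'A}$ is among the pure states over which $I^{\varepsilon}(\mathcal{N})$ is optimized yields $I_{\mathbf{D}}(\overline{M}R;B)_{\sigma}\leq I^{\varepsilon}(\mathcal{N})$. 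Chaining the three inequalities gives $\log_{2}|M|\leq I^{\varepsilon}(\mathcal{N})$.

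The step I expect to be the main obstacle — really the only place needing care — is the right-hand bound: one must purify the classical-quantum state $\xi_{\overline{M}RA}$, verify that tracing out the extra reference $R''$ carries the product second argument $\psi_{R'}\otimes\sigma_{B}$ to $\sigma_{\overline{M}R}\otimes\sigma_{B}$ (which is exactly where the identities $\psi_{\overline{M}R}=\xi_{\overline{M}R}=\pi_{\overline{M}}\otimes\Psi_{R}=\sigma_{\overline{M}R}$ enter), and check that the resulting pure state fits the $R\cong A$ template in the definition of $I^{\varepsilon}(\mathcal{N})$. Everything else — that $D_{H}^{\varepsilon}$ is a generalized divergence, the product-marginal observation that feeds Lemma~\ref{lem:marginal-prod}, and the decoding-test computation — is routine.
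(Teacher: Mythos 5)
Your proof is correct and follows essentially the same route as the paper: the engine in both is Lemma~\ref{lem:marginal-prod} applied to the product marginal on the message register and the receiver's share of the entanglement, followed by purification, data processing, and the Schmidt-rank argument to reduce to $R\cong A$. The only (inessential) difference is in the lower bound $\log_{2}|M|\leq I_{\mathbf{D}}(\overline{M};RB)_{\sigma}$, where you build the test $\sum_{m}|m\rangle\langle m|_{\overline{M}}\otimes\Lambda_{RB}^{m}$ directly from the decoding POVM, whereas the paper first applies the decoding channel to obtain a classical--classical state and then uses the comparator test together with data processing.
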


\begin{proof}
An entanglement-assisted classical communication protocol begins with the
sender preparing the maximally classically correlated state $\overline{\Phi
}_{MM^{\prime}}$, defined as%
\begin{equation}
\overline{\Phi}_{MM^{\prime}}\equiv\frac{1}{\left\vert M\right\vert }%
\sum_{m=1}^{\left\vert M\right\vert }|m\rangle\langle m|_{M}\otimes
|m\rangle\langle m|_{M^{\prime}}.
\end{equation}
Also, the sender and receiver share an arbitrary entangled state
$\Psi_{A^{\prime}B^{\prime}}$ before communication begins. The sender then
performs an encoding channel $\mathcal{E}_{M^{\prime}A^{\prime}\rightarrow A}$
on systems $M^{\prime}$ and $A^{\prime}$, and the resulting state is%
\begin{equation}
\mathcal{E}_{M^{\prime}A^{\prime}\rightarrow A}(\overline{\Phi}_{MM^{\prime}%
}\otimes\Psi_{A^{\prime}B^{\prime}})=\frac{1}{\left\vert M\right\vert }%
\sum_{m=1}^{\left\vert M\right\vert }|m\rangle\langle m|_{M}\otimes
\mathcal{E}_{M^{\prime}A^{\prime}\rightarrow A}(|m\rangle\langle
m|_{M^{\prime}}\otimes\Psi_{A^{\prime}B^{\prime}}).
\end{equation}
Defining the quantum channels $\mathcal{E}_{A^{\prime}\rightarrow A}^{m}$ by
$\mathcal{E}_{A^{\prime}\rightarrow A}^{m}(\tau_{A^{\prime}})\equiv
\mathcal{E}_{M^{\prime}A^{\prime}\rightarrow A}(|m\rangle\langle
m|_{M^{\prime}}\otimes\tau_{A^{\prime}})$, we can write the above state as%
\begin{equation}
\mathcal{E}_{M^{\prime}A^{\prime}\rightarrow A}(\overline{\Phi}_{MM^{\prime}%
}\otimes\Psi_{A^{\prime}B^{\prime}})=\frac{1}{\left\vert M\right\vert }%
\sum_{m=1}^{\left\vert M\right\vert }|m\rangle\langle m|_{M}\otimes
\mathcal{E}_{A^{\prime}\rightarrow A}^{m}(\Psi_{A^{\prime}B^{\prime}}).
\end{equation}
The sender transmits the $A$ system through the channel $\mathcal{N}%
_{A\rightarrow B}$, leading to%
\begin{equation}
(\mathcal{N}_{A\rightarrow B}\circ\mathcal{E}_{M^{\prime}A^{\prime}\rightarrow
A})(\overline{\Phi}_{MM^{\prime}}\otimes\Psi_{A^{\prime}B^{\prime}})=\frac
{1}{\left\vert M\right\vert }\sum_{m=1}^{\left\vert M\right\vert }%
|m\rangle\langle m|_{M}\otimes(\mathcal{N}_{A\rightarrow B}\circ
\mathcal{E}_{A^{\prime}\rightarrow A}^{m})(\Psi_{A^{\prime}B^{\prime}%
}).\label{eq-eac:state-after-channel}%
\end{equation}
The receiver's goal is then to determine which message $m$ was transmitted. To
do so, he performs a quantum-to-classical or measurement channel
$\mathcal{D}_{BB^{\prime}\rightarrow\hat{M}}$, defined by%
\begin{equation}
\mathcal{D}_{BB^{\prime}\rightarrow\hat{M}}(\tau_{BB^{\prime}}):=\sum
_{m}\operatorname{Tr}[\Lambda_{BB^{\prime}}^{m}\tau_{BB^{\prime}}%
]|m\rangle\langle m|_{\hat{M}},
\end{equation}
for a POVM\ $\{\Lambda_{B}^{m}\}_{m=1}^{\left\vert M\right\vert }$, and the
state becomes%
\begin{align}
\omega_{M\hat{M}} &  :=(\mathcal{D}_{BB^{\prime}\rightarrow\hat{M}}%
\circ\mathcal{N}_{A\rightarrow B}\circ\mathcal{E}_{M^{\prime}A^{\prime
}\rightarrow A})(\overline{\Phi}_{MM^{\prime}}\otimes\Psi_{A^{\prime}%
B^{\prime}})\label{eq-eac:end-EAC-state}\\
&  =\frac{1}{\left\vert M\right\vert }\sum_{m,m^{\prime}=1}^{\left\vert
M\right\vert }|m\rangle\langle m|_{M}\otimes\operatorname{Tr}[\Lambda
_{BB^{\prime}}^{m^{\prime}}\mathcal{N}_{A\rightarrow B}(\mathcal{E}%
_{A^{\prime}\rightarrow A}^{m}(\Psi_{A^{\prime}B^{\prime}}))]|m^{\prime
}\rangle\langle m^{\prime}|_{\hat{M}}.
\end{align}
The protocol is an $(\left\vert M\right\vert ,\varepsilon)$ protocol by
definition if the following condition holds%
\begin{equation}
1-\frac{1}{\left\vert M\right\vert }\sum_{m}p(m|m)\leq\varepsilon
,\label{eq-eac:error-cond}%
\end{equation}
where%
\begin{equation}
p(m^{\prime}|m):=\operatorname{Tr}[\Lambda_{BB^{\prime}}^{m^{\prime}%
}\mathcal{N}_{A\rightarrow B}(\mathcal{E}_{A^{\prime}\rightarrow A}^{m}%
(\Psi_{A^{\prime}B^{\prime}}))].
\end{equation}
The following equality holds by direct calculation:%
\begin{equation}
1-\frac{1}{\left\vert M\right\vert }\sum_{m}p(m|m)=1-\operatorname{Tr}%
[\Pi_{M\hat{M}}\omega_{M\hat{M}}],\label{eq:comp-test}%
\end{equation}
where the comparator test or projection $\Pi_{M\hat{M}}$ is defined as%
\begin{equation}
\Pi_{M\hat{M}}:=\sum_{m=1}^{\left\vert M\right\vert }|m\rangle\langle
m|_{M}\otimes|m\rangle\langle m|_{M^{\prime}}.
\end{equation}
Now, let us apply the error condition in \eqref{eq-eac:error-cond}, the
equality in \eqref{eq:comp-test}, and the definition of hypothesis testing
relative entropy to conclude that%
\begin{equation}
\log_{2}\left\vert M\right\vert \leq\widetilde{I}_{H}^{\varepsilon}(M;\hat
{M})_{\omega},
\end{equation}
where $\omega_{M\hat{M}}$ is defined in \eqref{eq-eac:end-EAC-state}. From
data processing under the action of the decoding channel $\mathcal{D}%
_{BB^{\prime}\rightarrow\hat{M}}$, we find that%
\begin{equation}
\widetilde{I}_{H}^{\varepsilon}(M;\hat{M})_{\omega}\leq\widetilde{I}%
_{H}^{\varepsilon}(M;BB^{\prime})_{\theta},
\end{equation}
where the state $\theta_{MBB^{\prime}}$ is the same as that in
\eqref{eq-eac:state-after-channel}:%
\begin{equation}
\theta_{MBB^{\prime}}:=(\mathcal{N}_{A\rightarrow B}\circ\mathcal{E}%
_{M^{\prime}A^{\prime}\rightarrow A})(\overline{\Phi}_{MM^{\prime}}\otimes
\Psi_{A^{\prime}B^{\prime}}).
\end{equation}
Observe that the reduced state $\theta_{MB^{\prime}}$ is a product state
because the channel $\mathcal{N}_{A\rightarrow B}$ and encoding $\mathcal{E}%
_{M^{\prime}A^{\prime}\rightarrow A}$ are trace preserving:%
\begin{align}
\theta_{MB^{\prime}} &  =\operatorname{Tr}_{B}[(\mathcal{N}_{A\rightarrow
B}\circ\mathcal{E}_{M^{\prime}A^{\prime}\rightarrow A})(\overline{\Phi
}_{MM^{\prime}}\otimes\Psi_{A^{\prime}B^{\prime}})]\\
&  =\operatorname{Tr}_{M^{\prime}A^{\prime}}[\overline{\Phi}_{MM^{\prime}%
}\otimes\Psi_{A^{\prime}B^{\prime}}]\\
&  =\overline{\Phi}_{M}\otimes\Psi_{B^{\prime}}\\
&  =\theta_{M}\otimes\theta_{B^{\prime}}%
.\label{eq-eac:initial-reduced-state-product}%
\end{align}
Thus, from Lemma~\ref{lem:marginal-prod}, we have that%
\begin{equation}
\widetilde{I}_{H}^{\varepsilon}(M;BB^{\prime})_{\theta}\leq\widetilde{I}%
_{H}^{\varepsilon}(MB^{\prime};B)_{\theta}.
\end{equation}
Consider that the state $\theta_{MB^{\prime}B}$ has the form $\mathcal{N}%
_{A\rightarrow B}(\rho_{SA})$ for some mixed state $\rho_{SA}$, by identifying
$S$ with $MB^{\prime}$ and $\rho_{SA}$ with $\mathcal{E}_{M^{\prime}A^{\prime
}\rightarrow A}(\overline{\Phi}_{MM^{\prime}}\otimes\Psi_{A^{\prime}B^{\prime
}})$. So we find that%
\begin{equation}
\widetilde{I}_{H}^{\varepsilon}(MB^{\prime};B)_{\theta}\leq\max_{\rho_{SA}%
}\widetilde{I}_{H}^{\varepsilon}(S;B)_{\xi},
\end{equation}
where%
\begin{equation}
\xi_{SB}:=\mathcal{N}_{A\rightarrow B}(\rho_{SA}).
\end{equation}
Now employing the fact that $\rho_{SA}$ can be purified to $\psi_{S^{\prime
}SA}$, the data processing inequality for the $\varepsilon$-mutual
information, and the Schmidt decomposition theorem with respect to the
bipartite cut $S^{\prime}S|A$ to conclude that the Schmidt rank of
$\psi_{S^{\prime}SA}$ is no larger than $\left\vert A\right\vert $, we
conclude the statement of the proposition.
\end{proof}


\section{Classical communication over quantum multiple-access channels}

\label{sec:EA-MAC}

We now establish a link between communication over a quantum multiple-access
channel and multiple quantum hypothesis testing. One advantage of this
development is the reduction of the communication problem to a hypothesis
testing problem, which is perhaps simpler to state and could also be
considered a more fundamental problem than the communication problem. Later,
in Section~\ref{sec:disucssion}, we discuss the relation of the quantum
simultaneous decoding conjecture from \cite{fawzi2012classical,Wil11}\ to open
questions in multiple quantum hypothesis testing from \cite{AM14,BHOS14} (here
we note that the solution of the multiple Chernoff distance conjecture from
\cite{li2016discriminating}\ does not evidently allow for the solution of the
quantum simultaneous decoding conjecture). We point the reader to
\cite{D11,DF13} for further discussions and variations of the quantum
simultaneous decoding conjecture.

We begin by considering the case of entanglement-assisted communication and
later consider unassisted communication. We first define the
information-processing task of entanglement-assisted classical communication
over quantum multiple-access channels (see also
\cite{hsieh2008entanglement,Xu2013}). Consider the scenario in which two
senders Alice and Bob would like to transmit classical messages to a receiver
Charlie over a two-sender single-receiver quantum multiple-access channel
$\mathcal{N}_{AB\rightarrow C}$. Alice and Bob choose their messages from
message sets $\mathcal{L}$ and $\mathcal{M}$. The cardinality of the sets
$\mathcal{L}$ and $\mathcal{M}$ are denoted as $L$ and $M$, respectively.
Suppose that Alice and Bob each share an arbitrary entangled state with
Charlie before communication begins. Let $\Phi_{RA^{\prime}}$ denote the state
shared between Charlie and Alice, and let $\Psi_{SB^{\prime}}$ denote the
state shared between Charlie and Bob.

Let $L,M\in\mathbb{N}$ and $\varepsilon\in\lbrack0,1]$. An $(L,M,\varepsilon)$
entanglement-assisted multiple-access classical communication code consists of
a set $\{\mathcal{E}_{A^{\prime}\rightarrow A}^{l},\mathcal{F}_{B^{\prime
}\rightarrow B}^{m},\Lambda_{RSC}^{l,m}\}_{l,m}$ of encoders and a decoding
POVM, such that the average error probability is bounded from above by
$\varepsilon$:
\begin{equation}
\frac{1}{LM}\sum_{l=1}^{L}\sum_{m=1}^{M}p_{e}(l,m)\leq\varepsilon~,
\end{equation}
where the error probability for each message pair is given by
\begin{equation}
p_{e}(l,m)\equiv\operatorname{Tr}\{(I_{RSC}-\Lambda_{RSC}^{l,m})\mathcal{N}%
_{AB\rightarrow C}(\mathcal{E}_{A^{\prime}\rightarrow A}^{l}(\Phi_{RA^{\prime
}})\otimes\mathcal{F}_{B^{\prime}\rightarrow B}^{m}(\Psi_{SB^{\prime}}))\}~.
\end{equation}
Figure~\ref{fig:EA-MAC-task}\ depicts the coding task.

\begin{figure}[ptb]
\begin{center}
\includegraphics[
width=3.4809in
]{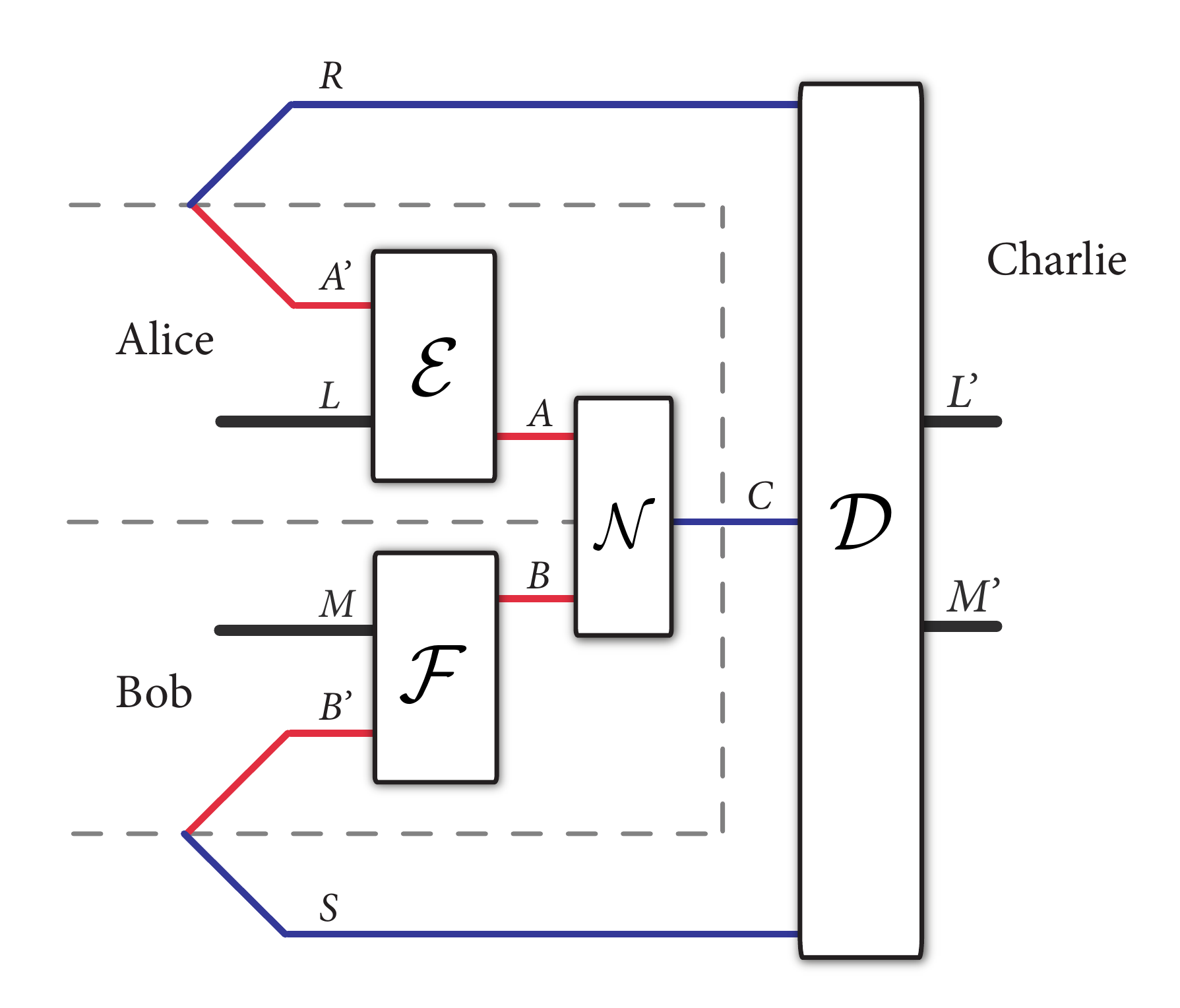}
\end{center}
\caption{The coding task for entanglement-assisted classical communication
over a quantum multiple-access channel. Alice and Bob each share entanglement
with the receiver Charlie, and Charlie employs a decoding channel
$\mathcal{D}$ to figure out which messages Alice and Bob transmitted.}%
\label{fig:EA-MAC-task}%
\end{figure}

\subsection{One-shot position-based coding scheme}

\label{sec:one-shot-position-QMAC}We now describe and analyze a position-based
coding scheme for entanglement-assisted communication over a quantum
multiple-access channel, in which the decoding POVM is a quantum simultaneous decoder.

\textbf{Encoding:} Before communication begins, suppose that Alice and Charlie
share $L$ copies of the same bipartite state: $\theta_{RA}^{\otimes L}%
\equiv\theta_{R_{1}A_{1}}\otimes\cdots\otimes\theta_{R_{L}A_{L}}$. Similarly,
suppose that Bob and Charlie share $M$ copies of the same bipartite state:
$\gamma_{SB}^{\otimes M}\equiv\gamma_{S_{1}B_{1}}\otimes\cdots\otimes
\gamma_{S_{M}B_{M}}$. To send message $(l,m)\in\mathcal{L}\times\mathcal{M}$,
Alice sends the $l$th $A$ system of $\theta_{RA}^{\otimes L}$ and Bob sends
the $m$th $B$ system of $\theta_{SB}^{\otimes M}$ over the quantum
multiple-access channel $\mathcal{N}_{AB\rightarrow C}$. So this leads to the
following state for Charlie:
\begin{equation}
\rho_{R^{L}S^{M}C}^{l,m}=\theta_{R}^{\otimes(l-1)}\otimes\gamma_{S}%
^{\otimes(m-1)}\otimes\mathcal{N}_{AB\rightarrow C}(\theta_{R_{l}A_{l}}%
\otimes\gamma_{S_{m}B_{m}})\otimes\theta_{R}^{\otimes(L-l)}\otimes\gamma
_{S}^{\otimes(M-m)}~. \label{eq:EA-MAC-state}%
\end{equation}

\textbf{Decoding:} To decode the message transmitted, Charlie performs a
measurement on the systems $R^{L}$, $S^{M}$, and the channel output $C$ to
determine the message pair $(l,m)$ that Alice and Bob transmitted. Consider
the following measurement operator:%
\begin{equation}
\Gamma_{R^{L}S^{M}C}^{l,m}\equiv T_{R_{l}S_{m}C}~, \label{eq:EA-MAC-meas}%
\end{equation}
where identity operators are implicit for all of the $R$ and $S$ systems
besides $R_{l}$ and $S_{m}$ and $T_{RSC}$ is a measurement operator satisfying
$0\leq T_{RSC}\leq I_{RSC}$. Let us call the action of performing the
measurement $\{\Gamma_{R^{L}S^{M}C}^{l,m},I_{R^{L}S^{M}C}-\Gamma_{R^{L}S^{M}%
C}^{l,m}\}$ \textquotedblleft checking for the message pair $(l,m)$%
.\textquotedblright\ If Charlie checks for message pair $(l,m)$ when indeed
message pair $(l,m)$ is transmitted, then the probability of incorrectly
decoding is%
\begin{equation}
\operatorname{Tr}\{(I-\Gamma_{R^{L}S^{M}C}^{l,m})\rho_{R^{L}S^{M}C}%
^{l,m}\}=\operatorname{Tr}\{(I-T_{RSC})\mathcal{N}_{AB\rightarrow C}%
(\theta_{RA}\otimes\gamma_{SB})\}. \label{eq:EA-MAC-err-1}%
\end{equation}
The equality above follows by combining \eqref{eq:EA-MAC-state}\ and
\eqref{eq:EA-MAC-meas} and\ applying partial traces. There are three other
kinds of error probabilities to consider. If message pair $(l,m)$ was
transmitted and Charlie checks for message pair $(l^{\prime},m)$ for
$l^{\prime}\neq l$, then the probability of decoding as $(l^{\prime},m)$ is%
\begin{equation}
\operatorname{Tr}\{\Gamma_{R^{L}S^{M}C}^{l^{\prime},m}\rho_{R^{L}S^{M}C}%
^{l,m}\}=\operatorname{Tr}\{T_{RSC}\mathcal{N}_{AB\rightarrow C}(\theta
_{R}\otimes\theta_{A}\otimes\gamma_{SB})\}. \label{eq:EA-MAC-err-2}%
\end{equation}
If message pair $(l,m)$ was transmitted and Charlie checks for message pair
$(l,m^{\prime})$ for $m^{\prime}\neq m$, then the probability of decoding as
$(l,m^{\prime})$ is%
\begin{equation}
\operatorname{Tr}\{\Gamma_{R^{L}S^{M}C}^{l,m^{\prime}}\rho_{R^{L}S^{M}C}%
^{l,m}\}=\operatorname{Tr}\{T_{RSC}\mathcal{N}_{AB\rightarrow C}(\theta
_{RA}\otimes\gamma_{S}\otimes\gamma_{B})\}. \label{eq:EA-MAC-err-3}%
\end{equation}
If message pair $(l,m)$ was transmitted and Charlie checks for message pair
$(l^{\prime},m^{\prime})$ for $l^{\prime}\neq l$ and $m^{\prime}\neq m$, then
the probability of decoding as $(l^{\prime},m^{\prime})$ is%
\begin{equation}
\operatorname{Tr}\{\Gamma_{R^{L}S^{M}C}^{l^{\prime},m^{\prime}}\rho
_{R^{L}S^{M}C}^{l,m}\}=\operatorname{Tr}\{T_{RSC}\mathcal{N}_{AB\rightarrow
C}(\theta_{R}\otimes\theta_{A}\otimes\gamma_{S}\otimes\gamma_{B})\}.
\label{eq:EA-MAC-err-4}%
\end{equation}
The above observations are helpful in the forthcoming error analysis.

We now take Charlie's position-based quantum simultaneous decoder to be the
following square-root measurement:%
\begin{equation}
\Lambda_{R^{L}S^{M}C}^{l,m}\equiv\left(  \sum_{l^{\prime}=1}^{L}%
\sum_{m^{\prime}=1}^{M}\Gamma_{R^{L}S^{M}C}^{l^{\prime},m^{\prime}}\right)
^{-1/2}\Gamma_{R^{L}S^{M}C}^{l,m}\left(  \sum_{l^{\prime}=1}^{L}%
\sum_{m^{\prime}=1}^{M}\Gamma_{R^{L}S^{M}C}^{l^{\prime},m^{\prime}}\right)
^{-1/2}. \label{eq:square-root-meas}%
\end{equation}

\textbf{Error analysis:} Due to the code construction, the error probability
under the position-based coding scheme is the same for each message pair
$(l,m)$:%
\begin{equation}
p_{e}(l,m)=\operatorname{Tr}\{(I-\Lambda_{R^{L}S^{M}C}^{l,m})\rho_{R^{L}%
S^{M}C}^{l,m}\}~.
\end{equation}
Applying Lemma~\ref{lemma:Hayashi-Nagaoka} and
\eqref{eq:EA-MAC-err-1}--\eqref{eq:EA-MAC-err-4}, we arrive at the following
upper bound on the decoding error probability:%
\begin{align}
p_{e}(l,m)  &  \leq c_{\operatorname{I}}\operatorname{Tr}\{(I-\Gamma
_{R^{L}S^{M}C}^{l,m})\rho_{R^{L}S^{M}C}^{l,m}\}+c_{\operatorname{II}}%
\sum_{(l^{\prime},m^{\prime})\neq(l,m)}\operatorname{Tr}\{\Gamma_{R^{L}S^{M}%
C}^{l^{\prime},m^{\prime}}\rho_{R^{L}S^{M}C}^{l,m}\}\\
&  =c_{\operatorname{I}}\operatorname{Tr}\{(I-\Gamma_{R^{L}S^{M}C}^{l,m}%
)\rho_{R^{L}S^{M}C}^{l,m}\}+c_{\operatorname{II}}\sum_{l^{\prime}\neq
l}\operatorname{Tr}\{\Gamma_{R^{L}S^{M}C}^{l^{\prime},m}\rho_{R^{L}S^{M}%
C}^{l,m}\}\nonumber\\
&  \qquad+c_{\operatorname{II}}\sum_{m^{\prime}\neq m}\operatorname{Tr}%
\{\Gamma_{R^{L}S^{M}C}^{l,m^{\prime}}\rho_{R^{L}S^{M}C}^{l,m}%
\}+c_{\operatorname{II}}\sum_{l^{\prime}\neq l,m^{\prime}\neq m}%
\operatorname{Tr}\{\Gamma_{R^{L}S^{M}C}^{l^{\prime},m^{\prime}}\rho
_{R^{L}S^{M}C}^{l,m}\}\\
&  =c_{\operatorname{I}}\operatorname{Tr}\{(I-T_{RSC})\mathcal{N}%
_{AB\rightarrow C}(\theta_{RA}\otimes\gamma_{SB})\}+c_{\operatorname{II}%
}\big[(L-1)\operatorname{Tr}\{T_{RSC}\mathcal{N}_{AB\rightarrow C}(\theta
_{R}\otimes\theta_{A}\otimes\gamma_{SB})\}\nonumber\\
&  \qquad+(M-1)\operatorname{Tr}\{T_{RSC}\mathcal{N}_{AB\rightarrow C}%
(\theta_{RA}\otimes\gamma_{S}\otimes\gamma_{B})\}\nonumber\\
&  \qquad+(L-1)(M-1)\operatorname{Tr}\{T_{RSC}\mathcal{N}_{AB\rightarrow
C}(\theta_{R}\otimes\theta_{A}\otimes\gamma_{S}\otimes\gamma_{B})\}\big]~.
\end{align}
The error probability associated with $c_{\operatorname{I}}$ is the
probability of incorrectly decoding when Charlie checks for message pair
$(l,m)$. The error probabilities associated with $c_{\operatorname{II}}$ are
the probabilities of decoding as some other message pair when message pair
$(l,m)$ is transmitted. There are $L-1$ possibilities for Charlie to
erroneously decode Alice's message and correctly decode Bob's message, $M-1$
possibilities to erroneously decode Bob's message and correctly decode Alice's
message, and $(M-1)(L-1)$ possibilities of incorrectly decoding both Alice and
Bob's messages.

\bigskip\textbf{One-shot bound for quantum simultaneous decoding.} Thus, our
bound on the decoding error probability for a position-based
entanglement-assisted coding scheme is as follows:%
\begin{multline}
p_{e}(l,m)\leq c_{\operatorname{I}}\operatorname{Tr}\{(I-T_{RSC}%
)\mathcal{N}_{AB\rightarrow C}(\theta_{RA}\otimes\gamma_{SB}%
)\}+c_{\operatorname{II}}\big[(L-1)\operatorname{Tr}\{T_{RSC}\mathcal{N}%
_{AB\rightarrow C}(\theta_{R}\otimes\theta_{A}\otimes\gamma_{SB})\}\\
+(M-1)\operatorname{Tr}\{T_{RSC}\mathcal{N}_{AB\rightarrow C}(\theta
_{RA}\otimes\gamma_{S}\otimes\gamma_{B})\}\\
+(L-1)(M-1)\operatorname{Tr}\{T_{RSC}\mathcal{N}_{AB\rightarrow C}(\theta
_{R}\otimes\theta_{A}\otimes\gamma_{S}\otimes\gamma_{B})\}\big]~.
\label{eq:EA-MAC-one-shot}%
\end{multline}
Interestingly, this bound is the same for all message pairs, and thus holds
for maximal or average error probability. We also remark that the above
inequality forges a transparent link between communication over
multiple-access channels and multiple quantum hypothesis testing, a point that
we will return to in Section~\ref{sec:disucssion}.

\bigskip\textbf{Generalization to multiple senders.} The above bound can be
extended as follows for an entanglement-assisted quantum multiple-access
channel $\mathcal{N}_{A_{1}\cdots A_{K}\rightarrow C}$\ with $K$ senders and a
single receiver:%
\begin{multline}
p_{e}(m_{1},\ldots,m_{K})\leq c_{\operatorname{I}}\operatorname{Tr}\left\{
(I-T_{R_{1}\cdots R_{K}C})\mathcal{N}_{A_{1}\cdots A_{K}\rightarrow C}\left(
\bigotimes\limits_{k=1}^{K}\theta_{R_{k}A_{k}}\right)  \right\}
\label{eq:avg-err-multiple-senders}\\
+c_{\operatorname{II}}\sum_{\mathcal{J}\subseteq\left[  K\right]  }\left[
\prod\limits_{j\in\mathcal{J}}\left(  M_{j}-1\right)  \right]
\operatorname{Tr}\left\{  T_{R_{1}\cdots R_{K}C}\mathcal{N}_{A_{1}\cdots
A_{K}\rightarrow C}\left(  \bigotimes\limits_{j\in\mathcal{J}}\theta_{R_{j}%
}\otimes\theta_{A_{j}}\otimes\bigotimes\limits_{l\in\mathcal{J}^{c}}%
\theta_{R_{l}A_{l}}\right)  \right\}  .
\end{multline}
In the above, $m_{1},\ldots,m_{K}$ are the messages for senders $1$ through
$K$, respectively, chosen from respective message sets of size $M_{1}%
,\ldots,M_{K}$. The states $\theta_{R_{1}A_{1}}$, \ldots, $\theta_{R_{K}A_{K}%
}$ are entangled states shared between the receiver and senders $1$ through
$K$, with the receiver possessing all of the $R$ systems. Finally,
$T_{R_{1}\cdots R_{K}C}$ is a test operator satisfying $0\leq T_{R_{1}\cdots
R_{K}C}\leq I_{R_{1}\cdots R_{K}C}$ and $\mathcal{J}$ is a non-empty subset of
$\left[  K\right]  \equiv\left\{  1,\ldots,K\right\}  $. The above bound is
derived by using position-based coding as described above and a square-root
measurement that generalizes \eqref{eq:square-root-meas}. We omit the
straightforward proof.

\subsection{Unassisted classical communication over multiple-access channels}

\label{sec:UA-MAC} The position-based coding technique is not only a powerful
tool for entanglement-assisted classical communication protocols, but also for
those that do not employ entanglement assistance or any other kind of
assistance. This was shown explicitly for the single-sender, single-receiver
case in \cite{wilde2017position}. We now demonstrate this point further by
considering unassisted classical communication over a classical-input
quantum-output multiple-access channel $\mathcal{N}_{XY\rightarrow C}$. We do
so by first considering classical communication assisted by shared randomness,
such that we can employ a position-based coding scheme, and then we
derandomize the protocol to obtain a codebook for unassisted communication.

The classical-classical-quantum channel that we consider can be written in
fully quantum form as%
\begin{equation}
\mathcal{N}_{X^{\prime}Y^{\prime}\rightarrow C}(\omega_{X^{\prime}Y^{\prime}%
})=\sum_{x,y}\langle x|_{X^{\prime}}\langle y|_{Y^{\prime}}\omega_{X^{\prime
}Y^{\prime}}|x\rangle_{X^{\prime}}|y\rangle_{Y^{\prime}}\ \rho_{C}^{x,y}~.
\end{equation}
Before communication begins, Alice and Charlie share randomness in the form of
the following classical--classical state:%
\begin{equation}
\rho_{XX^{\prime}}\equiv\sum_{x}p_{X}(x)|x\rangle\langle x|_{X}\otimes
|x\rangle\langle x|_{X^{\prime}}~.
\end{equation}
Similarly Bob and Charlie also share randomness represented by the following
classical--classical state:%
\begin{equation}
\sigma_{YY^{\prime}}\equiv\sum_{y}p_{Y}(y)|y\rangle\langle y|_{Y}%
\otimes|y\rangle\langle y|_{Y^{\prime}}~.
\end{equation}

We demonstrate the procedure of derandomization by proving the following theorem:

\begin{theorem}
\label{thm:assisted-to-unassisted}There exists an unassisted, simultaneous
decoding protocol for classical communication over a classical-input
quantum-output quantum multiple-access channel with the following upper bound
on its average error probability, holding for all $T_{XYC}$ such that $0\leq
T_{XYC}\leq I_{XYC}$:%
\begin{multline}
\frac{1}{LM}\sum_{l,m}p_{e}(l,m)\leq c_{\operatorname{I}}\operatorname{Tr}%
\{(I-T_{XYC})\mathcal{N}_{X^{\prime}Y^{\prime}\rightarrow C}(\rho_{X^{\prime
}X}\otimes\sigma_{YY^{\prime}})\}\\
+c_{\operatorname{II}}\big[\left(  L-1\right)  \operatorname{Tr}%
\{T_{XYC}\mathcal{N}_{X^{\prime}Y^{\prime}\rightarrow C}(\rho_{X}\otimes
\rho_{X^{\prime}}\otimes\sigma_{YY^{\prime}})\}\\
+\left(  M-1\right)  \operatorname{Tr}\{T_{XYC}\mathcal{N}_{X^{\prime
}Y^{\prime}\rightarrow C}(\rho_{XX^{\prime}}\otimes\sigma_{Y}\otimes
\sigma_{Y^{\prime}})\}\\
+\left(  L-1\right)  \left(  M-1\right)  \operatorname{Tr}\{T_{XYC}%
\mathcal{N}_{X^{\prime}Y^{\prime}\rightarrow C}(\rho_{X}\otimes\rho
_{X^{\prime}}\otimes\sigma_{Y}\otimes\sigma_{Y^{\prime}})\}\big]~,
\end{multline}
where $L$ is the number of messages for the first sender and $M$ is the number
of messages for the second sender. A generalization of this statement holds
for multiple senders, with an upper bound on the average error probability
given by the right-hand side of \eqref{eq:avg-err-multiple-senders}, but with
all of the $R$ and $A$ systems being classical.
\end{theorem}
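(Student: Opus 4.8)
The plan is to mimic the entanglement-assisted position-based coding scheme of Section~\ref{sec:one-shot-position-QMAC}, but replacing the shared entangled states $\theta_{RA}$ and $\gamma_{SB}$ with the shared classical randomness $\rho_{XX'}$ and $\sigma_{YY'}$, and then to \emph{derandomize}. Concretely, Alice and Charlie share $L$ copies $\rho_{XX'}^{\otimes L}$ and Bob and Charlie share $M$ copies $\sigma_{YY'}^{\otimes M}$. To send the pair $(l,m)$, Alice feeds the $l$th $X'$ register into the channel and Bob feeds the $m$th $Y'$ register; Charlie then holds a state exactly analogous to \eqref{eq:EA-MAC-state}, with $X$-systems in place of $R$-systems, $Y$-systems in place of $S$-systems, and $\mathcal{N}_{X'Y'\to C}$ in place of $\mathcal{N}_{AB\to C}$. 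Charlie decodes with the position-based square-root measurement \eqref{eq:square-root-meas}, built from test operators $T_{XYC}$. Applying Lemma~\ref{lemma:Hayashi-Nagaoka} verbatim and computing the four kinds of partial-trace error terms as in \eqref{eq:EA-MAC-err-1}--\eqref{eq:EA-MAC-err-4} gives the stated bound on $\frac{1}{LM}\sum_{l,m}p_e(l,m)$, but for the \emph{shared-randomness-assisted} protocol. The first three steps are thus essentially a transcription of the earlier analysis.

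The substantive new step is derandomization. The shared randomness can be written as a mixture over local classical strings: $\rho_{XX'}^{\otimes L}\otimes\sigma_{YY'}^{\otimes M}$ corresponds to Alice holding a uniformly-chosen (according to $p_X^{\otimes L}$) string $x^L=(x_1,\dots,x_L)$, Charlie holding a synchronized copy of it, and likewise Bob and Charlie sharing $y^M=(y_1,\dots,y_M)$ drawn from $p_Y^{\otimes M}$. Conditioned on $(x^L,y^M)$, the protocol is a deterministic unassisted code: Alice encodes $l\mapsto x_l$, Bob encodes $m\mapsto y_m$, and Charlie's POVM depends on $(x^L,y^M)$. Let $p_e(l,m\,|\,x^L,y^M)$ be the conditional error; the bound above is an average of $\frac{1}{LM}\sum_{l,m}p_e(l,m\,|\,x^L,y^M)$ over $(x^L,y^M)$. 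Hence there exists at least one choice $(x^{L,*},y^{M,*})$ for which the conditional average error is no larger than the right-hand side. Fixing that choice yields a fully unassisted simultaneous-decoding code with the claimed error bound. I would phrase this as the standard expurgation/averaging argument: the average over the shared randomness of the error is bounded, so some deterministic realization does at least as well.

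For the multi-sender generalization, I would run the same construction with $K$ shared-randomness states $\rho^{(k)}_{X_k X_k'}$, $k\in[K]$, held between sender~$k$ and the receiver, with $M_k$ copies of the $k$th state; Charlie decodes with the $K$-sender position-based square-root measurement that generalizes \eqref{eq:square-root-meas}. The error analysis produces exactly the right-hand side of \eqref{eq:avg-err-multiple-senders} with all $R_k$ replaced by $X_k$, all $A_k$ replaced by $X_k'$, and $\mathcal{N}$ the classical-input channel; then the identical averaging argument over $(x_1^{M_1},\dots,x_K^{M_K})$ derandomizes. I would note that the bound is identical in form precisely because the only property used is that the shared states have the classical--classical structure $\sum_x p_X(x)\,|x\rangle\langle x|\otimes|x\rangle\langle x|$, so that sending a share through the channel is the same as feeding a classical symbol, and tracing out a non-selected share returns the marginal $\rho_X$.

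The main obstacle is not technical difficulty but care in the derandomization bookkeeping: one must be precise that (i) the shared-randomness protocol's encoders are themselves classical-controlled, so that conditioning on the randomness leaves a bona fide deterministic code with no residual shared resource, and (ii) the error quantity being averaged is the \emph{average}-over-messages error, so that the existence of a good randomness realization gives a code good on average over messages (not maximal error) — which matches the statement. A secondary point worth checking is that Charlie's decoding POVM, after fixing $(x^{L,*},y^{M,*})$, is still a valid POVM acting only on the channel output $C$ (the $X$- and $Y$-registers having become classical labels Charlie already knows), so no leftover quantum side information is needed; this is immediate from the product structure but should be stated.
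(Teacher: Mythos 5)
Your proposal is correct and follows essentially the same route as the paper: run the position-based scheme of Section~\ref{sec:one-shot-position-QMAC} with the classical--classical states $\rho_{XX'}$ and $\sigma_{YY'}$ in place of the entangled states, obtain the bound \eqref{eq:MAC-one-shot}, and then derandomize by averaging over the shared classical strings and selecting a good realization. The one place where your write-up is looser than the paper's is the point you yourself flag as ``secondary'': for an \emph{arbitrary} test operator $T_{XYC}$ with $0\leq T_{XYC}\leq I_{XYC}$, the square-root measurement built from $\Gamma^{l,m}_{X^LY^MC}=T_{X_lY_mC}$ is \emph{not} block-diagonal in the $\{|x^L\rangle\otimes|y^M\rangle\}$ basis, so ``conditioning on $(x^L,y^M)$'' does not by itself yield a valid POVM on $C$ alone. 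The paper resolves this by first replacing $T_{XYC}$ with its pinched version $\sum_{x,y}|x\rangle\langle x|_X\otimes|y\rangle\langle y|_Y\otimes Q_C^{x,y}$, where $Q_C^{x,y}\equiv\langle x,y|_{XY}T_{XYC}|x,y\rangle_{XY}$, observing that this substitution leaves all four trace terms in the bound unchanged (since the states involved are classical on $XY$), and only \emph{then} showing that the square-root measurement decomposes as $\sum_{x^L,y^M}|x^L\rangle\langle x^L|\otimes|y^M\rangle\langle y^M|\otimes\Omega_C^{x_l,y_m}$ with each $\{\Omega_C^{x_l,y_m}\}_{l,m}$ a bona fide sub-POVM on $C$. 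This pinching step is exactly what turns your ``immediate from the product structure'' into a proof; with it inserted, your argument matches the paper's.
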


\begin{proof}
The position-based coding scheme operates exactly as specified in
Section~\ref{sec:one-shot-position-QMAC}, with the states $\theta_{RA}$ and
$\gamma_{SB}$ replaced by $\rho_{XX^{\prime}}$ and $\sigma_{YY^{\prime}}$,
respectively, the channel $\mathcal{N}_{AB\rightarrow C}$ replaced by
$\mathcal{N}_{X^{\prime}Y^{\prime}\rightarrow C}$, and the test operator
$T_{RSC}$ replaced by $T_{XYC}$. The same error analysis then leads to the
following bound on the error probability when decoding the message pair
$(l,m)$:%
\begin{multline}
p_{e}(l,m)\leq c_{\operatorname{I}}\operatorname{Tr}\{(I-T_{XYC}%
)\mathcal{N}_{X^{\prime}Y^{\prime}\rightarrow C}(\rho_{XX^{\prime}}%
\otimes\sigma_{YY^{\prime}})\}\\
+c_{\operatorname{II}}\big[\left(  L-1\right)  \operatorname{Tr}%
\{T_{XYC}\mathcal{N}_{X^{\prime}Y^{\prime}\rightarrow C}(\rho_{X}\otimes
\rho_{X^{\prime}}\otimes\sigma_{YY^{\prime}})\}\\
+\left(  M-1\right)  \operatorname{Tr}\{T_{XYC}\mathcal{N}_{X^{\prime
}Y^{\prime}\rightarrow C}(\rho_{XX^{\prime}}\otimes\sigma_{Y}\otimes
\sigma_{Y^{\prime}})\}\\
+\left(  L-1\right)  \left(  M-1\right)  \operatorname{Tr}\{T_{XYC}%
\mathcal{N}_{X^{\prime}Y^{\prime}\rightarrow C}(\rho_{X}\otimes\rho
_{X^{\prime}}\otimes\sigma_{Y}\otimes\sigma_{Y^{\prime}})\}\big]~.
\label{eq:MAC-one-shot}%
\end{multline}

\textbf{Derandomization:} Extending the development in
\cite{wilde2017position}, first notice that we can rewrite the four trace
terms in \eqref{eq:MAC-one-shot} as follows:%
\begin{align}
\operatorname{Tr}\{T_{XYC}\mathcal{N}_{X^{\prime}Y^{\prime}\rightarrow C}%
(\rho_{XX^{\prime}}\otimes\sigma_{YY^{\prime}})\}  &  =\operatorname{Tr}%
\{T_{XYC}\sum_{x,y}p_{X}(x)p_{Y}(y)|xy\rangle\langle xy|\otimes\rho_{C}%
^{x,y}\}\\
&  =\sum_{x,y}p_{X}(x)p_{Y}(y)\operatorname{Tr}\{Q_{C}^{x,y}\rho_{C}%
^{x,y}\}~,\\
\operatorname{Tr}\{T_{XYC}\mathcal{N}_{X^{\prime}Y^{\prime}\rightarrow C}%
(\rho_{X}\otimes\rho_{X^{\prime}}\otimes\sigma_{YY^{\prime}})\}  &
=\sum_{x,y}p_{X}(x)p_{Y}(y)\operatorname{Tr}\{Q_{C}^{x,y}\bar{\rho}_{C}%
^{y}\}~,\\
\operatorname{Tr}\{T_{XYC}\mathcal{N}_{X^{\prime}Y^{\prime}\rightarrow C}%
(\rho_{XX^{\prime}}\otimes\sigma_{Y}\otimes\sigma_{Y^{\prime}})\}  &
=\sum_{x,y}p_{X}(x)p_{Y}(y)\operatorname{Tr}\{Q_{C}^{x,y}\bar{\rho}_{C}%
^{x}\}~,\\
\operatorname{Tr}\{T_{XYC}\mathcal{N}_{X^{\prime}Y^{\prime}\rightarrow C}%
(\rho_{X}\otimes\rho_{X^{\prime}}\otimes\sigma_{Y}\otimes\sigma_{Y^{\prime}%
})\}  &  =\sum_{x,y}p_{X}(x)p_{Y}(y)\operatorname{Tr}\{Q_{C}^{x,y}\bar{\rho
}_{C}\}~,
\end{align}
where we define the following averaged output states:%
\begin{align}
\bar{\rho}_{C}  &  \equiv\sum_{x,y}p_{X}(x)p_{Y}(y)\rho_{C}^{x,y},\\
\bar{\rho}_{C}^{y}  &  \equiv\sum_{x}p_{X}(x)\rho_{C}^{x,y},\\
\bar{\rho}_{C}^{x}  &  \equiv\sum_{y}p_{Y}(y)\rho_{C}^{x,y},
\end{align}
and the measurement operator%
\begin{equation}
Q_{C}^{x,y}\equiv\langle x,y|_{XY}T_{XYC}|x,y\rangle_{XY}~.
\end{equation}
Thus, in the case that the code is randomness-assisted, it suffices to take
the test operator $T_{XYC}$ to have the following form:%
\begin{equation}
T_{XYC}=\sum_{x,y}|x\rangle\langle x|_{X}\otimes|y\rangle\langle y|_{Y}\otimes
Q_{C}^{x,y},
\end{equation}
because, as we will show, the upper bound on the average error probability
does not change when doing so. Then we can rewrite the decoding POVM in
\eqref{eq:square-root-meas}\ as follows:%
\begin{align}
\Gamma_{X^{L}Y^{M}C}^{l,m}  &  \equiv T_{X_{l}Y_{m}C}\\
&  =\sum_{x^{L},y^{M}}|x^{L}\rangle\langle x^{L}|_{X^{L}}\otimes|y^{M}%
\rangle\langle y^{M}|_{Y^{M}}\otimes Q_{C}^{x_{l},y_{m}}~,
\end{align}
where we use the resolution of the identity $I_{X}=\sum_{x}|x\rangle\langle
x|_{X}$ to expand the implicit identity operators and we employ the notation
$x^{L}\equiv x_{1}\cdots x_{L}$ and $y^{M}\equiv y_{1}\cdots y_{M}$. Then this
implies that
\begin{align}
\left(  \sum_{l^{\prime}=1}^{L}\sum_{m^{\prime}=1}^{M}\Gamma_{X^{L}Y^{M}%
C}^{l^{\prime},m^{\prime}}\right)  ^{-1/2}  &  =\left(  \sum_{l^{\prime}%
=1}^{L}\sum_{m^{\prime}=1}^{M}\sum_{x^{L},y^{M}}|x^{L}\rangle\langle
x^{L}|_{X^{L}}\otimes|y^{M}\rangle\langle y^{M}|_{Y^{M}}\otimes Q_{C}%
^{x_{l^{\prime}},y_{m^{\prime}}}\right)  ^{-1/2}\\
&  =\left(  \sum_{x^{L},y^{M}}|x^{L}\rangle\langle x^{L}|_{X^{L}}\otimes
|y^{M}\rangle\langle y^{M}|_{Y^{M}}\otimes\sum_{l^{\prime}=1}^{L}%
\sum_{m^{\prime}=1}^{M}Q_{C}^{x_{l^{\prime}},y_{m^{\prime}}}\right)  ^{-1/2}\\
&  =\sum_{x^{L},y^{M}}|x^{L}\rangle\langle x^{L}|_{X^{L}}\otimes|y^{M}%
\rangle\langle y^{M}|_{Y^{M}}\otimes\left(  \sum_{l^{\prime}=1}^{L}%
\sum_{m^{\prime}=1}^{M}Q_{C}^{x_{l^{\prime}},y_{m^{\prime}}}\right)  ^{-1/2}~.
\end{align}
The last step follows from the fact that $\{|x\rangle\}_{x}$ and
$\{|y\rangle\}_{y}$ form orthonormal bases. Therefore, the decoding POVM for
the randomness-assisted protocol can be decomposed as
\begin{align}
\Lambda_{X^{L}Y^{M}C}^{l,m}  &  \equiv\left(  \sum_{l^{\prime}=1}^{L}%
\sum_{m^{\prime}=1}^{M}\Gamma_{X^{L}Y^{M}C}^{l^{\prime},m^{\prime}}\right)
^{-1/2}\Gamma_{X^{L}Y^{M}C}^{l,m}\left(  \sum_{l^{\prime}=1}^{L}%
\sum_{m^{\prime}=1}^{M}\Gamma_{X^{L}Y^{M}C}^{l^{\prime},m^{\prime}}\right)
^{-1/2}~,\\
&  =\sum_{x^{L},y^{M}}|x^{L}\rangle\langle x^{L}|_{X^{L}}\otimes|y^{M}%
\rangle\langle y^{M}|_{Y^{M}}\otimes\Omega_{C}^{x_{l},y_{m}}~,
\label{eq:unassisted-MAC-POVM-expansion}%
\end{align}
where%
\begin{equation}
\Omega_{C}^{x_{l},y_{m}}\equiv\left(  \sum_{l^{\prime}=1}^{L}\sum_{m^{\prime
}=1}^{M}Q_{C}^{x_{l^{\prime}},y_{m^{\prime}}}\right)  ^{-1/2}Q_{C}%
^{x_{l},y_{m}}\left(  \sum_{l^{\prime}=1}^{L}\sum_{m^{\prime}=1}^{M}%
Q_{C}^{x_{l^{\prime}},y_{m^{\prime}}}\right)  ^{-1/2}~.
\end{equation}
By definition, the output state of Charlie in \eqref{eq:EA-MAC-state}, for our
case of interest, can be written as%
\begin{equation}
\rho_{X^{L}Y^{M}C}^{l,m}=\sum_{x^{L},y^{M}}p_{X^{L}}(x^{L})p_{Y^{M}}%
(y^{M})|x^{L}\rangle\langle x^{L}|_{X^{L}}\otimes|y^{M}\rangle\langle
y^{M}|_{Y^{M}}\otimes\rho_{C}^{x_{l},y_{m}}~.
\label{eq:unassisted-MAC-expanded-state}%
\end{equation}
By combining \eqref{eq:unassisted-MAC-POVM-expansion} and
\eqref{eq:unassisted-MAC-expanded-state}, we find that the average error
probability for the code can be rewritten as%
\begin{align}
&  \frac{1}{LM}\sum_{l=1}^{L}\sum_{m=1}^{M}\operatorname{Tr}\{(I_{X^{L}Y^{M}%
C}-\Lambda_{X^{L}Y^{M}C}^{l,m})\rho_{X^{L}Y^{M}C}^{l,m}\}\nonumber\\
&  =\frac{1}{LM}\sum_{l=1}^{L}\sum_{m=1}^{M}\sum_{x^{L},y^{M}}p_{X^{L}}%
(x^{L})p_{Y^{M}}(y^{M})\operatorname{Tr}\{(I_{C}-\Omega_{C}^{x_{l},y_{m}}%
)\rho_{C}^{x_{l},y_{m}}\}\label{eq:derand-1}\\
&  =\sum_{x^{L},y^{M}}p_{X^{L}}(x^{L})p_{Y^{M}}(y^{M})\left[  \frac{1}{LM}%
\sum_{l=1}^{L}\sum_{m=1}^{M}\operatorname{Tr}\{(I_{C}-\Omega_{C}^{x_{l},y_{m}%
})\rho_{C}^{x_{l},y_{m}}\}\right]  . \label{eq:derand-2}%
\end{align}
Suppose now that there exists a randomness-assisted position-based code that
has an average error probability $\leq\varepsilon$. By the above equalities
and since the average can never exceed the maximum, we know there must exist a
particular choice of $x^{L},y^{L}$ such that
\begin{equation}
\frac{1}{LM}\sum_{l=1}^{L}\sum_{m=1}^{M}\operatorname{Tr}\{(I-\Omega
_{C}^{x_{l},y_{m}})\rho_{C}^{x_{l},y_{m}}\}\leq\varepsilon~.
\end{equation}
Thus for an unassisted multiple-access classical communication protocol, if we
choose $\{x_{l}\}_{l=1}^{L}$ as Alice's codebook and $\{y_{m}\}_{m=1}^{M}$ as
Bob's codebook, and the POVM $\{\Omega_{c}^{x_{l},y_{m}}\}$ as Charlie's
decoder, an upper bound on the average probability error is given by
\begin{equation}
\frac{1}{LM}\sum_{l=1}^{L}\sum_{m=1}^{M}p_{e}(l,m)=\frac{1}{LM}\sum_{l=1}%
^{L}\sum_{m=1}^{M}\operatorname{Tr}\{(I-\Omega_{C}^{x_{l},y_{m}})\rho
_{C}^{x_{l},y_{m}}\}\leq\varepsilon~.
\end{equation}
This proves the statement of the theorem after considering the upper bound in \eqref{eq:MAC-one-shot}.
\end{proof}

\subsection{Achievable rate region for i.i.d.~case}

\label{subsubsec:EA-MAC-iid}

We now demonstrate rates that are achievable when using a particular quantum
simultaneous decoder. Interestingly, we show how the same quantum simultaneous
decoder leads to two generally different bounds for the achievable rate
region. In the first one, the rates are bounded by terms which consist of the
difference of a R\'{e}nyi entropy of order two and a conditional quantum
entropy. In the second one, the rates are bounded by terms which consist of
the difference of a collision conditional entropy and a conditional quantum
entropy. Although these rates are suboptimal when compared to what is
achievable by using successive decoding
\cite{winter2001capacity,hsieh2008entanglement}, we nevertheless think that
the following coding theorems represent progress toward finding a quantum
simultaneous decoder.

Before we state the theorems, we require the following definition:\ A rate
pair $(R_{1},R_{2})$ is achievable for communication over a quantum multiple
access channel if there exists a $(2^{n\left[  R_{1}-\delta\right]
},2^{n\left[  R_{2}-\delta\right]  },\varepsilon)$ code for communication over
$\mathcal{N}_{A^{\prime}B^{\prime}\rightarrow C}^{\otimes n}$\ for all
$\varepsilon\in(0,1)$, $\delta>0$, and sufficiently large $n$.

\begin{theorem}
\label{thm:achieve-simul-renyi-2}An achievable rate region for
entanglement-assisted classical communication over a two-sender quantum
multiple-access channel $\mathcal{N}_{AB\rightarrow C}$, by employing a
quantum simultaneous decoder, is as follows:%
\begin{align}
R_{1}  &  \leq\tilde{I}(S;CR)_{\omega}~,\\
R_{2}  &  \leq\tilde{I}(R;CS)_{\omega}~,\\
R_{1}+R_{2}  &  \leq\tilde{I}(RC;S)_{\omega}~,
\end{align}
where $\omega_{RSC}\equiv\mathcal{N}_{AB\rightarrow C}(\theta_{RA}%
\otimes\gamma_{SB})$ and $\theta_{RA}$ and $\gamma_{SB}$ are quantum states.
Here we define the following mutual-information-like quantities:
\begin{align}
\tilde{I}(R;CS)_{\omega}  &  \equiv H_{2}(SC)_{\omega}-H(SC|R)_{\omega
}~,\label{eq:renyi-two-quant-1}\\
\tilde{I}(S;CR)_{\omega}  &  \equiv H_{2}(RC)_{\omega}-H(RC|S)_{\omega}~,\\
\tilde{I}(RS;C)_{\omega}  &  \equiv H_{2}(C)_{\omega}-H(C|RS)_{\omega}~,
\label{eq:renyi-two-quant-3}%
\end{align}
where $H_{2}(A)_{\rho}\equiv-\log_{2}\operatorname{Tr}\{\rho_{A}^{2}\}$ is the
R\'{e}nyi entropy of order two.
\end{theorem}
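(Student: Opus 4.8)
The plan is to apply the one-shot bound \eqref{eq:EA-MAC-one-shot} to the i.i.d.\ channel $\mathcal{N}_{AB\to C}^{\otimes n}$ with shared states $\theta_{RA}^{\otimes n}$ and $\gamma_{SB}^{\otimes n}$, set $c=1$ so that $c_{\operatorname{I}}=2$ and $c_{\operatorname{II}}=4$ in that bound, take $L=\lceil 2^{n(R_{1}-\delta)}\rceil$ and $M=\lceil 2^{n(R_{2}-\delta)}\rceil$, and make a concrete choice of the test operator $T_{R^{n}S^{n}C^{n}}$. Writing $\omega_{RSC}\equiv\mathcal{N}_{AB\to C}(\theta_{RA}\otimes\gamma_{SB})$ and using that the channel acts on neither $R$ nor $S$, so that $\omega_{RS}=\theta_{R}\otimes\gamma_{S}=\omega_{R}\otimes\omega_{S}$, the four trace terms appearing in \eqref{eq:EA-MAC-one-shot} become $\operatorname{Tr}\{(I-T)\omega_{RSC}^{\otimes n}\}$ together with $\operatorname{Tr}\{T(\omega_{R}\otimes\omega_{SC})^{\otimes n}\}$, $\operatorname{Tr}\{T(\omega_{S}\otimes\omega_{RC})^{\otimes n}\}$, and $\operatorname{Tr}\{T(\omega_{RS}\otimes\omega_{C})^{\otimes n}\}$, weighted respectively by $L-1$, $M-1$, and $(L-1)(M-1)$. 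The goal is to pick $T$ so that the first term tends to $0$ and each of the other three is exponentially small, with exponents $\widetilde I(R;CS)_{\omega}$, $\widetilde I(S;CR)_{\omega}$, and $\widetilde I(RS;C)_{\omega}$ from \eqref{eq:renyi-two-quant-1}--\eqref{eq:renyi-two-quant-3}, matched to the error events in the obvious way.

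The choice I would take is $T\equiv P_{RS}\,\Pi_{RSC}\,P_{RS}$, where $\Pi_{RSC}$ is the weakly $\delta$-typical projector of $\omega_{RSC}^{\otimes n}$ and $P_{RS}\equiv\Pi_{R}\Pi_{S}$ is the product of the weakly $\delta$-typical projectors of $\omega_{R}^{\otimes n}$ and $\omega_{S}^{\otimes n}$; these commute, since they act on disjoint systems, so $P_{RS}$ is a projector and $0\le T\le I$. That $\operatorname{Tr}\{(I-T)\omega_{RSC}^{\otimes n}\}\to 0$ follows from \eqref{ieq:typical-unit-prob} applied to $\omega_{R},\omega_{S},\omega_{RSC}$, the estimate $I-P_{RS}\le(I-\Pi_{R})+(I-\Pi_{S})$, the gentle-operator Lemma~\ref{lemma:gentle} (to pass from $P_{RS}\,\omega_{RSC}^{\otimes n}\,P_{RS}$ to $\omega_{RSC}^{\otimes n}$ up to trace distance $O(\sqrt{\varepsilon_{n}})$, where $\varepsilon_{n}\to 0$), and Lemma~\ref{lemma:close}.

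For each of the three product-state terms the mechanism is the same. Conjugating the product state by $P_{RS}$ factorizes cleanly; the equipartition bound \eqref{ieq:typical-equal-partition} converts each decoupled single-system marginal into a scalar, e.g.\ $\Pi_{R}\,\omega_{R}^{\otimes n}\,\Pi_{R}\le 2^{-n[H(R)-\delta]}\Pi_{R}\le 2^{-n[H(R)-\delta]}I$; the outer $\Pi_{RSC}$ is then removed with the projector trick \eqref{ieq:project-trick}, $\Pi_{RSC}\le 2^{n[H(RSC)+\delta]}\omega_{RSC}^{\otimes n}$; and the residual overlap of $\omega_{RSC}^{\otimes n}$ with the remaining (still correlated) marginal is controlled by the elementary inequality $\operatorname{Tr}\{A\sigma A\sigma\}\le\operatorname{Tr}\{\sigma^{2}\}$, valid for $0\le A\le I$, which produces $(\operatorname{Tr}\{\omega_{SC}^{2}\})^{n}=2^{-nH_{2}(SC)}$, or $(\operatorname{Tr}\{\omega_{RC}^{2}\})^{n}$, or $(\operatorname{Tr}\{\omega_{C}^{2}\})^{n}$ according to the case. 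Collecting exponents, and using $H(RS)=H(R)+H(S)$ for the joint term, gives $\operatorname{Tr}\{T(\omega_{R}\otimes\omega_{SC})^{\otimes n}\}\le 2^{-n[H_{2}(SC)-H(SC|R)-O(\delta)]}$, $\operatorname{Tr}\{T(\omega_{S}\otimes\omega_{RC})^{\otimes n}\}\le 2^{-n[H_{2}(RC)-H(RC|S)-O(\delta)]}$, and $\operatorname{Tr}\{T(\omega_{RS}\otimes\omega_{C})^{\otimes n}\}\le 2^{-n[H_{2}(C)-H(C|RS)-O(\delta)]}$; these exponents are precisely the three mutual-information-like quantities in \eqref{eq:renyi-two-quant-1}--\eqref{eq:renyi-two-quant-3}.

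To conclude, I would substitute these three estimates together with the vanishing first term back into \eqref{eq:EA-MAC-one-shot}. Since $L\le 2^{n(R_{1}-\delta)}(1+o(1))$ and $M\le 2^{n(R_{2}-\delta)}(1+o(1))$, each weighted term is then $2^{-n(\cdot)}$ with an exponent that is strictly positive once $R_{1}$, $R_{2}$, and $R_{1}+R_{2}$ obey the three constraints of the theorem and the typicality parameter is taken small relative to $\delta$; hence $p_{e}(l,m)\to 0$, so it lies below any prescribed $\varepsilon$ for $n$ large. The bound is identical for every $(l,m)$, so it controls the average — and in fact the maximal — error probability. The main obstacle, and the reason the region is suboptimal, is exactly the simultaneity requirement: one fixed $T$ must be good against all three wrong-message hypotheses at once, and the simple product construction $P_{RS}\Pi_{RSC}P_{RS}$ pays for this by producing R\'enyi-$2$ entropies $H_{2}$ in place of von Neumann entropies $H$; replacing the step $\operatorname{Tr}\{A\sigma A\sigma\}\le\operatorname{Tr}\{\sigma^{2}\}$ by an application of the square-root projector trick \eqref{eq:sqrt-root-proj-trick} to the decoupled marginals yields collision conditional entropies instead, giving the second, generally incomparable, achievable region. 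The extension to $K$ senders goes through the same way, with $P_{RS}$ replaced by the product $\Pi_{R_{1}}\cdots\Pi_{R_{K}}$ of single-system typical projectors and each term of the sum over non-empty $\mathcal{J}\subseteq[K]$ in \eqref{eq:avg-err-multiple-senders} treated exactly as the three cases above.
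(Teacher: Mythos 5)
Your proposal is correct and follows essentially the same route as the paper: the same coated test operator $(\Pi_{R^{n}}\otimes\Pi_{S^{n}})\Pi_{R^{n}S^{n}C^{n}}(\Pi_{R^{n}}\otimes\Pi_{S^{n}})$ plugged into \eqref{eq:EA-MAC-one-shot}, with the first term handled by typicality, the gentle-operator lemma, and Lemma~\ref{lemma:close}, and the three cross terms handled by the projector trick \eqref{ieq:project-trick}, the equipartition bound \eqref{ieq:typical-equal-partition} on the decoupled marginals, and the inequality $\operatorname{Tr}\{\Pi\sigma\Pi\sigma\}\leq\operatorname{Tr}\{\sigma^{2}\}$ yielding the R\'enyi-2 entropies. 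Your closing remarks on the origin of the suboptimality and on swapping in \eqref{eq:sqrt-root-proj-trick} to obtain the collision-conditional-entropy region likewise match the paper's Theorem~\ref{thm:alt-simul-decode}.
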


\begin{proof}
In our setting, Alice and Bob use an i.i.d.~channel $\mathcal{N}_{A^{\prime
}B^{\prime}\rightarrow C}^{\otimes n}$. In order to bound the error
probability, we replace each state in \eqref{eq:EA-MAC-one-shot} by its
$n$-copy version. Defining $\omega_{RSC}\equiv\mathcal{N}_{AB\rightarrow
C}(\theta_{RA}\otimes\gamma_{SB})$, we choose the test operator $T$ to be the
following `coated' typical projector:%
\begin{equation}
T_{R^{n}S^{n}C^{n}}\equiv(\Pi_{R^{n}}^{\omega,\delta}\otimes\Pi_{S^{n}%
}^{\omega,\delta})\Pi_{R^{n}S^{n}C^{n}}^{\omega,\delta}(\Pi_{R^{n}}%
^{\omega,\delta}\otimes\Pi_{S^{n}}^{\omega,\delta})~,
\label{eq:renyi-2-simul-decode}%
\end{equation}
where $\Pi_{R^{n}}^{\omega,\delta}$, $\Pi_{S^{n}}^{\omega,\delta}$, and
$\Pi_{R^{n}S^{n}C^{n}}^{\omega,\delta}$ are the typical projectors
corresponding to the respective states $\omega_{R}$, $\omega_{S}$, and
$\omega_{RSC}$. Applying \eqref{eq:EA-MAC-one-shot}, we find the following
upper bound on the error probability when decoding the message pair $\left(
l,m\right)  $:%
\begin{multline}
p_{e}(l,m)\leq c_{\operatorname{I}}\operatorname{Tr}\{(I-T_{R^{n}S^{n}C^{n}%
})\omega_{RSC}^{\otimes n}\}+c_{\operatorname{II}}\big[L\operatorname{Tr}%
\{T_{R^{n}S^{n}C^{n}}[\mathcal{N}_{AB\rightarrow C}(\theta_{R}\otimes
\theta_{A}\otimes\gamma_{SB})]^{\otimes n})\}\\
+M\operatorname{Tr}\{T_{R^{n}S^{n}C^{n}}[\mathcal{N}_{AB\rightarrow C}%
(\theta_{RA}\otimes\gamma_{S}\otimes\gamma_{B})]^{\otimes n}\}\\
+LM\operatorname{Tr}\{T_{R^{n}S^{n}C^{n}}[\mathcal{N}_{AB\rightarrow C}%
(\theta_{R}\otimes\theta_{A}\otimes\gamma_{S}\otimes\gamma_{B})]^{\otimes
n}\}\big]~. \label{eq:err_bound}%
\end{multline}
We evaluate each term sequentially. To give an upper bound on the first term,
consider the following chain of inequalities, which holds for sufficiently
large $n$:%
\begin{align}
&  \operatorname{Tr}\{(\Pi_{R^{n}}^{\omega,\delta}\otimes\Pi_{S^{n}}%
^{\omega,\delta})\Pi_{R^{n}S^{n}C^{n}}^{\omega,\delta}(\Pi_{R^{n}}%
^{\omega,\delta}\otimes\Pi_{S^{n}}^{\omega,\delta})\omega_{RSC}^{\otimes
n}\}\nonumber\\
&  \geq\operatorname{Tr}\{\Pi_{R^{n}S^{n}C^{n}}^{\omega,\delta}\omega
_{RSC}^{\otimes n}\}-\left\Vert (\Pi_{R^{n}}^{\omega,\delta}\otimes\Pi_{S^{n}%
}^{\omega,\delta})\omega_{RSC}^{\otimes n}(\Pi_{R^{n}}^{\omega,\delta}%
\otimes\Pi_{S^{n}}^{\omega,\delta})-\omega_{RSC}^{\otimes n}\right\Vert _{1}\\
&  \geq1-\varepsilon-2\sqrt{2\varepsilon}~.
\end{align}
The first inequality follows from Lemma~\ref{lemma:close}. The second
inequality follows from \eqref{ieq:typical-unit-prob}, \cite[Eq.~(81)]%
{hsieh2008entanglement}, and the application of Lemma~\ref{lemma:gentle}. We
then obtain an upper bound on the first term:%
\begin{equation}
\operatorname{Tr}\{(I-T_{R^{n}S^{n}C^{n}})\omega_{RSC}^{\otimes n}%
\}\leq\varepsilon+2\sqrt{2\varepsilon}~.
\end{equation}

Now we consider the second term in \eqref{eq:err_bound}:%
\begin{align}
&  L\operatorname{Tr}\{(\Pi_{R^{n}}^{\omega,\delta}\otimes\Pi_{S^{n}}%
^{\omega,\delta})\Pi_{R^{n}S^{n}C^{n}}^{\omega,\delta}(\Pi_{R^{n}}%
^{\omega,\delta}\otimes\Pi_{S^{n}}^{\omega,\delta})[\mathcal{N}_{AB\rightarrow
C}(\theta_{R}\otimes\theta_{A}\otimes\gamma_{SB})]^{\otimes n})\}\nonumber\\
&  \leq L\ 2^{n[H(RSC)_{\omega}+\delta]}\operatorname{Tr}\{(\Pi_{R^{n}%
}^{\omega,\delta}\otimes\Pi_{S^{n}}^{\omega,\delta})\omega_{RSC}^{\otimes
n}(\Pi_{R^{n}}^{\omega,\delta}\otimes\Pi_{S^{n}}^{\omega,\delta})\theta
_{R}^{\otimes n}\otimes\lbrack\mathcal{N}_{AB\rightarrow C}(\theta_{A}%
\otimes\gamma_{SB})]^{\otimes n})\}\\
&  =L\ 2^{n[H(RSC)_{\omega}+\delta]}\operatorname{Tr}\{\Pi_{S^{n}}%
^{\omega,\delta}\omega_{RSC}^{\otimes n}\Pi_{S^{n}}^{\omega,\delta}\left[
(\Pi_{R^{n}}^{\omega,\delta}\theta_{R}^{\otimes n}\Pi_{R^{n}}^{\omega,\delta
})\otimes\lbrack\mathcal{N}_{AB\rightarrow C}(\theta_{A}\otimes\gamma
_{SB})]^{\otimes n}\right]  \}\\
&  \leq L\ 2^{n[H(RSC)_{\omega}+\delta]}2^{-n[H(R)_{\omega}-\delta
]}\operatorname{Tr}\{\Pi_{S^{n}}^{\omega,\delta}\omega_{RSC}^{\otimes n}%
\Pi_{S^{n}}^{\omega,\delta}\left[  I_{R^{n}}\otimes\lbrack\mathcal{N}%
_{AB\rightarrow C}(\theta_{A}\otimes\gamma_{SB})]^{\otimes n}\right]  \}\\
&  \leq L\ 2^{n[H(SC|R)_{\omega}+2\delta]}\operatorname{Tr}\{\Pi_{S^{n}%
}^{\omega,\delta}\omega_{SC}^{\otimes n}\Pi_{S^{n}}^{\omega,\delta}\omega
_{SC}^{\otimes n}\}\label{eq:cond-coll-pickup}\\
&  \leq L\ 2^{n[H(SC|R)_{\omega}+2\delta]}\operatorname{Tr}\{(\omega
_{SC}^{\otimes n})^{2}\}\\
&  =2^{nR_{1}}2^{n[H(SC|R)_{\omega}+2\delta]}2^{-nH_{2}(SC)_{\omega}}\\
&  =2^{-n[H_{2}(SC)_{\omega}-H(SC|R)_{\omega}-R_{1}-2\delta]}~.
\end{align}
The first inequality follows from the application of the projector trick
inequality from \eqref{ieq:project-trick} to the state $\omega_{RSC}^{\otimes
n}$. The first equality follows from cyclicity of trace. The second inequality
follows from the right-hand side of \eqref{ieq:typical-equal-partition}, the
fact that $\theta_{R}=\omega_{R}$, and the inequality $\Pi_{R^{n}}%
^{\omega,\delta}\leq I_{R^{n}}$. The third inequality follows from a partial
trace over the $R^{n}$ systems. The fourth inequality follows because%
\begin{equation}
\operatorname{Tr}\{\Pi_{S^{n}}^{\omega,\delta}\omega_{SC}^{\otimes n}%
\Pi_{S^{n}}^{\omega,\delta}\omega_{SC}^{\otimes n}\}\leq\operatorname{Tr}%
\{\omega_{SC}^{\otimes n}\Pi_{S^{n}}^{\omega,\delta}\omega_{SC}^{\otimes
n}\}=\operatorname{Tr}\{(\omega_{SC}^{\otimes n})^{2}\Pi_{S^{n}}%
^{\omega,\delta}\}\leq\operatorname{Tr}\{(\omega_{SC}^{\otimes n})^{2}\},
\end{equation}
which is a consequence of the facts that $\Pi_{S^{n}}^{\omega,\delta}\leq
I_{S^{n}}$ and $\omega_{SC}^{\otimes n}\Pi_{S^{n}}^{\omega,\delta}\omega
_{SC}^{\otimes n}$ and $(\omega_{SC}^{\otimes n})^{2}$ are positive
semi-definite. Finally, the second equality follows from the fact
$L=2^{nR_{1}}$ and the definition of R\'{e}nyi entropy of order two.

Following a similar analysis, we obtain the following upper bounds for the
other two terms:%
\begin{align}
M\operatorname{Tr}\{T_{R^{n}S^{n}C^{n}}[\mathcal{N}_{AB\rightarrow C}%
(\theta_{RA}\otimes\gamma_{S}\otimes\gamma_{B})]^{\otimes n}\}  &
\leq2^{-n[H_{2}(RC)_{\omega}-H(RC|S)_{\omega}-R_{2}-2\delta]}~,\\
LM\operatorname{Tr}\{T_{R^{n}S^{n}C^{n}}[\mathcal{N}_{AB\rightarrow C}%
(\theta_{R}\otimes\theta_{A}\otimes\gamma_{S}\otimes\gamma_{B})]^{\otimes
n}\}  &  \leq2^{-n[H_{2}(C)_{\omega}-H(C|RS)_{\omega}-(R_{1}+R_{2})-3\delta
]}~.
\end{align}

Taking the sum of the upper bounds for the above four terms, we find the
following upper bound on the error probability when decoding the message pair
$(l,m)$:%
\begin{multline}
p_{e}(l,m)\leq\varepsilon+2\sqrt{2\varepsilon}+2^{-n[H_{2}(SC)_{\omega
}-H(SC|R)_{\omega}-R_{1}-2\delta]}\\
+2^{-n[H_{2}(RC)_{\omega}-H(RC|S)_{\omega}-R_{2}-2\delta]}+2^{-n[H_{2}%
(C)_{\omega}-H(C|RS)_{\omega}-(R_{1}+R_{2})-3\delta]}~.
\end{multline}
Thus, if the rate pair $(R_{1},R_{2})$ satisfies the following inequalities
(related to those in the statement of the theorem)%
\begin{align}
R_{1}+3\delta &  \leq\tilde{I}(S;CR)_{\omega}~,\\
R_{2}+3\delta &  \leq\tilde{I}(R;CS)_{\omega}~,\\
R_{1}+R_{2}+4\delta &  \leq\tilde{I}(RC;S)_{\omega}~,
\end{align}
then the error probability can be made arbitrarily small with increasing $n$.
However, since $\delta$ can be taken arbitrarily small after the limit of
large $n$, we conclude that the rate region given in the statement of the
theorem is achievable.
\end{proof}

Our results can be easily extended to the multiple-sender scenario, which we
state below without explicitly writing down a proof (note that the proof is a
straightforward generalization of the above analysis for two senders).

\begin{theorem}
\label{thm:mult-send-gen}An achievable rate region for entanglement-assisted
classical communication over a $K$-sender multiple-access quantum channel
$\mathcal{N}_{A_{1}A_{2},...A_{K}\rightarrow C}$, by employing a quantum
simultaneous decoder, is given by the following:%
\begin{equation}
\sum_{j\in\mathcal{J}}R_{j}\leq\tilde{I}(S(\mathcal{J});CS(\mathcal{J}%
^{c}))_{\omega},\qquad\text{for every}~\mathcal{J}\subseteq\lbrack K]~,
\label{eq:info-bound-1-mac}%
\end{equation}
where $\omega_{S_{1}\cdots S_{K}C}\equiv\mathcal{N}_{A_{1}\cdots
A_{K}\rightarrow C}(\theta_{S_{1}A_{1}}\otimes\cdots\otimes\theta_{S_{K}A_{K}%
})$. Here we define mutual-information-like quantities
\begin{equation}
\tilde{I}(S(\mathcal{J});CS(\mathcal{J}^{c}))_{\omega}\equiv H_{2}%
(S(\mathcal{J}^{c})C)_{\omega}-H(S(\mathcal{J}^{c})C|S(\mathcal{J}))_{\omega
}~,
\end{equation}
where $H_{2}(B)_{\rho}=-\log_{2}\operatorname{Tr}\{\rho_{B}^{2}\}$ is the
R\'{e}nyi entropy of order two.
\end{theorem}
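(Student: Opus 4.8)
The plan is to run the position-based simultaneous-decoding scheme of Section~\ref{sec:one-shot-position-QMAC}, now for the $K$-sender channel $\mathcal{N}_{A_{1}\cdots A_{K}\rightarrow C}^{\otimes n}$: sender $j$ shares $M_{j}=2^{nR_{j}}$ copies of a state $\theta_{S_{j}A_{j}}$ with the receiver and transmits the appropriate share, and the receiver applies the square-root position-based decoder built from a test operator $T_{S_{1}^{n}\cdots S_{K}^{n}C^{n}}$. By the $K$-sender one-shot bound \eqref{eq:avg-err-multiple-senders} (with every state replaced by its $n$-fold tensor power and the reference/input systems relabelled $S$/$A$), the average error probability is controlled by one $c_{\operatorname{I}}$ term and one $c_{\operatorname{II}}$ term for each nonempty $\mathcal{J}\subseteq[K]$ (the empty set giving the $c_{\operatorname{I}}$ term). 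Setting $\omega_{S_{1}\cdots S_{K}C}\equiv\mathcal{N}_{A_{1}\cdots A_{K}\rightarrow C}(\bigotimes_{k=1}^{K}\theta_{S_{k}A_{k}})$, I would take the $K$-sender analogue of the coated typical projector in \eqref{eq:renyi-2-simul-decode},
\begin{equation}
T_{S_{1}^{n}\cdots S_{K}^{n}C^{n}}\equiv\left(  \bigotimes_{k=1}^{K}\Pi_{S_{k}^{n}}^{\omega,\delta}\right)  \Pi_{S_{1}^{n}\cdots S_{K}^{n}C^{n}}^{\omega,\delta}\left(  \bigotimes_{k=1}^{K}\Pi_{S_{k}^{n}}^{\omega,\delta}\right)  ,
\end{equation}
where $\Pi_{S_{k}^{n}}^{\omega,\delta}$ is the typical projector of the marginal $\omega_{S_{k}}$ and $\Pi_{S_{1}^{n}\cdots S_{K}^{n}C^{n}}^{\omega,\delta}$ that of $\omega_{S_{1}\cdots S_{K}C}$; note $\omega_{S_{1}\cdots S_{K}}=\bigotimes_{k}\omega_{S_{k}}$ since $\mathcal{N}$ is trace preserving on the $A$ systems, which will be used in the entropy bookkeeping below.

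The $c_{\operatorname{I}}$ term $\operatorname{Tr}\{(I-T)\omega^{\otimes n}\}$ is handled exactly as in the proof of Theorem~\ref{thm:achieve-simul-renyi-2}: by cyclicity, typicality \eqref{ieq:typical-unit-prob} for $\Pi_{S_{1}^{n}\cdots S_{K}^{n}C^{n}}^{\omega,\delta}$, Lemma~\ref{lemma:close}, and $K$ successive applications of the gentle-operator Lemma~\ref{lemma:gentle} (one per marginal projector, each of which detects $\omega^{\otimes n}$ with probability at least $1-\varepsilon$), it is at most $\varepsilon+O(K\sqrt{\varepsilon})$ for large $n$. For a fixed nonempty $\mathcal{J}\subseteq[K]$, the corresponding $c_{\operatorname{II}}$ term involves the mis-decoding state $[\mathcal{N}_{A_{1}\cdots A_{K}\rightarrow C}(\bigotimes_{j\in\mathcal{J}}\theta_{S_{j}}\otimes\theta_{A_{j}}\otimes\bigotimes_{l\in\mathcal{J}^{c}}\theta_{S_{l}A_{l}})]^{\otimes n}=(\bigotimes_{j\in\mathcal{J}}\omega_{S_{j}}\otimes\omega_{S(\mathcal{J}^{c})C})^{\otimes n}$, with prefactor $\prod_{j\in\mathcal{J}}(M_{j}-1)\leq 2^{n\sum_{j\in\mathcal{J}}R_{j}}$. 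I would bound it by copying the two-sender computation: apply the projector-trick inequality \eqref{ieq:project-trick} to $\Pi_{S_{1}^{n}\cdots S_{K}^{n}C^{n}}^{\omega,\delta}$, picking up $2^{n[H(S_{1}\cdots S_{K}C)_{\omega}+\delta]}$; then, since the marginal projectors commute pairwise and each $\Pi_{S_{j}^{n}}^{\omega,\delta}$ with $j\in\mathcal{J}$ acts on a system in product form in the mis-decoding state, use cyclicity to move it onto the corresponding $\omega_{S_{j}}^{\otimes n}$ factor and bound $\Pi_{S_{j}^{n}}^{\omega,\delta}\,\omega_{S_{j}}^{\otimes n}\,\Pi_{S_{j}^{n}}^{\omega,\delta}\leq 2^{-n[H(S_{j})_{\omega}-\delta]}\Pi_{S_{j}^{n}}^{\omega,\delta}\leq 2^{-n[H(S_{j})_{\omega}-\delta]}I$ via the equipartition property in \eqref{ieq:typical-equal-partition}; tracing out the $S(\mathcal{J})$ systems leaves the collision-type quantity $\operatorname{Tr}\{\Pi_{S(\mathcal{J}^{c})^{n}}^{\omega,\delta}\,\omega_{S(\mathcal{J}^{c})C}^{\otimes n}\,\Pi_{S(\mathcal{J}^{c})^{n}}^{\omega,\delta}\,\omega_{S(\mathcal{J}^{c})C}^{\otimes n}\}\leq\operatorname{Tr}\{(\omega_{S(\mathcal{J}^{c})C}^{\otimes n})^{2}\}=2^{-nH_{2}(S(\mathcal{J}^{c})C)_{\omega}}$, using $\Pi\leq I$ and positivity.

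Collecting exponents and invoking the chain rule together with $\omega_{S_{1}\cdots S_{K}}=\bigotimes_{k}\omega_{S_{k}}$, i.e. $H(S_{1}\cdots S_{K}C)_{\omega}-\sum_{j\in\mathcal{J}}H(S_{j})_{\omega}=H(S(\mathcal{J}^{c})C\,|\,S(\mathcal{J}))_{\omega}$, the $\mathcal{J}$-term is at most $2^{-n[\tilde{I}(S(\mathcal{J});CS(\mathcal{J}^{c}))_{\omega}-\sum_{j\in\mathcal{J}}R_{j}-(|\mathcal{J}|+1)\delta]}$. Summing the $c_{\operatorname{I}}$ term and the finitely many $c_{\operatorname{II}}$ terms, the average error probability tends to $0$ as $n\rightarrow\infty$ provided $\sum_{j\in\mathcal{J}}R_{j}+(|\mathcal{J}|+1)\delta\leq\tilde{I}(S(\mathcal{J});CS(\mathcal{J}^{c}))_{\omega}$ for every nonempty $\mathcal{J}\subseteq[K]$; letting $\delta\rightarrow0$ after the large-$n$ limit yields the claimed rate region. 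The only real work is the commutation-and-cyclicity bookkeeping inside the coated-projector estimate and keeping track of the $2^{K}-1$ subset constraints; I do not anticipate a genuine obstacle, since every step is a direct extension of the two-sender argument and the single square-root decoder built from $T$ simultaneously handles all mis-decoding events.
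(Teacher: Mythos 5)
Your proposal is correct and follows exactly the route the paper intends: the paper states Theorem~\ref{thm:mult-send-gen} without proof, remarking only that it is a straightforward generalization of the two-sender analysis in Theorem~\ref{thm:achieve-simul-renyi-2}, and your argument is precisely that generalization (the $K$-fold coated typical projector, the projector trick on $\Pi_{S_{1}^{n}\cdots S_{K}^{n}C^{n}}^{\omega,\delta}$, the equipartition bound $2^{-n[H(S_{j})_{\omega}-\delta]}$ for each $j\in\mathcal{J}$, and the collision-entropy bound $\operatorname{Tr}\{(\omega_{S(\mathcal{J}^{c})C}^{\otimes n})^{2}\}$), with the exponent bookkeeping $H(S_{1}\cdots S_{K}C)_{\omega}-\sum_{j\in\mathcal{J}}H(S_{j})_{\omega}=H(S(\mathcal{J}^{c})C|S(\mathcal{J}))_{\omega}$ correctly using the product form of $\omega_{S_{1}\cdots S_{K}}$. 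No gaps.
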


By combining Theorems~\ref{thm:assisted-to-unassisted}\ and
\ref{thm:mult-send-gen}, we obtain the following rate region that is
achievable for unassisted classical communication over a quantum
multiple-access channel when using a quantum simultaneous decoder:

\begin{theorem}
\label{thm:cc-mac-renyi-gen}The following rate region is achievable when using
a quantum simultaneous decoder for unassisted classical communication over the
$K$-sender, classical-input quantum-output multiple-access channel
$x_{1},\ldots,x_{K}\rightarrow\rho_{x_{1},\ldots,x_{K}}$:%
\begin{equation}
\sum_{j\in\mathcal{J}}R_{j}\leq\tilde{I}(X(\mathcal{J});CX(\mathcal{J}%
^{c}))_{\omega},\qquad\text{for every}~\mathcal{J}\subseteq\lbrack K]~,
\end{equation}
where%
\begin{equation}
\omega_{X_{1}\cdots X_{K}C}\equiv\sum_{x_{1},\ldots,x_{K}}p_{X_{1}}%
(x_{1})\cdots p_{X_{K}}(x_{K})|x_{1}\rangle\langle x_{1}|_{X_{1}}\otimes
\cdots\otimes|x_{K}\rangle\langle x_{K}|_{X_{K}}\otimes\rho_{x_{1}%
,\ldots,x_{K}}.
\end{equation}

\end{theorem}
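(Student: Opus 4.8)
The plan is to derive Theorem~\ref{thm:cc-mac-renyi-gen} by combining the entanglement-assisted rate region of Theorem~\ref{thm:mult-send-gen} with the derandomization procedure of Theorem~\ref{thm:assisted-to-unassisted}. First I would instantiate Theorem~\ref{thm:mult-send-gen} with the shared ``entangled'' states $\theta_{S_k A_k}$ taken to be the maximally classically correlated states $\rho_{X_k X_k'}\equiv\sum_{x_k}p_{X_k}(x_k)|x_k\rangle\langle x_k|_{X_k}\otimes|x_k\rangle\langle x_k|_{X_k'}$ and the channel $\mathcal{N}_{A_1\cdots A_K\rightarrow C}$ taken to be the classical-input quantum-output channel $\mathcal{N}_{X_1'\cdots X_K'\rightarrow C}$. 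Since a classical--classical state is a perfectly good quantum state, and since the proof of Theorem~\ref{thm:mult-send-gen} (the $K$-sender version of the argument establishing Theorem~\ref{thm:achieve-simul-renyi-2}) uses only the coated typical projector, the projector trick \eqref{ieq:project-trick}, Lemma~\ref{lemma:gentle}, and Lemma~\ref{lemma:close}---none of which needs purity or full rank of the inputs---this substitution is legitimate. Under it, the receiver's state on $S_1\cdots S_K C$ is exactly the classical-quantum state $\omega_{X_1\cdots X_K C}$ in the statement, and the entropic quantities of Theorem~\ref{thm:mult-send-gen} become $\tilde I(X(\mathcal{J});CX(\mathcal{J}^c))_\omega=H_2(X(\mathcal{J}^c)C)_\omega-H(X(\mathcal{J}^c)C|X(\mathcal{J}))_\omega$ simply by re-evaluation. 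Hence the stated region is achievable by a \emph{randomness-assisted}, position-based, quantum simultaneous decoding protocol.

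It then remains to remove the shared randomness, and this is exactly what the multiple-sender version of Theorem~\ref{thm:assisted-to-unassisted} accomplishes. The point that makes the derandomization applicable is that, for this choice of states, the coated typical projector test operator---the $K$-sender analogue of $T$ in \eqref{eq:renyi-2-simul-decode}---is automatically block diagonal in the product bases $\{|x_1^n\rangle\}\otimes\cdots\otimes\{|x_K^n\rangle\}$, because the marginals $\omega_{X_k}$ and the reduced states underlying $\omega_{X(\mathcal{J}^c)C}$ are all classical on the $X$ registers, so the typical projectors appearing in $T$ are diagonal in those bases. Consequently $T$ has precisely the form $\sum_{x_1^n,\ldots,x_K^n}|x_1^n\rangle\langle x_1^n|\otimes\cdots\otimes|x_K^n\rangle\langle x_K^n|\otimes Q_C^{x_1^n,\ldots,x_K^n}$ assumed in Theorem~\ref{thm:assisted-to-unassisted}, the induced square-root decoding POVM decomposes the same way, and the average error probability rewrites as an average over classical codewords drawn i.i.d.\ from $p_{X_1}\cdots p_{X_K}$ of a per-codebook error. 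Since the average cannot exceed the maximum, some fixed choice of codebooks for each sender together with the corresponding receiver POVM achieves at most the same average error, giving an unassisted code with the same bound. Applying this to $\mathcal{N}^{\otimes n}$, setting rates below the stated region with slack $O(\delta)$, and taking $n\to\infty$ then $\delta\to0$ as in the definition of achievability completes the argument.

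I expect the only real work to be the two bookkeeping verifications just described: that the test operator built from typical projectors is block diagonal in the senders' classical bases (so that Theorem~\ref{thm:assisted-to-unassisted} literally applies), and that the $K$-sender derandomization lines up index-for-index with the $K$-sender rate region of Theorem~\ref{thm:mult-send-gen}. Both are routine consequences of the classical structure of the inputs; there is no genuinely new estimate to prove.
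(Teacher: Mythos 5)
Your proposal is correct and follows the paper's own route exactly: the paper obtains Theorem~\ref{thm:cc-mac-renyi-gen} precisely by instantiating Theorem~\ref{thm:mult-send-gen} with classical--classical shared states and then invoking the derandomization of Theorem~\ref{thm:assisted-to-unassisted}. The only remark worth making is that your block-diagonality check on the coated typical projector, while true, is not actually needed: Theorem~\ref{thm:assisted-to-unassisted} is stated for an arbitrary test operator $0\leq T\leq I$, because its proof pinches $T$ to $Q_{C}^{x,y}\equiv\langle x,y|T|x,y\rangle$ without changing the error bound (the relevant states being already classical on the $X$ registers).
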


The following alternative achievable rate region is generally different from
the one in Theorem~\ref{thm:mult-send-gen}:

\begin{theorem}
\label{thm:alt-simul-decode}An achievable rate region for
entanglement-assisted classical communication over the $K$-sender quantum
multiple-access channel $\mathcal{N}_{A_{1}A_{2},...A_{K}\rightarrow C}$, by
employing a quantum simultaneous decoder, is given by the following:%
\begin{equation}
\sum_{j\in\mathcal{J}}R_{j}\leq I^{\prime}(S(\mathcal{J});CS(\mathcal{J}%
^{c}))_{\omega},\qquad\text{for every}~\mathcal{J}\subseteq\lbrack K]~,
\label{eq:info-bound-2-mac}%
\end{equation}
where $\omega_{S_{1}\cdots S_{K}C}\equiv\mathcal{N}_{A_{1}\cdots
A_{K}\rightarrow C}(\theta_{S_{1}A_{1}}\otimes\cdots\otimes\theta_{S_{K}A_{K}%
})$. Here we define mutual-information-like quantities
\begin{equation}
I^{\prime}(S(\mathcal{J});CS(\mathcal{J}^{c}))_{\omega}\equiv H_{2}%
(C|S(\mathcal{J}^{c}))_{\omega}-H(C|S_{1}\cdots S_{K})_{\omega}~,
\end{equation}
where $H_{2}(A|B)_{\rho}=-\log_{2}\operatorname{Tr}\{\rho_{AB}\rho_{B}%
^{-1/2}\rho_{AB}\rho_{B}^{-1/2}\}$ is the collision conditional entropy
\cite{DFW15}.
\end{theorem}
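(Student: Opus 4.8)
The plan is to reuse, essentially verbatim, the position-based simultaneous-decoding scheme and error analysis from the proof of Theorem~\ref{thm:achieve-simul-renyi-2} (in its straightforward $K$-sender form, cf.\ Theorem~\ref{thm:mult-send-gen}), and to change only the final step in which one bounds the ``collision'' traces: instead of discarding the coating projectors to produce the unconditional R\'enyi-$2$ entropy $H_2(S(\mathcal{J}^c)C)_{\omega}$, one keeps them and converts them into the factor $(\omega_{S(\mathcal{J}^c)}^{\otimes n})^{-1/2}$ via the square-root projector trick \eqref{eq:sqrt-root-proj-trick}, thereby producing the collision \emph{conditional} entropy $H_2(C|S(\mathcal{J}^c))_{\omega}$. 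So I would plug the memoryless channel $\mathcal{N}_{A_1\cdots A_K\rightarrow C}^{\otimes n}$ and the tensor-power states into the $K$-sender bound \eqref{eq:avg-err-multiple-senders}, keep the same test operator
\begin{equation}
T_{S_1^n\cdots S_K^nC^n}\equiv\Big(\bigotimes_{k=1}^{K}\Pi_{S_k^n}^{\omega,\delta}\Big)\,\Pi_{S_1^n\cdots S_K^nC^n}^{\omega,\delta}\,\Big(\bigotimes_{k=1}^{K}\Pi_{S_k^n}^{\omega,\delta}\Big),
\end{equation}
and record two elementary facts used below: since $\mathcal{N}_{A_1\cdots A_K\rightarrow C}$ is trace preserving, the marginal $\omega_{S_1\cdots S_K}$ is product across the senders, so $H(S_1\cdots S_K)_{\omega}=\sum_k H(S_k)_{\omega}$ and $\omega_{S(\mathcal{J}^c)}=\bigotimes_{l\in\mathcal{J}^c}\omega_{S_l}$; and each of the ``wrong'' states appearing in \eqref{eq:avg-err-multiple-senders} equals $\bigotimes_{j\in\mathcal{J}}\omega_{S_j}^{\otimes n}\otimes\omega_{S(\mathcal{J}^c)C}^{\otimes n}$.

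For the first term $\operatorname{Tr}\{(I-T)\omega^{\otimes n}\}$ the argument of Theorem~\ref{thm:achieve-simul-renyi-2} applies unchanged: by Lemma~\ref{lemma:close} and $K$-fold application of the gentle-operator Lemma~\ref{lemma:gentle}, this term is $O(\sqrt{\varepsilon})$. For each nonempty $\mathcal{J}\subseteq[K]$, I would apply the projector trick \eqref{ieq:project-trick} to $\Pi_{S_1^n\cdots S_K^nC^n}^{\omega,\delta}$, then for every $j\in\mathcal{J}$ use the right-hand side of \eqref{ieq:typical-equal-partition} together with $\Pi\le I$ to replace $\Pi_{S_j^n}^{\omega,\delta}\,\omega_{S_j}^{\otimes n}\,\Pi_{S_j^n}^{\omega,\delta}$ by $2^{-n[H(S_j)_{\omega}-\delta]}I_{S_j^n}$, and then trace out the $S(\mathcal{J})^n$ systems. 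Exactly as in the cited proof, this yields the prefactor $2^{n[H(S_1\cdots S_KC)_{\omega}-\sum_{j\in\mathcal{J}}H(S_j)_{\omega}+O(\delta)]}$ times the collision term $\operatorname{Tr}\{(\bigotimes_{l\in\mathcal{J}^c}\Pi_{S_l^n}^{\omega,\delta})\,\omega_{S(\mathcal{J}^c)C}^{\otimes n}\,(\bigotimes_{l\in\mathcal{J}^c}\Pi_{S_l^n}^{\omega,\delta})\,\omega_{S(\mathcal{J}^c)C}^{\otimes n}\}$.

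Here is the one new move. By \eqref{eq:sqrt-root-proj-trick} and the product structure of $\omega_{S(\mathcal{J}^c)}$, one has $\bigotimes_{l\in\mathcal{J}^c}\Pi_{S_l^n}^{\omega,\delta}\le 2^{-n[H(S(\mathcal{J}^c))_{\omega}-O(\delta)]/2}\,(\omega_{S(\mathcal{J}^c)}^{\otimes n})^{-1/2}$; substituting this upper bound for each of the two coating factors in the collision term is order preserving, because $\operatorname{Tr}\{X\cdot Y\}$ is monotone over positive semi-definite $X$ whenever $Y\ge 0$ (applied first with $Y=\omega_{S(\mathcal{J}^c)C}^{\otimes n}(\bigotimes_{l}\Pi_{S_l^n}^{\omega,\delta})\omega_{S(\mathcal{J}^c)C}^{\otimes n}$ and then, after using cyclicity of trace to reorder, with $Y=\omega_{S(\mathcal{J}^c)C}^{\otimes n}(\omega_{S(\mathcal{J}^c)}^{\otimes n})^{-1/2}\omega_{S(\mathcal{J}^c)C}^{\otimes n}$). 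The collision term is then at most $2^{-n[H(S(\mathcal{J}^c))_{\omega}-O(\delta)]}\operatorname{Tr}\{(\omega_{S(\mathcal{J}^c)}^{\otimes n})^{-1/2}\omega_{S(\mathcal{J}^c)C}^{\otimes n}(\omega_{S(\mathcal{J}^c)}^{\otimes n})^{-1/2}\omega_{S(\mathcal{J}^c)C}^{\otimes n}\}=2^{-n[H(S(\mathcal{J}^c))_{\omega}+H_2(C|S(\mathcal{J}^c))_{\omega}-O(\delta)]}$, the last equality being the definition of collision conditional entropy together with its additivity on tensor powers. Multiplying by the prefactor and using $\sum_{j\in\mathcal{J}}H(S_j)_{\omega}+H(S(\mathcal{J}^c))_{\omega}=H(S_1\cdots S_K)_{\omega}$ and $H(S_1\cdots S_KC)_{\omega}-H(S_1\cdots S_K)_{\omega}=H(C|S_1\cdots S_K)_{\omega}$, the exponent collapses to $-n[H_2(C|S(\mathcal{J}^c))_{\omega}-H(C|S_1\cdots S_K)_{\omega}-O(\delta)]=-n[I'(S(\mathcal{J});CS(\mathcal{J}^c))_{\omega}-O(\delta)]$, so the $\mathcal{J}$-term of \eqref{eq:avg-err-multiple-senders} is at most $c_{\operatorname{II}}\,2^{n\sum_{j\in\mathcal{J}}R_j}\,2^{-n[I'(S(\mathcal{J});CS(\mathcal{J}^c))_{\omega}-O(\delta)]}$.

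Summing the $2^K$ contributions, the average error probability vanishes as $n\to\infty$ provided $\sum_{j\in\mathcal{J}}R_j+O(\delta)<I'(S(\mathcal{J});CS(\mathcal{J}^c))_{\omega}$ for every nonempty $\mathcal{J}\subseteq[K]$; letting $\delta\to 0$ after the large-$n$ limit gives the region \eqref{eq:info-bound-2-mac}. The one place requiring care—and the main obstacle—is exactly the operator bookkeeping in the collision step: verifying that feeding \eqref{eq:sqrt-root-proj-trick} into a ``two-sided'' trace $\operatorname{Tr}\{\Pi M\Pi M\}$ is order-preserving, and that the $\omega^{-1/2}$ factors it produces recombine, via $\omega_{S(\mathcal{J}^c)}=\bigotimes_{l\in\mathcal{J}^c}\omega_{S_l}$ and cyclicity, into precisely the expression defining $H_2(C|S(\mathcal{J}^c))_{\omega}$. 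Everything else is a line-by-line repetition of the proof of Theorem~\ref{thm:achieve-simul-renyi-2}.
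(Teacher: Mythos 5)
Your proposal is correct and follows essentially the same route as the paper's proof: the paper likewise reuses the coated-projector decoder from Theorem~\ref{thm:achieve-simul-renyi-2} and, starting at \eqref{eq:cond-coll-pickup}, applies \eqref{eq:sqrt-root-proj-trick} twice (with exactly the positivity-and-cyclicity justification you flag) to convert the coating projectors into $(\omega_{S}^{\otimes n})^{-1/2}$ factors and hence into the collision conditional entropy, before combining exponents via $H(S)_{\omega}=H(S|R)_{\omega}$. The only cosmetic difference is that you carry out the argument for general $K$ and arbitrary $\mathcal{J}$ directly, whereas the paper writes it for two senders and notes that the extension is straightforward.
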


\begin{proof}
We prove this theorem for the case of two senders, and then the extension to
three or more senders is straightforward. The analysis proceeds similarly as
in the proof of Theorem~\ref{thm:achieve-simul-renyi-2}, and we pick up at
\eqref{eq:cond-coll-pickup}. We find that%
\begin{align}
&  L\ 2^{n[H(SC|R)_{\omega}+2\delta]}\operatorname{Tr}\{\Pi_{S^{n}}%
^{\omega,\delta}\omega_{SC}^{\otimes n}\Pi_{S^{n}}^{\omega,\delta}\omega
_{SC}^{\otimes n}\}\nonumber\\
&  \leq L\ 2^{n[H(SC|R)_{\omega}+2\delta]}2^{-n\left[  H(S)_{\omega}%
-\delta\right]  }\operatorname{Tr}\{(\omega_{S}^{\otimes n})^{-1/2}\omega
_{SC}^{\otimes n}(\omega_{S}^{\otimes n})^{-1/2}\omega_{SC}^{\otimes n}\}\\
&  =2^{nR_{1}}2^{n[H(C|RS)_{\omega}+3\delta]}\left[  \operatorname{Tr}%
\{\omega_{S}^{-1/2}\omega_{SC}\omega_{S}^{-1/2}\omega_{SC}\}\right]
^{n}\label{eq:penul}\\
&  =2^{-n[H_{2}(C|S)_{\omega}-H(C|RS)_{\omega}-R_{1}-3\delta]}~.
\label{eq:equal-cond-coll-ent}%
\end{align}
The inequality follows because%
\begin{align}
&  \operatorname{Tr}\{\Pi_{S^{n}}^{\omega,\delta}\omega_{SC}^{\otimes n}%
\Pi_{S^{n}}^{\omega,\delta}\omega_{SC}^{\otimes n}\}\nonumber\\
&  \leq2^{-n\left[  H(S)_{\omega}-\delta\right]  /2}\operatorname{Tr}%
\{(\omega_{S}^{\otimes n})^{-1/2}\omega_{SC}^{\otimes n}\Pi_{S^{n}}%
^{\omega,\delta}\omega_{SC}^{\otimes n}\}\\
&  =2^{-n\left[  H(S)_{\omega}-\delta\right]  /2}\operatorname{Tr}%
\{\omega_{SC}^{\otimes n}(\omega_{S}^{\otimes n})^{-1/2}\omega_{SC}^{\otimes
n}\Pi_{S^{n}}^{\omega,\delta}\}\\
&  \leq2^{-n\left[  H(S)_{\omega}-\delta\right]  /2}2^{-n\left[  H(S)_{\omega
}-\delta\right]  /2}\operatorname{Tr}\{\omega_{SC}^{\otimes n}(\omega
_{S}^{\otimes n})^{-1/2}\omega_{SC}^{\otimes n}(\omega_{S}^{\otimes n}%
)^{-1/2}\},
\end{align}
where we apply \eqref{eq:sqrt-root-proj-trick} twice and the facts that
$\omega_{SC}^{\otimes n}\Pi_{S^{n}}^{\omega,\delta}\omega_{SC}^{\otimes n}%
\geq0$ and $\omega_{SC}^{\otimes n}(\omega_{S}^{\otimes n})^{-1/2}\omega
_{SC}^{\otimes n}\geq0$. The equality in \eqref{eq:penul} follows because
$H(S)_{\omega}=H(S|R)_{\omega}$ for the state $\omega$ and then because
$H(SC|R)_{\omega}-H(S|R)_{\omega}=H(C|RS)_{\omega}$. The equality in
\eqref{eq:equal-cond-coll-ent} follows from the definition of the collision
conditional entropy.

Following a similar analysis, we obtain the following upper bounds for the
other two terms:%
\begin{align}
M\operatorname{Tr}\{T_{R^{n}S^{n}C^{n}}[\mathcal{N}_{AB\rightarrow C}%
(\theta_{RA}\otimes\gamma_{S}\otimes\gamma_{B})]^{\otimes n}\}  &
\leq2^{-n[H_{2}(C|R)_{\omega}-H(C|RS)_{\omega}-R_{2}-3\delta]}~,\\
LM\operatorname{Tr}\{T_{R^{n}S^{n}C^{n}}[\mathcal{N}_{AB\rightarrow C}%
(\theta_{R}\otimes\theta_{A}\otimes\gamma_{S}\otimes\gamma_{B})]^{\otimes
n}\}  &  \leq2^{-n[H_{2}(C)_{\omega}-H(C|RS)_{\omega}-(R_{1}+R_{2})-3\delta
]}~.
\end{align}
The rest of the proof is then the same as in the proof of
Theorem~\ref{thm:achieve-simul-renyi-2}.
\end{proof}

By combining Theorems~\ref{thm:assisted-to-unassisted}\ and
\ref{thm:alt-simul-decode}, we obtain the following rate region that is
achievable for unassisted classical communication over a quantum
multiple-access channel when using a quantum simultaneous decoder:

\begin{theorem}
\label{thm:mult-send-gen-alt}The following rate region is achievable when
using a quantum simultaneous decoder for unassisted classical communication
over the $K$-sender, classical-input quantum-output multiple-access channel
$x_{1},\ldots,x_{K}\rightarrow\rho_{x_{1},\ldots,x_{K}}$:%
\begin{equation}
\sum_{j\in\mathcal{J}}R_{j}\leq I^{\prime}(X(\mathcal{J});CX(\mathcal{J}%
^{c}))_{\omega},\qquad\text{for every}~\mathcal{J}\subseteq\lbrack K]~,
\end{equation}
where%
\begin{equation}
\omega_{X_{1}\cdots X_{K}C}\equiv\sum_{x_{1},\ldots,x_{K}}p_{X_{1}}%
(x_{1})\cdots p_{X_{K}}(x_{K})|x_{1}\rangle\langle x_{1}|_{X_{1}}\otimes
\cdots\otimes|x_{K}\rangle\langle x_{K}|_{X_{K}}\otimes\rho_{x_{1}%
,\ldots,x_{K}}.
\end{equation}
Observe that $H_{2}(A|Y)_{\sigma}=-\log\sum_{y}p(y)\operatorname{Tr}%
\{\sigma_{y}^{2}\}$ for a state $\sigma_{YA}=\sum_{y}p(y)|y\rangle\langle
y|_{Y}\otimes\sigma_{y}$.
\end{theorem}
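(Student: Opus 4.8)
The plan is to apply Theorem~\ref{thm:assisted-to-unassisted} to the $n$-fold channel $\mathcal{N}^{\otimes n}$, feeding it a well-chosen test operator, where the i.i.d.\ estimates are exactly those from the proof of Theorem~\ref{thm:alt-simul-decode} but now evaluated on the classical--quantum state $\omega_{X_1\cdots X_K C}$ rather than on a fully quantum $\omega_{S_1\cdots S_K C}$. Concretely, by the $K$-sender version of Theorem~\ref{thm:assisted-to-unassisted}, for every test operator $T_{X_1^n\cdots X_K^n C^n}$ with $0\le T\le I$ there exists an unassisted simultaneous-decoding code for $\mathcal{N}^{\otimes n}$ whose average error probability is at most the right-hand side of \eqref{eq:avg-err-multiple-senders}, with all the $R$ and $A$ systems taken classical, $\theta_{R_kA_k}$ taken to be the classical--classical correlated state of sender $k$, and $\omega_{R_1\cdots R_KC}$ replaced by $\omega_{X_1\cdots X_KC}^{\otimes n}$. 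Hence it suffices to exhibit one choice of $T$ for which this bound vanishes as $n\to\infty$ for rate vectors in the stated region; the derandomization built into Theorem~\ref{thm:assisted-to-unassisted} (via the averaged-error identities \eqref{eq:derand-1}--\eqref{eq:derand-2}, generalized to $K$ senders) then produces the desired unassisted code.

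For the test operator I would take the ``coated'' typical projector
\[
T \equiv \Big(\bigotimes_{k=1}^{K}\Pi_{X_k^n}^{\omega,\delta}\Big)\Pi_{X_1^n\cdots X_K^n C^n}^{\omega,\delta}\Big(\bigotimes_{k=1}^{K}\Pi_{X_k^n}^{\omega,\delta}\Big),
\]
in analogy with \eqref{eq:renyi-2-simul-decode}, where the projectors are the weak-typical projectors of the marginals $\omega_{X_k}$ and of $\omega_{X_1\cdots X_KC}$. The $c_{\operatorname{I}}$ term $\operatorname{Tr}\{(I-T)\omega_{X_1\cdots X_KC}^{\otimes n}\}$ is $O(\sqrt{\varepsilon})$ for sufficiently large $n$, using Lemma~\ref{lemma:close}, the unit-probability property \eqref{ieq:typical-unit-prob}, and $K$ applications of the gentle operator Lemma~\ref{lemma:gentle}, exactly as in the two-sender proofs. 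For each non-empty $\mathcal{J}\subseteq[K]$, I would bound the corresponding term $\big[\prod_{j\in\mathcal{J}}(M_j-1)\big]\operatorname{Tr}\{T\,\mathcal{N}^{\otimes n}(\cdots)\}$ by repeating the chain of inequalities in the proof of Theorem~\ref{thm:alt-simul-decode} that culminates in \eqref{eq:equal-cond-coll-ent}: first apply the projector trick \eqref{ieq:project-trick} to $\omega^{\otimes n}$, then use the equipartition bound \eqref{ieq:typical-equal-partition} together with the identity $H(X_j)_\omega = H(X_j|\cdots)_\omega$ (valid because $\omega_{X_1\cdots X_K}$ is a product state) to extract a conditional-entropy factor, and finally apply \eqref{eq:sqrt-root-proj-trick} twice on the $X(\mathcal{J}^c)^n$ systems to produce the collision conditional entropy $H_2(C|X(\mathcal{J}^c))_\omega$. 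This yields a bound $2^{-n[I^{\prime}(X(\mathcal{J});CX(\mathcal{J}^c))_\omega-\sum_{j\in\mathcal{J}}R_j-c_K\delta]}$ for a constant $c_K$ depending only on $K$.

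Summing over the $c_{\operatorname{I}}$ term and the $2^{K}-1$ terms indexed by non-empty $\mathcal{J}$ shows that the average error of the resulting code tends to $0$ as $n\to\infty$ whenever $\sum_{j\in\mathcal{J}}R_j+c_K\delta\le I^{\prime}(X(\mathcal{J});CX(\mathcal{J}^c))_\omega$ for every non-empty $\mathcal{J}$; since $\delta>0$ is arbitrary after the large-$n$ limit, the region in the statement is achievable. The remark at the end of the statement, that $H_2(A|Y)_\sigma=-\log\sum_y p(y)\operatorname{Tr}\{\sigma_y^2\}$ for a classical--quantum state, is precisely what makes the inverse square roots appearing in \eqref{eq:sqrt-root-proj-trick} and the collision conditional entropy $H_2(C|X(\mathcal{J}^c))_\omega$ well-defined and explicit here, since $X(\mathcal{J}^c)$ is classical. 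I expect the only real work to be the careful bookkeeping of the general-$K$ chain of inequalities for each subset $\mathcal{J}$ --- in particular verifying the positivity facts used to drop typical projectors, e.g.\ $\omega_{X(\mathcal{J}^c)C}^{\otimes n}\,\Pi_{X(\mathcal{J}^c)^n}^{\omega,\delta}\,\omega_{X(\mathcal{J}^c)C}^{\otimes n}\ge 0$ --- rather than anything conceptually beyond the two-sender case already treated.
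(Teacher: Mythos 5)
Your proposal is correct and follows essentially the same route as the paper, which proves this theorem simply by combining the derandomization result of Theorem~\ref{thm:assisted-to-unassisted} with the test-operator analysis of Theorem~\ref{thm:alt-simul-decode}; you have merely spelled out the details (the coated typical projector, the collision-conditional-entropy chain of inequalities, and the block-diagonal structure needed for derandomization) that the paper leaves implicit.
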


The rate region given in Theorem~\ref{thm:mult-send-gen-alt} is arguably an
improvement over that from \cite[Theorem~6]{fawzi2012classical}. By using the
coding technique from \cite[Theorem~6]{fawzi2012classical}, for an $m$-sender
multiple access channel, only $m$ of the inequalities feature the conditional
von Neumann entropy as the first term in the entropy differences, whereas the
other $2^{m}-m-1$ inequalities feature the conditional min-entropy. On the
other hand, all $2^{m}-1$ inequalities in Theorem~\ref{thm:mult-send-gen-alt}
feature the collision conditional entropy (conditional R\'{e}nyi entropy of
order two), which is never smaller than the conditional min-entropy. Thus, as
the number $m$\ of senders grows larger, the volume of the achievable rate
region from Theorem~\ref{thm:mult-send-gen-alt} is generally significantly
larger than the volume of the achievable rate region from \cite[Theorem~6]%
{fawzi2012classical}.

We can also compare the rate regions from Theorems~\ref{thm:cc-mac-renyi-gen}
and \ref{thm:mult-send-gen-alt}. By using the identity%
\begin{align}
&  H_{2}(X(\mathcal{J}^{c})C)_{\omega}-H(X(\mathcal{J}^{c})C|X(\mathcal{J}%
))_{\omega}\nonumber\\
&  =H_{2}(X(\mathcal{J}^{c})C)_{\omega}-H(C|X_{1}\cdots X_{K})_{\omega
}-H(X(\mathcal{J}^{c})|X(\mathcal{J}))_{\omega}\\
&  =H_{2}(X(\mathcal{J}^{c})C)_{\omega}-H(C|X_{1}\cdots X_{K})_{\omega
}-H(X(\mathcal{J}^{c}))_{\omega},
\end{align}
the difference between the information-theoretic terms in the inequalities in
\eqref{eq:info-bound-2-mac}\ and \eqref{eq:info-bound-1-mac} is given by%
\begin{equation}
H_{2}(C|X(\mathcal{J}^{c}))_{\omega}-H_{2}(X(\mathcal{J}^{c})C)_{\omega
}+H(X(\mathcal{J}^{c}))_{\omega}.
\end{equation}
The above difference can sometimes be negative or positive, as one can find by
some simple numerical tests with qubit states. Thus, the two rate regions from
Theorems~\ref{thm:cc-mac-renyi-gen} and \ref{thm:mult-send-gen-alt} are
generally incomparable. However, we can easily make use of both results:\ a
standard time-sharing argument establishes that the convex hull of the two
rates regions is achievable.

\begin{remark}
The quantum simultaneous decoding conjecture from
\cite{fawzi2012classical,Wil11} is the statement that the R\'{e}nyi entropies
of order two in Theorems~\ref{thm:cc-mac-renyi-gen} and
\ref{thm:mult-send-gen-alt} can be replaced by quantum entropies (i.e., von
Neumann entropies), while still employing a quantum simultaneous decoder. An
explicit statement of the quantum simultaneous decoding conjecture is
available in \cite[Conjecture~4]{fawzi2012classical}. The conjecture has been
solved in the case of two senders for unassisted classical communication in
\cite{fawzi2012classical}\ and for entanglement-assisted classical
communication in \cite{Xu2013}, but not for three or more senders in either case.
\end{remark}

\section{Quantum simultaneous decoding for multiple-access channels and
multiple quantum hypothesis testing}

\label{sec:disucssion}

In this section, we establish explicit links between the quantum simultaneous
decoding conjecture from \cite{fawzi2012classical,Wil11} and open questions
from \cite{AM14,BHOS14}\ in multiple quantum hypothesis testing. Recall that
the general goal of quantum hypothesis testing is to minimize the error
probability in identifying quantum states. In binary quantum hypothesis
testing, one considers two hypotheses: the null hypothesis is that a quantum
system is prepared in the state $\rho$, and the alternative hypothesis is that
the quantum system is prepared in the state $\sigma$, where $\rho,\sigma
\in\mathcal{D}(\mathcal{H})$. Operationally, the discriminator receives the
state $\rho$ with probability $p\in(0,1)$ and the state $\sigma$ with
probability $1-p$, and the task is to determine which state was prepared, by
means of some quantum measurement $\{T,I-T\}$, where the test operator $T$
satisfies $0\leq T\leq I$. There are two kinds of errors: a Type~I error
occurs when the state is identified as $\sigma$ when in fact $\rho$ was
prepared and a Type~II error is the opposite kind of error. The error
probabilities corresponding to the two types of errors are as follows:%
\begin{align}
\alpha(T,\rho)  &  \equiv\operatorname{Tr}\{(I-T)\rho\},\\
\beta(T,\sigma)  &  \equiv\operatorname{Tr}\{T\sigma\}~.
\end{align}

As in information theory, quantum hypothesis testing has been studied in the
asymptotic i.i.d.~setting. In the setting of \textit{symmetric} hypothesis
testing, we are interested in minimizing the overall error probability
\begin{align}
P_{e}^{\ast}(p\rho,(1-p)\sigma)  &  \equiv\inf_{T\,:\,0\leq T\leq
I}\ p\,\alpha(T,\rho)+(1-p)\beta(T,\sigma)\\
&  =\frac{1}{2}\left(  \operatorname{Tr}\{p\rho+(1-p)\sigma\}-\Vert
p\rho-(1-p)\sigma\Vert_{1}\right)  .
\end{align}
In the i.i.d.~setting, $n$ quantum systems are prepared as either
$\rho^{\otimes n}$ or $\sigma^{\otimes n}$, and the goal is to determine the
optimal exponent of the error probability, defined as
\begin{equation}
\lim_{n\rightarrow\infty}-\frac{1}{n}\log_{2}P_{e}^{\ast}(p\rho^{\otimes
n},(1-p)\sigma^{\otimes n}).
\end{equation}
One of the landmark results in quantum hypothesis testing is that the optimal
exponent is equal to the quantum Chernoff distance
\cite{ACMBMAV07,nussbaum2009chernoff}:
\begin{equation}
\lim_{n\rightarrow\infty}-\frac{1}{n}\log_{2}P_{e}^{\ast}(p\rho^{\otimes
n},(1-p)\sigma^{\otimes n})=C(\rho,\sigma)\equiv\sup_{s\in\lbrack0,1]}%
-\log_{2}\operatorname{Tr}\{\rho^{s}\sigma^{1-s}\}.
\end{equation}
This development can be generalized to the setting in which $\rho$, $\sigma$,
$p$, and $1-p$ can be replaced by positive semi-definite operators $\neq0$ and
positive constants. Indeed, for positive semi-definite $A$ and $B$, we can
define%
\begin{align}
P_{e}^{\ast}(A,B)  &  \equiv\inf_{T\,:\,0\leq T\leq I}\ \operatorname{Tr}%
\{(I-T)A\}+\operatorname{Tr}\{TB\}\label{eq:err-prob-positive-ops-1}\\
&  =\frac{1}{2}\left(  \operatorname{Tr}\{A+B\}-\Vert A-B\Vert_{1}\right)  .
\label{eq:err-prob-positive-ops}%
\end{align}
Then for positive constants $K_{0},K_{1}>0$, we find that%
\begin{equation}
\lim_{n\rightarrow\infty}-\frac{1}{n}\log_{2}P_{e}^{\ast}(K_{0}A^{\otimes
n},K_{1}B^{\otimes n})=C(A,B)\equiv\sup_{s\in\lbrack0,1]}-\log_{2}%
\operatorname{Tr}\{A^{s}B^{1-s}\}. \label{eq:q-chernoff-thm}%
\end{equation}

In the setting of \textit{asymmetric} hypothesis testing, we are interested in
the optimal exponent of the Type~II error $\beta(T,\sigma)$, under a
constraint on the Type~I error, i.e., $\alpha(T,\sigma)\leq\varepsilon$, with
$\varepsilon\in\lbrack0,1]$. That is, we are interested in the following
quantity, now known as hypothesis testing relative entropy:
\begin{equation}
D_{H}^{\varepsilon}(\rho\Vert\sigma)\equiv-\log_{2}\inf_{T}\{\beta(T,\sigma):
0\leq T\leq I \wedge\alpha(T,\sigma)\leq\varepsilon\}.
\end{equation}
The optimal exponential decay rate in the asymmetric setting is given by the
quantum Stein's lemma \cite{hiai1991proper,ogawa2000strong}, which establishes
the following for all $\varepsilon\in(0,1)$:%
\begin{equation}
\lim_{n\rightarrow\infty}\frac{1}{n}D_{H}^{\varepsilon}(\rho^{\otimes n}%
\Vert\sigma^{\otimes n})=D(\rho\Vert\sigma),
\end{equation}
giving the quantum relative entropy its fundamental operational interpretation.

As we can see from \cite{anshu2017one}\ and our developments in
Section~\ref{sec:EA-point-to-point}, position-based coding forges a direct
connection between single-sender single-receiver communication and binary
quantum hypothesis testing. Specifically, the Chernoff distance from symmetric
hypothesis testing gives a lower bound on the entanglement-assisted error
exponent; while the application of the results from asymmetric hypothesis
testing leads to a lower bound on the one-shot entanglement-assisted capacity
and in turn on the second-order coding rate for entanglement-assisted communication.

In what follows, we discuss the generalization of both asymmetric and
symmetric hypothesis testing to multiple quantum states and their connections
to quantum simultaneous decoding.

\subsection{Symmetric multiple quantum hypothesis testing and quantum
simultaneous decoding}

We now tie one version of the quantum simultaneous decoding conjecture to
\cite[Conjecture~4.2]{AM14}, which has to do with distinguishing one state
from a set of other possible states. To recall the setting of
\cite[Conjecture~4.2]{AM14}, suppose that a state $\rho$ is prepared with
probability $p\in(0,1)$ and with probability $1-p$ one state $\sigma_{i}$ of
$r$ states is prepared with probability $q_{i}$, where $i\in\{1,\ldots,r\}$.
The goal is to determine whether $\rho$ was prepared or whether one of the
other states was prepared, and the error probability in doing so is equal to%
\begin{equation}
P_{e}^{\ast}\!\left(  p\rho,(1-p)\sum_{i=1}^{r}q_{i}\sigma_{i}\right)  .
\end{equation}
The measurement operator that achieves the minimum error probability is equal
to
\begin{equation}
\left\{  p\rho-(1-p)\sum_{i=1}^{r}q_{i}\sigma_{i}\geq0\right\}  .
\end{equation}
As usual, we are interested in the i.i.d.~case, in which $\rho$ and
$\sigma_{i}$ are replaced by $\rho^{\otimes n}$ and $\sigma_{i}^{\otimes n}$
for large $n$, and \cite[Conjecture~4.2]{AM14} states that%
\begin{equation}
\lim_{n\rightarrow\infty}-\frac{1}{n}\log_{2}P_{e}^{\ast}\!\left(
p\rho^{\otimes n},(1-p)\sum_{i=1}^{r}q_{i}\sigma_{i}^{\otimes n}\right)
=\min_{i}C(\rho,\sigma_{i}).
\end{equation}
We now propose a slight generalization of \cite[Conjecture~4.2]{AM14} and
\eqref{eq:q-chernoff-thm}, in which $\rho$ and $\sigma_{i}$ are replaced by
positive semi-definite operators and $p$ and $1-p$ are replaced by positive constants:

\begin{conjecture}
\label{conj:simul-decode}Let $\{A,B_{1},\ldots,B_{r}\}$ be a set of positive
semi-definite operators with trace strictly greater than zero, and let $K_{0}%
$, \ldots, $K_{r}$ be strictly positive constants. Then the following equality
holds%
\begin{equation}
\lim_{n\rightarrow\infty}-\frac{1}{n}\log_{2}P_{e}^{\ast}\!\left(
K_{0}A^{{\otimes n}},\sum_{i=1}^{r}K_{i}B_{i}^{{\otimes n}}\right)  =\min
_{i}C(A,B_{i})~,
\end{equation}
where $P_{e}^{\ast}$ is defined in
\eqref{eq:err-prob-positive-ops-1}--\eqref{eq:err-prob-positive-ops} and
$C(A,B_{i})$ in \eqref{eq:q-chernoff-thm}.
\end{conjecture}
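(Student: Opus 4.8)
\bigskip\noindent\textbf{A strategy toward this conjecture.}
The ``converse'' direction, $\limsup_{n\to\infty}\big[-\tfrac1n\log_2 P_e^{\ast}(K_0 A^{\otimes n},\sum_{i=1}^r K_i B_i^{\otimes n})\big]\le\min_i C(A,B_i)$, is the easy one and should follow by reducing to binary testing against a single alternative. For any $0\le T\le I$ and any fixed $i$, discarding the nonnegative terms $K_j B_j^{\otimes n}$ with $j\neq i$ gives $\operatorname{Tr}\{(I-T)K_0 A^{\otimes n}\}+\operatorname{Tr}\{T\sum_j K_j B_j^{\otimes n}\}\ge\operatorname{Tr}\{(I-T)K_0 A^{\otimes n}\}+\operatorname{Tr}\{TK_i B_i^{\otimes n}\}\ge P_e^{\ast}(K_0 A^{\otimes n},K_i B_i^{\otimes n})$; taking the infimum over $T$ and then the maximum over $i$ yields $P_e^{\ast}(K_0 A^{\otimes n},\sum_i K_i B_i^{\otimes n})\ge\max_i P_e^{\ast}(K_0 A^{\otimes n},K_i B_i^{\otimes n})$. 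Applying the quantum Chernoff theorem \eqref{eq:q-chernoff-thm} to each pair $(K_0 A,K_i B_i)$ (legitimate since each $A$ and $B_i$ is nonzero and each $K_i>0$) and using $-\tfrac1n\log_2\max_i x_i=\min_i(-\tfrac1n\log_2 x_i)$ then gives the bound in the limit. Hence the whole content lies in the matching lower bound on the exponent: one must exhibit, for every $\delta>0$ and all large $n$, a \emph{single} test $T_n$ such that $\operatorname{Tr}\{(I-T_n)A^{\otimes n}\}$ and every $\operatorname{Tr}\{T_n B_i^{\otimes n}\}$ are at most $2^{-n(\min_i C(A,B_i)-\delta)}$, up to harmless prefactors involving the $K_i$ and the traces $\operatorname{Tr}\{A\}^n,\operatorname{Tr}\{B_i\}^n$.

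A first attempt at this uses only Lemma~\ref{lemma:spectral-ineq}: for any $s\in[0,1]$,
\[
P_e^{\ast}\!\Big(K_0 A^{\otimes n},\sum_i K_i B_i^{\otimes n}\Big)\le\operatorname{Tr}\Big\{(K_0 A^{\otimes n})^{s}\Big(\sum_i K_i B_i^{\otimes n}\Big)^{1-s}\Big\}\le\sum_i K_0^{s}K_i^{1-s}\big(\operatorname{Tr}\{A^{s}B_i^{1-s}\}\big)^{n},
\]
where the second inequality applies operator subadditivity $(X+Y)^{t}\le X^{t}+Y^{t}$ for $t\in[0,1]$ inductively, together with $(B_i^{\otimes n})^{1-s}=(B_i^{1-s})^{\otimes n}$ and $\operatorname{Tr}\{(A^{s})^{\otimes n}(B_i^{1-s})^{\otimes n}\}=(\operatorname{Tr}\{A^{s}B_i^{1-s}\})^{n}$. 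Since an $n$-th power sum of finitely many terms is governed by its largest term, this yields an exponent of at least $\sup_{s\in[0,1]}\min_i\big(-\log_2\operatorname{Tr}\{A^{s}B_i^{1-s}\}\big)$. By the minimax inequality this is $\le\min_i\sup_{s}\big(-\log_2\operatorname{Tr}\{A^{s}B_i^{1-s}\}\big)=\min_i C(A,B_i)$, with equality when $r=1$ (the ordinary quantum Chernoff bound) and when $A,B_1,\dots,B_r$ commute, in which case one in fact has the classical union bound $P_e^{\ast}(A,\sum_i B_i)\le\sum_i P_e^{\ast}(A,B_i)$ outright; but in general no single exponent $s$ is optimal for all pairs simultaneously, and moreover the subadditivity step is very lossy unless the $B_i^{\otimes n}$ are close to having mutually orthogonal supports, so this route alone cannot be expected to close the gap.

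A second attempt builds $T_n$ from the individual projective Holevo--Helstrom/Neyman--Pearson tests $E_n^{(i)}$ for $A^{\otimes n}$ versus $B_i^{\otimes n}$ furnished by the pairwise quantum Chernoff theorem, each with both error probabilities $\le2^{-n(C(A,B_i)-\delta)}$, and combines them by a sequential measurement that accepts ``$A$'' only if all $r$ tests accept. The Type~I error is then bounded by the gentle measurement lemma (Lemma~\ref{lemma:gentle}), which controls the cumulative disturbance of $A^{\otimes n}$ after $r$ measurements by $\sum_i 2\sqrt{\operatorname{Tr}\{(I-E_n^{(i)})A^{\otimes n}\}}$; the square root costs a factor $\tfrac12$ in the exponent, and the same loss afflicts any attempt to use a noncommutative union bound, since $\operatorname{Tr}\{(\bigvee_i(I-E_n^{(i)}))A^{\otimes n}\}$ is \emph{not} dominated by $\sum_i\operatorname{Tr}\{(I-E_n^{(i)})A^{\otimes n}\}$ for noncommuting projections. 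Worse, it is not even clear how to simultaneously control all $r$ Type~II errors with such a combined test: $E_n^{(i)}$ rejects $B_i^{\otimes n}$ strongly but says nothing about $B_j^{\otimes n}$ for $j\neq i$, and it gets ``diluted'' when composed with the other tests. Removing this factor of two (and handling the Type~II errors together) is the main obstacle, and is precisely why the statement remains a conjecture for $r\ge2$; note that the solved multiple Chernoff distance conjecture of \cite{li2016discriminating}, which concerns the symmetric problem of discriminating $r+1$ states pairwise, does not evidently apply to this asymmetric ``one versus many'' structure.

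The most promising line is therefore to analyze the genuinely optimal test $T_n=\{K_0 A^{\otimes n}-\sum_i K_i B_i^{\otimes n}\ge0\}$ directly, rather than any combination of pairwise tests, exploiting the permutation symmetry of the i.i.d.\ structure: pinching $T_n$ onto the irreducible ``type-class'' components of the symmetric subspace (in the spirit of K.~Li's proof) should reduce the problem, up to subexponential corrections, to a polynomial number of mutually commuting pieces, on each of which the classical union bound $P_e^{\ast}(A,\sum_i B_i)\le\sum_i P_e^{\ast}(A,B_i)$ is available, thereby upgrading the exponent from $\tfrac12\min_i C(A,B_i)$, or from $\sup_s\min_i(-\log_2\operatorname{Tr}\{A^{s}B_i^{1-s}\})$, to the conjectured $\min_i C(A,B_i)$. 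As evidence that this target is correct even when the single-copy union bound fails, take $A=|0\rangle\!\langle0|$, $B_1=|+\rangle\!\langle+|$, $B_2=|-\rangle\!\langle-|$ with $K_0=K_1=K_2=1$: here $P_e^{\ast}(A,B_1+B_2)>P_e^{\ast}(A,B_1)+P_e^{\ast}(A,B_2)$, yet $|+^n\rangle\!\langle+^n|$ and $|-^n\rangle\!\langle-^n|$ are orthogonal, so $\operatorname{Tr}\{|0^n\rangle\!\langle0^n|\,(B_1^{\otimes n}+B_2^{\otimes n})\}=2^{1-n}$ and the simple test $T_n=|0^n\rangle\!\langle0^n|$ already achieves the exponent $1=\min_i C(A,B_i)$.
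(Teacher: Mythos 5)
You have not proved the statement, but neither does the paper: this is stated as Conjecture~\ref{conj:simul-decode} and is left open there, offered as a slight generalization of \cite[Conjecture~4.2]{AM14} and of the pairwise quantum Chernoff theorem \eqref{eq:q-chernoff-thm}. So there is no proof in the paper to compare against, and your write-up is correctly framed as a strategy discussion rather than a proof. The verifiable parts of what you wrote are sound: the converse direction via $P_{e}^{\ast}(K_{0}A^{\otimes n},\sum_{i}K_{i}B_{i}^{\otimes n})\geq\max_{i}P_{e}^{\ast}(K_{0}A^{\otimes n},K_{i}B_{i}^{\otimes n})$ is correct (monotonicity of the objective in the alternative operator, then the pairwise Chernoff theorem for positive semi-definite operators); the achievability bound $\sup_{s\in[0,1]}\min_{i}\bigl(-\log_{2}\operatorname{Tr}\{A^{s}B_{i}^{1-s}\}\bigr)$ via Lemma~\ref{lemma:spectral-ineq} and operator subadditivity of $x\mapsto x^{t}$ for $t\in[0,1]$ is correct and is essentially the kind of suboptimal bound the paper alludes to when citing \cite[Theorem~4.3]{AM14}; the factor-of-two loss in the exponent from composing pairwise tests via gentle measurement is a genuine and well-known obstruction; and your closing example with $A=|0\rangle\langle0|$, $B_{1}=|+\rangle\langle+|$, $B_{2}=|-\rangle\langle-|$ correctly illustrates that the single-copy union bound can fail while the conjectured i.i.d.~exponent is still attained.

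The one thing to flag is that, judged as a proof of the conjectured equality, the proposal has an unavoidable gap that you yourself identify: no single test is exhibited achieving the exponent $\min_{i}C(A,B_{i})$ for general noncommuting $\{A,B_{1},\ldots,B_{r}\}$ with $r\geq2$, and the proposed pinching-onto-type-classes route in the spirit of \cite{li2016discriminating} is only sketched, not carried out (the paper itself remarks that the resolution of the multiple Chernoff distance conjecture does not evidently transfer to this asymmetric ``one versus many'' setting). If this material were to be included, it should be presented as evidence and partial progress toward Conjecture~\ref{conj:simul-decode}, not as a proof.
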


To see how Conjecture~\ref{conj:simul-decode}\ is connected to quantum
simultaneous decoding, recall from \eqref{eq:EA-MAC-one-shot}\ our bound on
the error probability when simultaneously decoding the message pair $(l,m)$:%
\begin{multline}
p_{e}(l,m)\leq4\operatorname{Tr}\{(I-T_{RSC})\mathcal{N}_{AB\rightarrow
C}(\theta_{RA}\otimes\gamma_{SB})\}+4\big[L\operatorname{Tr}\{T_{RSC}%
\mathcal{N}_{AB\rightarrow C}(\theta_{R}\otimes\theta_{A}\otimes\gamma
_{SB})\}\\
+M\operatorname{Tr}\{T_{RSC}\mathcal{N}_{AB\rightarrow C}(\theta_{RA}%
\otimes\gamma_{S}\otimes\gamma_{B})\}+LM\operatorname{Tr}\{T_{RSC}%
\mathcal{N}_{AB\rightarrow C}(\theta_{R}\otimes\theta_{A}\otimes\gamma
_{S}\otimes\gamma_{B})\}\big]~,
\end{multline}
where we have set $c=1$ and used that $L-1<L$ and $M-1<M$. Now applying this
bound to the i.i.d.~case and setting $L=2^{nR_{1}}$ and $M=2^{nR_{2}}$, we
find that the upper bound becomes%
\begin{equation}
p_{e}(l,m)\leq4\left[  \operatorname{Tr}\{(I-T)\rho^{\otimes n}%
\}+\operatorname{Tr}\{T\left(  B_{1}^{\otimes n}+B_{2}^{\otimes n}%
+B_{3}^{\otimes n}\right)  \}\right]  ,
\end{equation}
where%
\begin{align}
\rho &  \equiv\mathcal{N}_{AB\rightarrow C}(\theta_{RA}\otimes\gamma_{SB}),\\
B_{1}  &  \equiv2^{R_{1}}\mathcal{N}_{AB\rightarrow C}(\theta_{R}\otimes
\theta_{A}\otimes\gamma_{SB}),\\
B_{2}  &  \equiv2^{R_{2}}\mathcal{N}_{AB\rightarrow C}(\theta_{RA}%
\otimes\gamma_{S}\otimes\gamma_{B}),\\
B_{3}  &  \equiv2^{R_{1}+R_{2}}\mathcal{N}_{AB\rightarrow C}(\theta_{R}%
\otimes\theta_{A}\otimes\gamma_{S}\otimes\gamma_{B}).
\end{align}
To minimize the upper bound on the error probability, we should pick the test
operator $T$ as follows:%
\begin{equation}
T\equiv\left\{  \rho^{\otimes n}-\left(  B_{1}^{\otimes n}+B_{2}^{\otimes
n}+B_{3}^{\otimes n}\right)  \geq0\right\}  ,
\label{eq:q-simul-decoder-err-exp}%
\end{equation}
and then the upper bound becomes%
\begin{align}
p_{e}(l,m)  &  \leq2\left(  \operatorname{Tr}\{\rho^{\otimes n}+B_{1}^{\otimes
n}+B_{2}^{\otimes n}+B_{3}^{\otimes n}\}-\left\Vert \rho^{\otimes n}-\left(
B_{1}^{\otimes n}+B_{2}^{\otimes n}+B_{3}^{\otimes n}\right)  \right\Vert
_{1}\right) \\
&  =4\ P_{e}^{\ast}\!\left(  \rho^{{\otimes n}},B_{1}^{{\otimes n}}%
+B_{2}^{{\otimes n}}+B_{3}^{{\otimes n}}\right)  .
\end{align}
The test operator $T$ given in \eqref{eq:q-simul-decoder-err-exp} was
previously realized in \cite{Wil10note} and \cite{hayashi2017exponent} to be
relevant as a quantum simultaneous decoder in the context of unassisted
classical communication over a quantum multiple access channel.

Now applying Conjecture~\ref{conj:simul-decode} (provided it is true), we
would find that the error probability $p_{e}(l,m)$\ is bounded from above as
$p_{e}(l,m)\lesssim e^{-nE(R_{1},R_{2})}$, with the error exponent
$E(R_{1},R_{2})$ equal to%
\begin{align}
E(R_{1},R_{2})  &  =\min\{C(\rho,B_{1}),C(\rho,B_{2}),C(\rho,B_{3})\}\\
&  =\min\Big\lbrace\sup_{s\in\lbrack0,1]}-\log_{2}\operatorname{Tr}\{\rho
^{s}B_{1}^{1-s}\},\sup_{s\in\lbrack0,1]}-\log_{2}\operatorname{Tr}\{\rho
^{s}B_{2}^{1-s}\},\nonumber\\
&  \qquad\qquad\qquad\sup_{s\in\lbrack0,1]}-\log_{2}\operatorname{Tr}%
\{\rho^{s}B_{3}^{1-s}\}\Big\rbrace\\
&  =\min\Big\lbrace\sup_{s\in\lbrack0,1]}(1-s)\Big[I_{s}(R;CS)_{\omega}%
-R_{1}\Big],\sup_{s\in\lbrack0,1]}(1-s)\Big[I_{s}(S;CR)_{\omega}%
-R_{2}\Big],\nonumber\\
&  \qquad\qquad\qquad\sup_{s\in\lbrack0,1]}(1-s)\Big[I_{s}(RS;C)_{\omega
}-(R_{1}+R_{2})\Big]~. \label{eq:multiple-err-exp}%
\end{align}
Thus the rate region $(R_{1},R_{2})$ would be achievable as long as
$E(R_{1},R_{2})>0$. Now using the fact that, for a bipartite state $\rho
_{AB}$
\begin{equation}
\lim_{s\rightarrow1}I_{s}(A;B)_{\rho}=I(A;B)_{\rho}~,
\end{equation}
we would then find that the following rate region is achievable:%
\begin{align}
R_{1}  &  \leq I(R;CS)_{\omega}~,\label{eq:ach-simul-1}\\
R_{2}  &  \leq I(S;CR)_{\omega}~,\\
R_{1}+R_{2}  &  \leq I(RS;C)_{\omega}~, \label{eq:ach-simul-3}%
\end{align}
where $\omega_{RSC}=\mathcal{N}_{AB\rightarrow C}(\theta_{RA}\otimes
\gamma_{SB})$ and the above approach would solve the quantum simultaneous
decoding conjecture. The method clearly extends to more than two senders. We
remark that aspects of the above approach were discussed in the recent work
\cite{hayashi2017exponent} for the case of unassisted classical communication
over a quantum multiple access channel, but there the connection to
\cite[Conjecture~4.2]{AM14} or Conjecture~\ref{conj:simul-decode} was not
realized, nor was the entanglement-assisted case considered.

We end this section by noting that \cite[Theorem~4.3]{AM14} offers several
suboptimal upper bounds on the error probability $P_{e}^{\ast}\!\left(
\rho^{{\otimes n}},B_{1}^{{\otimes n}}+B_{2}^{{\otimes n}}+B_{3}^{{\otimes n}%
}\right)  $, which in turn could be used to establish suboptimal achievable
rate regions for the quantum multiple-access channel. However, here we refrain
from the details of what these regions would be, except to say that they would
be in terms of the negative logarithm of the fidelity, replacing $I_{s}$ in \eqref{eq:multiple-err-exp}.

\subsection{Asymmetric hypothesis testing with composite alternative
hypothesis}

\label{subsec:asymmetic}

We now tie the quantum simultaneous decoding problem to a different open
question in asymmetric hypothesis testing. Recall our upper bound from
\eqref{eq:EA-MAC-one-shot} on the error probability for classical
communication over a quantum multiple-access channel, as applied for the
i.i.d.~case:%
\begin{equation}
p_{e}(l,m)\leq c_{\operatorname{I}}\operatorname{Tr}\{(I-T)\rho^{{\otimes n}%
}\}+c_{\operatorname{II}}\Big[\operatorname{Tr}\{T\left[  B_{1}^{{\otimes n}%
}+B_{2}^{{\otimes n}}+B_{3}^{{\otimes n}}\right]  \}\Big]~,
\label{ieq:upper-bound-hypothesis}%
\end{equation}
where $L=2^{nR_{1}}$, $M=2^{nR_{2}}$, and the state $\rho$ and the positive
semi-definite operators $B_{1}$, $B_{2}$, and $B_{3}$ are given by%
\begin{align}
\rho &  =\mathcal{N}_{AB\rightarrow C}(\theta_{RA}\otimes\gamma_{SB}%
)~,\label{eq:rho-ashypo}\\
B_{1}  &  =2^{R_{1}}\mathcal{N}_{AB\rightarrow C}(\theta_{R}\otimes\theta
_{A}\otimes\gamma_{SB}),\\
B_{2}  &  =2^{R_{2}}\mathcal{N}_{AB\rightarrow C}(\theta_{RA}\otimes\gamma
_{S}\otimes\gamma_{B}),\\
B_{3}  &  =2^{R_{1}+R_{2}}\mathcal{N}_{AB\rightarrow C}(\theta_{R}%
\otimes\theta_{A}\otimes\gamma_{S}\otimes\gamma_{B}). \label{eq:sig-3-ashypo}%
\end{align}
Rather than try to minimize the overall error probability as we did in the
previous section, we could try to minimize all of the other error
probabilities subject to a constraint on the error probability
$\operatorname{Tr}\{(I-T)\rho^{{\otimes n}}\}$. Thus, we seek a test operator
$T$ which is capable of discriminating the state $\rho^{\otimes n}$ from the
operator $B_{1}^{{\otimes n}}+B_{2}^{{\otimes n}}+B_{3}^{{\otimes n}}$. This
kind of task is formally called asymmetric hypothesis testing with composite
alternative hypothesis. The problem of a composite null hypothesis is that
considered in the context of the quantum Sanov theorem, which was solved in
\cite{bjelakovic2005quantum} (see also \cite{H02}), and finds application in
communication over compound channels \cite{BB09,mosonyi2015coding} (see also
\cite{DD07,Hayashi2009}).

The following open question, strongly related to a question from
\cite{BHOS14}, is relevant for asymmetric hypothesis testing with composite
alternative hypothesis:

\begin{question}
\label{conj:asymmetric} Consider a quantum state $\rho\in\mathcal{D}%
(\mathcal{H})$, a positive integer $r>1$, and a finite set of positive
semi-definite operators $\mathcal{B}\equiv\{B_{i}:1\leq i\leq r\}$, for which
$\operatorname{supp}(\rho)\subseteq\operatorname{supp}(B_{i})$ $\forall
B_{i}\in\mathcal{B}$ and%
\begin{equation}
\min_{i}D(\rho\Vert B_{i}) > 0.
\end{equation}
What is the most general form that $\rho$ and $\mathcal{B}$ can take such that
the following statement is true? For all $\varepsilon\in(0,1)$, $\delta>0$,
and sufficiently large $n$, there exists a binary test $\{T,I-T\}$ such that
the Type~I error is bounded from above by $\varepsilon$:%
\begin{equation}
\operatorname{Tr}\{(I-T)\rho^{{\otimes n}}\}\leq\varepsilon~,
\end{equation}
and for all $B_{i}\in\mathcal{B}$, the exponential decay rate of the Type~II
error is bounded from below as follows:
\begin{equation}
-\frac{1}{n}\log_{2}\operatorname{Tr}\{TB_{i}^{{\otimes n}}\}\geq\left[
\min_{i}D(\rho\Vert B_{i})\right]  -\delta~.
\end{equation}

\end{question}

Below we prove the following special case:

\begin{theorem}
The statement at the end of Question~\ref{conj:asymmetric} is true when the
set $\mathcal{B}$ forms a commuting set of operators (each of which need not
commute with $\rho$).
\end{theorem}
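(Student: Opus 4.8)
The plan is to exhibit, for each $n$, a single Neyman--Pearson-type test that simultaneously discriminates $\rho^{\otimes n}$ from every $B_{i}^{\otimes n}$; the point is that, because $\mathcal{B}$ is commuting, the single positive semi-definite operator $M_{n}:=\sum_{i=1}^{r}B_{i}^{\otimes n}$ can stand in for the composite alternative both in the test and in its analysis. Fix $\varepsilon\in(0,1)$ and $\delta>0$, write $D^{\ast}:=\min_{i}D(\rho\Vert B_{i})$, and set $\gamma:=D^{\ast}-\delta$. Since the $B_{i}$ commute, so do the $B_{i}^{\otimes n}$, and $B_{j}^{\otimes n}\le M_{n}$ for each $j$. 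I would take the test to be the positive spectral projection $T_{n}:=\{\rho^{\otimes n}-2^{n\gamma}M_{n}\ge0\}$, which satisfies $0\le T_{n}\le I$. The Type~II bound is then immediate: the defining property of $T_{n}$ gives $T_{n}(\rho^{\otimes n}-2^{n\gamma}M_{n})T_{n}\ge0$, hence $1=\operatorname{Tr}\{\rho^{\otimes n}\}\ge\operatorname{Tr}\{T_{n}\rho^{\otimes n}\}\ge2^{n\gamma}\operatorname{Tr}\{T_{n}M_{n}\}\ge2^{n\gamma}\operatorname{Tr}\{T_{n}B_{j}^{\otimes n}\}$, so that $\operatorname{Tr}\{T_{n}B_{j}^{\otimes n}\}\le2^{-n\gamma}$ for every $j$, i.e.\ the Type~II exponent is at least $D^{\ast}-\delta$, uniformly in $j$.

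For the Type~I error, $I-T_{n}=\{\rho^{\otimes n}-2^{n\gamma}M_{n}<0\}$, and by \eqref{eq:err-prob-sym-A-B-1}--\eqref{eq:err-prob-sym-A-B-4} together with Lemma~\ref{lemma:spectral-ineq}, applied with $A=\rho^{\otimes n}$, $B=2^{n\gamma}M_{n}$ and any $s\in(0,1)$, one gets $\operatorname{Tr}\{(I-T_{n})\rho^{\otimes n}\}\le P_{e}^{\ast}(\rho^{\otimes n},2^{n\gamma}M_{n})\le2^{n\gamma(1-s)}\operatorname{Tr}\{(\rho^{\otimes n})^{s}M_{n}^{1-s}\}$. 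Here I would use commutativity of the $B_{i}^{\otimes n}$ and the fact that $1-s\in(0,1)$ to obtain the operator inequality $M_{n}^{1-s}\le\sum_{i}(B_{i}^{\otimes n})^{1-s}$, which bounds the right-hand side by $2^{n\gamma(1-s)}\sum_{i}\operatorname{Tr}\{(\rho^{\otimes n})^{s}(B_{i}^{\otimes n})^{1-s}\}=\sum_{i}2^{-n(1-s)(D_{s}(\rho\Vert B_{i})-\gamma)}\le r\,2^{-n(1-s)(\min_{i}D_{s}(\rho\Vert B_{i})-\gamma)}$, using the definition of the R\'{e}nyi relative entropy (Section~\ref{sec:preliminary}) and its additivity on tensor powers. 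The hypothesis $\operatorname{supp}(\rho)\subseteq\operatorname{supp}(B_{i})$ makes each $D_{s}(\rho\Vert B_{i})$, $s\in(0,1)$, finite, and since $D_{s}(\rho\Vert B_{i})\uparrow D(\rho\Vert B_{i})$ as $s\uparrow1$ with only finitely many $i$, we have $\min_{i}D_{s}(\rho\Vert B_{i})\to D^{\ast}>\gamma$; fixing $s\in(0,1)$ with $\min_{i}D_{s}(\rho\Vert B_{i})>\gamma$ makes the Type~I error decay exponentially in $n$, hence fall below $\varepsilon$ for all large $n$. This produces the test demanded in Question~\ref{conj:asymmetric}.

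The step I expect to carry the real content is the Type~I estimate, specifically the operator inequality $\big(\sum_{i}B_{i}^{\otimes n}\big)^{1-s}\le\sum_{i}(B_{i}^{\otimes n})^{1-s}$: this fails for arbitrary positive semi-definite operators and holds here precisely because $\mathcal{B}$ is a commuting family (simultaneously diagonalize and reduce to the scalar inequality $(\sum_{i}x_{i})^{1-s}\le\sum_{i}x_{i}^{1-s}$ for $x_{i}\ge0$, $1-s\in(0,1)$) --- this is exactly where the commutativity hypothesis enters, and one should note that no commutation relation between $\rho$ and the $B_{i}$ is ever needed, consistent with the statement. One could alternatively route the Type~I step through the hypothesis-testing relative entropy and Proposition~\ref{prop:ineq-hypo-renyi}, but the direct bound above is shorter and keeps the role of commutativity transparent.
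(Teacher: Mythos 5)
Your proof is correct, but it takes a genuinely different route from the one in the paper. The paper builds its test from typicality machinery: it sandwiches the typical projector of $\rho$ between the relative typical projectors $\Pi_{\rho|B_{i},\delta}^{n}$, uses commutativity so that these projectors (being spectral projections of the $B_{i}^{\otimes n}$) all commute and can be pulled past one another, then bounds the Type~I error via Lemmas~\ref{lemma:gentle} and \ref{lemma:close} and the Type~II errors via the eigenvalue bounds defining relative typicality. You instead use a single Neyman--Pearson test $T_{n}=\{\rho^{\otimes n}-2^{n\gamma}\sum_{i}B_{i}^{\otimes n}\geq0\}$: the Type~II bound then follows for free from the defining property of the positive part together with $B_{j}^{\otimes n}\leq\sum_{i}B_{i}^{\otimes n}$ (notably, with no commutativity at all), and the Type~I bound follows from Lemma~\ref{lemma:spectral-ineq} plus the operator subadditivity $\left(\sum_{i}B_{i}^{\otimes n}\right)^{1-s}\leq\sum_{i}(B_{i}^{\otimes n})^{1-s}$, which is where commutativity enters for you --- and you are right that this inequality fails for general positive semi-definite operators (e.g., $(A+B)^{1/2}\not\leq A^{1/2}+B^{1/2}$ already for a pair of non-commuting rank-one projections), so this is not a removable crutch in your argument. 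The trade-off between the two approaches is worth noting: the paper's test certifies the individual exponent $D(\rho\Vert B_{i})-\delta'$ for each $i$ separately (which the paper remarks is stronger than what Question~\ref{conj:asymmetric} asks), whereas your single-threshold test only certifies the uniform exponent $\min_{i}D(\rho\Vert B_{i})-\delta$; since that is exactly what the Question demands, your proof establishes the theorem, and it is shorter, gives an explicitly exponential Type~I decay rather than the paper's $1-\varepsilon-2r\sqrt{\varepsilon}$ guarantee, and isolates the role of commutativity in a single scalar inequality.
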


\begin{proof}
To this end, we employ the notion of a relative typical projector, which was
used in \cite{BS12} to establish an alternate proof of the quantum Stein lemma
(see also \cite{BLW15} for a different use of relative typical projectors).
Let $B_{i}=\sum_{y}f_{Y}^{i}(y)|\phi_{y}^{i}\rangle\langle\phi_{y}^{i}|$
denote a spectral decomposition of $B_{i}$. For a state $\rho$ and positive
semi-definite operator $B_{i}$, define the relative typical subspace
$T_{\rho|B_{i}}^{\delta,n}$ for $\delta>0$ and integer $n\geq1$ as%
\begin{equation}
T_{\rho|B_{i}}^{\delta,n}\equiv\mathrm{span}\left\{  |\phi_{y^{n}}^{i}%
\rangle:\left\vert -\frac{1}{n}\log_{2}(f_{Y^{n}}^{i}(y^{n}%
))+\operatorname{Tr}\{\rho\log_{2}B_{i}\}\right\vert \leq\delta\right\}  ,
\label{qtyp}%
\end{equation}
where%
\begin{align}
y^{n}  &  \equiv y_{1}\cdots y_{n},\\
f_{Y^{n}}^{i}\left(  y^{n}\right)   &  \equiv\prod\limits_{j=1}^{n}f_{Y}%
^{i}\left(  y_{j}\right)  ,\\
|\phi_{y^{n}}^{i}\rangle &  \equiv|\phi_{y_{1}}^{i}\rangle\otimes\cdots
\otimes|\phi_{y_{n}}^{i}\rangle.
\end{align}
Let $\Pi_{\rho|B_{i},\delta}^{n}$ denote the projection operator corresponding
to the relative typical subspace $T_{\rho|B_{i}}^{\delta,n}$. The critical
properties of the relative typical projector are as follows:%
\begin{align}
\operatorname{Tr}\{\Pi_{\rho|B_{i},\delta}^{n}\rho^{\otimes n}\}  &
\geq1-\varepsilon,\\
2^{-n\left[  -\operatorname{Tr}\{\rho\log_{2}B_{i}\}+\delta\right]  }\Pi
_{\rho|B_{i},\delta}^{n}  &  \leq\Pi_{\rho|B_{i},\delta}^{n}B_{i}^{\otimes
n}\Pi_{\rho|B_{i},\delta}^{n}\leq2^{-n\left[  -\operatorname{Tr}\{\rho\log
_{2}B_{i}\}-\delta\right]  }\Pi_{\rho|B_{i},\delta}^{n},
\end{align}
with the first inequality holding for all $\varepsilon\in(0,1)$, $\delta>0$,
and sufficiently large $n$.

The main idea for the proof under the stated assumptions is to take the test
operator $T$ as%
\begin{equation}
T=\Pi_{\rho|B_{r},\delta}^{n}\cdots\Pi_{\rho|B_{1},\delta}^{n}\Pi_{\rho
,\delta}^{n}\Pi_{\rho|B_{1},\delta}^{n}\cdots\Pi_{\rho|B_{r},\delta}^{n},
\end{equation}
where $\Pi_{\rho,\delta}^{n}$ is the typical projector for $\rho$. Then we
find that for all $\varepsilon\in(0,1)$, $\delta>0$, and sufficiently large
$n$%
\begin{align}
\operatorname{Tr}\{T\rho^{\otimes n}\}  &  \geq\operatorname{Tr}\{\Pi
_{\rho,\delta}^{n}\rho^{\otimes n}\}-\sum_{i=1}^{r}\left\Vert \Pi_{\rho
|B_{i},\delta}^{n}\rho^{\otimes n}\Pi_{\rho|B_{i},\delta}^{n}-\rho^{\otimes
n}\right\Vert _{1}\\
&  \geq1-\varepsilon-2r\sqrt{\varepsilon},
\end{align}
which follows by applying Lemmas~\ref{lemma:gentle} and \ref{lemma:close} and
properties of typicality and relative typicality.\ To handle the other kind of
error, consider that, from the assumption, all of the projectors $\Pi
_{\rho|B_{i},\delta}^{n}$ commute, so that%
\begin{align}
\operatorname{Tr}\{TB_{i}^{\otimes n}\}  &  =\operatorname{Tr}\{\Pi
_{\rho|B_{r},\delta}^{n}\cdots\Pi_{\rho|B_{1},\delta}^{n}\Pi_{\rho,\delta}%
^{n}\Pi_{\rho|B_{1},\delta}^{n}\cdots\Pi_{\rho|B_{r},\delta}^{n}B_{i}^{\otimes
n}\}\\
&  =\operatorname{Tr}\{\Pi_{\rho,\delta}^{n}\Pi_{\rho|B_{1},\delta}^{n}%
\cdots\Pi_{\rho|B_{r},\delta}^{n}B_{i}^{\otimes n}\Pi_{\rho|B_{r},\delta}%
^{n}\cdots\Pi_{\rho|B_{1},\delta}^{n}\}\\
&  =\operatorname{Tr}\{\Pi_{\rho,\delta}^{n}\Pi_{\rho|B_{1},\delta}^{n}%
\cdots\Pi_{\rho|B_{r},\delta}^{n}\Pi_{\rho|B_{i},\delta}^{n}B_{i}^{\otimes
n}\Pi_{\rho|B_{i},\delta}^{n}\Pi_{\rho|B_{r},\delta}^{n}\cdots\Pi_{\rho
|B_{1},\delta}^{n}\}\\
&  \leq2^{-n\left[  -\operatorname{Tr}\{\rho\log_{2}B_{i}\}-\delta\right]
}\operatorname{Tr}\{\Pi_{\rho,\delta}^{n}\Pi_{\rho|B_{1},\delta}^{n}\cdots
\Pi_{\rho|B_{r},\delta}^{n}\Pi_{\rho|B_{i},\delta}^{n}\Pi_{\rho|B_{r},\delta
}^{n}\cdots\Pi_{\rho|B_{1},\delta}^{n}\}\\
&  \leq2^{-n\left[  -\operatorname{Tr}\{\rho\log_{2}B_{i}\}-\delta\right]
}\operatorname{Tr}\{\Pi_{\rho,\delta}^{n}\}\\
&  \leq2^{-n\left[  -\operatorname{Tr}\{\rho\log_{2}B_{i}\}-\delta\right]
}2^{n\left[  H(\rho)+\delta\right]  }\\
&  =2^{-n\left[  D(\rho\Vert B_{i})-2\delta\right]  }.
\end{align}
The statement of the theorem follows by setting $\varepsilon^{\prime}%
\equiv\varepsilon+2r\sqrt{\varepsilon}$ and $\delta^{\prime}\equiv2\delta$,
considering that we have shown the existence of a test $T$ for which%
\begin{equation}
\operatorname{Tr}\{T\rho^{\otimes n}\}\geq1-\varepsilon^{\prime}%
,\qquad\operatorname{Tr}\{TB_{i}^{\otimes n}\}\leq2^{-n\left[  D(\rho\Vert
B_{i})-\delta^{\prime}\right]  },
\end{equation}
and it is possible to satisfy this for any choice of $\varepsilon^{\prime}%
\in(0,1)$ and $\delta^{\prime}>0$ by taking $n$ sufficiently large. (Note that
for the bound on the second kind of error probability to be decaying
exponentially, we require $\delta^{\prime}>0$ to be small enough so that
$\min_{i}D(\rho\Vert B_{i}) > \delta^{\prime}$.) We remark that this
conclusion is actually stronger than what is stated in
Question~\ref{conj:asymmetric} because here we conclude for all $B_{i}%
\in\mathcal{B}$, that%
\begin{equation}
-\frac{1}{n}\log_{2}\operatorname{Tr}\{TB_{i}^{{\otimes n}}\}\geq D(\rho\Vert
B_{i})-\delta^{\prime}\geq\left[  \min_{i}D(\rho\Vert B_{i})\right]
-\delta^{\prime}~.
\end{equation}
This concludes the proof.
\end{proof}

\bigskip

To see how Question~\ref{conj:asymmetric} is related to quantum simultaneous
decoding of the multiple-access channel, consider that for $\rho$, $B_{1}$,
\ldots, $B_{3}$ as defined in \eqref{eq:rho-ashypo}--\eqref{eq:sig-3-ashypo},
the inequality%
\begin{equation}
\min_{i}D(\rho\Vert B_{i})>0
\end{equation}
is equivalent to the following set of inequalities:%
\begin{align}
R_{1}  &  < I(R;CS)_{\omega}~,\\
R_{2}  &  < I(S;CR)_{\omega}~,\\
R_{1}+R_{2}  &  < I(RS;C)_{\omega}~,
\end{align}
where $\omega_{RSC}=\mathcal{N}_{AB\rightarrow C}(\theta_{RA}\otimes
\gamma_{SB})$. This equivalence holds because $\min_{i}D(\rho\Vert B_{i})>0$
is equivalent to the following three inequalities:%
\begin{equation}
D(\rho\Vert B_{1})>0,\qquad D(\rho\Vert B_{2})>0,\qquad D(\rho\Vert B_{3})>0,
\end{equation}
and%
\begin{align}
D(\rho\Vert B_{1})  &  =D(\mathcal{N}_{AB\rightarrow C}(\theta_{RA}%
\otimes\gamma_{SB})\Vert2^{R_{1}}\mathcal{N}_{AB\rightarrow C}(\theta
_{R}\otimes\theta_{A}\otimes\gamma_{SB}))\\
&  =D(\mathcal{N}_{AB\rightarrow C}(\theta_{RA}\otimes\gamma_{SB}%
)\Vert\mathcal{N}_{AB\rightarrow C}(\theta_{R}\otimes\theta_{A}\otimes
\gamma_{SB}))-R_{1}\\
&  =I(R;CS)_{\omega}-R_{1},\\
D(\rho\Vert B_{2})  &  =D(\mathcal{N}_{AB\rightarrow C}(\theta_{RA}%
\otimes\gamma_{SB})\Vert2^{R_{2}}\mathcal{N}_{AB\rightarrow C}(\theta
_{RA}\otimes\gamma_{S}\otimes\gamma_{B}))\\
&  =D(\mathcal{N}_{AB\rightarrow C}(\theta_{RA}\otimes\gamma_{SB}%
)\Vert\mathcal{N}_{AB\rightarrow C}(\theta_{RA}\otimes\gamma_{S}\otimes
\gamma_{B}))-R_{2},\\
&  =I(S;CR)_{\omega}-R_{2},\\
D(\rho\Vert B_{3})  &  =D(\mathcal{N}_{AB\rightarrow C}(\theta_{RA}%
\otimes\gamma_{SB})\Vert2^{R_{1}+R_{2}}\mathcal{N}_{AB\rightarrow C}%
(\theta_{R}\otimes\theta_{A}\otimes\gamma_{SB}))\\
&  =D(\mathcal{N}_{AB\rightarrow C}(\theta_{RA}\otimes\gamma_{SB}%
)\Vert\mathcal{N}_{AB\rightarrow C}(\theta_{R}\otimes\theta_{A}\otimes
\gamma_{SB}))-(R_{1}+R_{2})\\
&  =I(RS;C)_{\omega}-(R_{1}+R_{2}).
\end{align}
Thus, if such a sequence of test operators existed as stated in
Question~\ref{conj:asymmetric}, then the error probability $p_{e}(l,m)$ when
decoding a multiple-access channel could be bounded from above as%
\begin{align}
p_{e}(l,m)  &  \leq c_{\operatorname{I}}\operatorname{Tr}\{(I-T)\rho^{{\otimes
n}}\}+c_{\operatorname{II}}\Big[\operatorname{Tr}\{T\left[  B_{1}^{{\otimes
n}}+B_{2}^{{\otimes n}}+B_{3}^{{\otimes n}}\right]  \}\Big]\\
&  \leq c_{\operatorname{I}}\varepsilon+3c_{\operatorname{II}}%
\Big[2^{-n\left[  \min_{i}D(\rho\Vert B_{i})-\delta\right]  }\Big]~,
\end{align}
where $\rho$, $B_{1}$, \ldots, $B_{3}$ are defined in
\eqref{eq:rho-ashypo}--\eqref{eq:sig-3-ashypo}. Then by choosing the rates
$R_{1}$ and $R_{2}$ to satisfy%
\begin{align}
R_{1}+2\delta &  \leq I(R;CS)_{\omega}~,\\
R_{2}+2\delta &  \leq I(S;CR)_{\omega}~,\\
R_{1}+R_{2}+2\delta &  \leq I(RS;C)_{\omega}~,
\end{align}
which is equivalent to $\min_{i}D(\rho\Vert B_{i})\geq2\delta$, we would have%
\begin{equation}
p_{e}(l,m)\leq c_{\operatorname{I}}\varepsilon+3c_{\operatorname{II}%
}2^{-n\delta},
\end{equation}
and we could thus make the error probability as small as desired by taking $n$
sufficiently large. Since $\delta>0$ is arbitrary, we could then say that the
rate region in \eqref{eq:ach-simul-1}--\eqref{eq:ach-simul-3}\ would be
achievable. If the statement at the end of Question~\ref{conj:asymmetric}%
\ holds for the states given above, then the method would clearly lead to a
quantum simultaneous decoder for more than two senders, by a straightforward
generalization of the above approach.

The authors of \cite{brandao2010generalization} considered a similar problem
in asymmetric hypothesis testing for a specific family of states with certain
permutation symmetry. We should point out that in
\cite{brandao2010generalization}, the lower bound for the exponential rate of
the Type~II error is given by a regularized version of $\min_{i}D(\rho
\Vert\sigma_{i})$. If a similar result held, along the lines stated in
Question~\ref{conj:asymmetric} and related to the conjecture in \cite{BHOS14},
without the need for regularization and for the operators in
\eqref{eq:rho-ashypo}--\eqref{eq:sig-3-ashypo} (and more general ones relevant
for more senders), then the developments in the present paper would
immediately give bounds for the performance of quantum simultaneous decoding
for the quantum multiple-access channel.

We end this section by remarking that our quantum simultaneous decoder in
\eqref{eq:renyi-2-simul-decode}\ gives a method for distinguishing
$\rho^{\otimes n}$ from the operator $B_{1}^{{\otimes n}}+B_{2}^{{\otimes n}%
}+B_{3}^{{\otimes n}}$ with Type I\ error probability bounded, for
sufficiently large $n$, by an arbitrary $\varepsilon\in(0,1)$ and the Type II
error probability bounded, for a positive constant $c$, by%
\begin{equation}
\approx c\ 2^{-n\min\left\{  \tilde{I}(S;CR)_{\omega}-R_{1},\ \tilde
{I}(R;CS)_{\omega}-R_{2},\ \tilde{I}(RC;S)_{\omega}-\left(  R_{1}%
+R_{2}\right)  \right\}  }~,
\end{equation}
and%
\begin{equation}
\approx c\ 2^{-n\min\left\{  I^{\prime}(S;CR)_{\omega}-R_{1},\ I^{\prime
}(R;CS)_{\omega}-R_{2},\ I^{\prime}(RC;S)_{\omega}-\left(  R_{1}+R_{2}\right)
\right\}  },
\end{equation}
where $\omega_{RSC}=\mathcal{N}_{AB\rightarrow C}(\theta_{RA}\otimes
\gamma_{SB})$. To obtain the first statement, we apply the analysis in the
proof of Theorem~\ref{thm:achieve-simul-renyi-2}, and for the second, we apply
the analysis in the proof of Theorem~\ref{thm:alt-simul-decode}. The above
statements have clear generalizations to more systems by invoking
Theorems~\ref{thm:mult-send-gen}\ and~\ref{thm:alt-simul-decode}. Thus, our
previous analysis in the context of communication applies for this
interesting, special case of Question~\ref{conj:asymmetric}, albeit with
suboptimal rates.

\section{Conclusion}

\label{sec:conclusion}

In this paper, we apply position-based coding to establish bounds of various
quantities for classical communication. For entanglement-assisted classical
communication over point-to-point quantum channels, we establish lower bounds
on the one-shot error exponent, the one-shot capacity, and the second-order
coding rate. We also find an alternative proof for an upper bound on one-shot
entanglement-assisted classical capacity, which is arguably simpler than the
approach from \cite{matthews2014finite}. We give an achievable rate region for
entanglement-assisted classical communication over multiple-access quantum
channels. Furthermore, we explicitly show how to derandomize a
randomness-assisted protocol (for multiple-access channel) to one without
assistance from any extra resources. Our results indicate that position-based
coding can be a powerful tool in achievability proofs of various communication
protocols in quantum Shannon theory (for recent applications of position-based
coding to private classical communication, see \cite{wilde2017position}). We
finally tied some open questions in multiple quantum hypothesis testing to the
quantum simultaneous decoding conjecture. Thus, we have shown that open
problems in multiple quantum hypothesis testing are fundamental to the study
of classical information transmission over quantum multiple-access channels.

\bigskip

\textbf{Acknowledgements.} We are grateful to Anurag Anshu, Mario Berta,
Fernando Brandao, Masahito Hayashi, Rahul Jain, and Marco Tomamichel for
discussions related to the topic of this paper. HQ acknowledges support from
the Air Force Office of Scientific Research, the Army Research Office, the
National Science Foundation, and the Northrop Grumman Corporation. QLW is
supported by the NFSC (Grants No.~61272057, No.~61309029 and No.~61572081) and
funded by the China Scholarship Council (Grant No.~201506470043).
MMW\ acknowledges support from the Office of Naval Research and the National
Science Foundation.

\appendix

\section{Proof of Proposition~\ref{prop:ineq-hypo-renyi} and quantum Stein's
lemma}

Here we provide a proof of Proposition~\ref{prop:ineq-hypo-renyi}. After doing
that, we discuss briefly how Proposition~\ref{prop:ineq-hypo-renyi} and
\cite[Lemma~5]{CMW14}\ lead to a transparent proof of the quantum Stein's
lemma \cite{hiai1991proper,ogawa2000strong}.

\bigskip

\begin{proof}
[Proof of Proposition~\ref{prop:ineq-hypo-renyi}]The statement of
Proposition~\ref{prop:ineq-hypo-renyi} is trivially true if $\rho\sigma=0$,
since both $D_{H}^{\varepsilon}(\rho\Vert\sigma)=\infty$ and $D_{\alpha}%
(\rho\Vert\sigma)=\infty$ in this case. So we consider the non-trivial case
when this equality does not hold. We exploit Lemma~\ref{lemma:spectral-ineq}
to establish the above bound. The proof also bears some similarities to a
related proof in \cite{AJW17}. Recall from Lemma~\ref{lemma:spectral-ineq}
that the following inequality holds for positive semi-definite operators $A$
and $B$ and for $\alpha\in(0,1)$:%
\begin{align}
\inf_{T:0\leq T\leq I}\operatorname{Tr}\{(I-T)A\}+\operatorname{Tr}\{TB\}  &
=\frac{1}{2}\left(  \operatorname{Tr}\{A+B\}-\left\Vert A-B\right\Vert
_{1}\right) \\
&  \leq\operatorname{Tr}\{A^{\alpha}B^{1-\alpha}\}.
\end{align}
For $p\in(0,1)$, pick $A=p\rho$ and $B=\left(  1-p\right)  \sigma$. Plugging
in to the above inequality, we find that there exists a measurement operator
$T^{\ast}=T(p,\rho,\sigma)$ such that%
\begin{equation}
p\operatorname{Tr}\{(I-T^{\ast})\rho\}+(1-p)\operatorname{Tr}\{T^{\ast}%
\sigma\}\leq p^{\alpha}(1-p)^{1-\alpha}\operatorname{Tr}\{\rho^{\alpha}%
\sigma^{1-\alpha}\}.
\end{equation}
This implies that%
\begin{equation}
p\operatorname{Tr}\{(I-T^{\ast})\rho\}\leq p^{\alpha}(1-p)^{1-\alpha
}\operatorname{Tr}\{\rho^{\alpha}\sigma^{1-\alpha}\},
\end{equation}
and in turn that%
\begin{equation}
\operatorname{Tr}\{(I-T^{\ast})\rho\}\leq\left(  \frac{1-p}{p}\right)
^{1-\alpha}\operatorname{Tr}\{\rho^{\alpha}\sigma^{1-\alpha}\}.
\end{equation}
For a given $\varepsilon\in(0,1)$ and $\alpha\in(0,1)$, we pick $p\in(0,1)$
such that%
\begin{equation}
\left(  \frac{1-p}{p}\right)  ^{1-\alpha}\operatorname{Tr}\{\rho^{\alpha
}\sigma^{1-\alpha}\}=\varepsilon.
\end{equation}
This is possible because we can rewrite the above equation as%
\begin{align}
\varepsilon &  =\left(  \frac{1-p}{p}\right)  ^{1-\alpha}\operatorname{Tr}%
\{\rho^{\alpha}\sigma^{1-\alpha}\}=\left(  \frac{1}{p}-1\right)  ^{1-\alpha
}\operatorname{Tr}\{\rho^{\alpha}\sigma^{1-\alpha}\}\\
\Leftrightarrow\left(  \frac{1}{p}-1\right)  ^{1-\alpha}  &  =\frac
{\varepsilon}{\operatorname{Tr}\{\rho^{\alpha}\sigma^{1-\alpha}\}}\\
\Leftrightarrow\frac{1}{p}  &  =\left[  \frac{\varepsilon}{\operatorname{Tr}%
\{\rho^{\alpha}\sigma^{1-\alpha}\}}\right]  ^{1/\left(  1-\alpha\right)  }+1\\
\Leftrightarrow p  &  =\frac{1}{\left[  \varepsilon/\operatorname{Tr}%
\{\rho^{\alpha}\sigma^{1-\alpha}\}\right]  ^{1/\left(  1-\alpha\right)  }%
+1}\in\left(  0,1\right)  .
\end{align}
This means that $T(p,\rho,\sigma)$ with $p$ selected as above is a measurement
such that%
\begin{equation}
\operatorname{Tr}\{(I-T^{\ast})\rho\}\leq\varepsilon.
\end{equation}
Now using the fact that the measurement $T^{\ast\ast}$\ for the hypothesis
testing relative entropy achieves the smallest type II\ error probability (by
definition)\ and the fact that%
\begin{equation}
(1-p)\operatorname{Tr}\{T^{\ast}\sigma\}\leq p^{\alpha}(1-p)^{1-\alpha
}\operatorname{Tr}\{\rho^{\alpha}\sigma^{1-\alpha}\}
\end{equation}
implies%
\begin{equation}
\operatorname{Tr}\{T^{\ast}\sigma\}\leq\left(  \frac{p}{1-p}\right)  ^{\alpha
}\operatorname{Tr}\{\rho^{\alpha}\sigma^{1-\alpha}\},
\end{equation}
we find that%
\begin{equation}
\operatorname{Tr}\{T^{\ast\ast}\sigma\}\leq\left(  \frac{p}{1-p}\right)
^{\alpha}\operatorname{Tr}\{\rho^{\alpha}\sigma^{1-\alpha}\}.
\end{equation}
Considering that%
\begin{equation}
\varepsilon=\left(  \frac{1-p}{p}\right)  ^{1-\alpha}\operatorname{Tr}%
\{\rho^{\alpha}\sigma^{1-\alpha}\}=\left(  \frac{p}{1-p}\right)  ^{\alpha
-1}\operatorname{Tr}\{\rho^{\alpha}\sigma^{1-\alpha}\}
\end{equation}
implies that%
\begin{equation}
\left[  \frac{\varepsilon}{\operatorname{Tr}\{\rho^{\alpha}\sigma^{1-\alpha
}\}}\right]  ^{1/(\alpha-1)}=\frac{p}{1-p},
\end{equation}
we get that%
\begin{align}
\operatorname{Tr}\{T^{\ast\ast}\sigma\}  &  \leq\left(  \frac{p}{1-p}\right)
^{\alpha}\operatorname{Tr}\{\rho^{\alpha}\sigma^{1-\alpha}\}\\
&  =\left(  \left[  \frac{\varepsilon}{\operatorname{Tr}\{\rho^{\alpha}%
\sigma^{1-\alpha}\}}\right]  ^{1/(\alpha-1)}\right)  ^{\alpha}%
\operatorname{Tr}\{\rho^{\alpha}\sigma^{1-\alpha}\}\\
&  =\varepsilon^{\alpha/\left(  \alpha-1\right)  }\left[  \operatorname{Tr}%
\{\rho^{\alpha}\sigma^{1-\alpha}\}\right]  ^{\alpha/(1-\alpha)}%
\operatorname{Tr}\{\rho^{\alpha}\sigma^{1-\alpha}\}\\
&  =\varepsilon^{\alpha/\left(  \alpha-1\right)  }\left[  \operatorname{Tr}%
\{\rho^{\alpha}\sigma^{1-\alpha}\}\right]  ^{1/(1-\alpha)}.
\end{align}
Then, by taking a logarithm, we get that%
\begin{align}
-\log_{2}\operatorname{Tr}\{T^{\ast\ast}\sigma\}  &  \geq-\log_{2}\!\left(
\varepsilon^{\alpha/\left(  \alpha-1\right)  }\left[  \operatorname{Tr}%
\{\rho^{\alpha}\sigma^{1-\alpha}\}\right]  ^{1/(1-\alpha)}\right) \\
&  =-\frac{\alpha}{\alpha-1}\log_{2}(\varepsilon)+\frac{1}{\alpha-1}\log
_{2}\operatorname{Tr}\{\rho^{\alpha}\sigma^{1-\alpha}\}\\
&  =-\frac{\alpha}{\alpha-1}\log_{2}(\varepsilon)+D_{\alpha}(\rho\Vert\sigma).
\end{align}
Putting everything together, we conclude the statement of
Proposition~\ref{prop:ineq-hypo-renyi}.
\end{proof}

\bigskip

We now briefly discuss how Proposition~\ref{prop:ineq-hypo-renyi} and
\cite[Lemma~5]{CMW14}, once established,\ lead to a transparent proof of the
quantum Stein's lemma \cite{hiai1991proper,ogawa2000strong} (see also
\cite{hayashi2007error}\ in this context). Before doing so, let us recall that
the quantum Stein's lemma (with strong converse) can be summarized as the
following equality holding for all $\varepsilon\in(0,1)$, states $\rho$, and
positive semi-definite operators $\sigma$:%
\begin{equation}
\lim_{n\rightarrow\infty}\frac{1}{n}D_{H}^{\varepsilon}(\rho^{\otimes n}%
\Vert\sigma^{\otimes n})=D(\rho\Vert\sigma),
\end{equation}
thus giving the quantum relative entropy its most fundamental operational
meaning as the optimal Type\ II\ error exponent in asymmetric quantum
hypothesis testing. Before giving the transparent proof, let us recall the
sandwiched R\'{e}nyi relative entropy \cite{MDSFT13,WWY13}, defined for
$\alpha\in(1,\infty)$ as%
\begin{equation}
\widetilde{D}_{\alpha}(\rho\Vert\sigma)\equiv\frac{1}{\alpha-1}\log
_{2}\operatorname{Tr}\{(\sigma^{(1-\alpha)/2\alpha}\rho\sigma^{(1-\alpha
)/2\alpha})^{\alpha}\}.
\end{equation}
whenever $\operatorname{supp}(\rho)\subseteq\operatorname{supp}(\sigma)$ and
set to $+\infty$ otherwise. For $\alpha\in(0,1)$, it is defined as above. The
sandwiched R\'{e}nyi relative entropy obeys the following limit
\cite{MDSFT13,WWY13}: $\lim_{\alpha\rightarrow1}\widetilde{D}_{\alpha}%
(\rho\Vert\sigma)=D(\rho\Vert\sigma)$. \cite[Lemma~5]{CMW14} is the statement
that the following inequality holds for all $\alpha>1$ and $\varepsilon
\in(0,1)$:%
\begin{equation}
D_{H}^{\varepsilon}(\rho\Vert\sigma)\leq\widetilde{D}_{\alpha}(\rho\Vert
\sigma)+\frac{\alpha}{\alpha-1}\log_{2}\!\left(  \frac{1}{1-\varepsilon
}\right)  .
\end{equation}

Employing Proposition~\ref{prop:ineq-hypo-renyi} and \cite[Lemma~5]{CMW14}
leads to a direct proof of the quantum Stein's lemma
\cite{hiai1991proper,ogawa2000strong}. Applying
Proposition~\ref{prop:ineq-hypo-renyi}, we find that the following inequality
holds for all $\alpha\in(0,1)$ and $\varepsilon\in(0,1)$:%
\begin{align}
\frac{1}{n}D_{H}^{\varepsilon}(\rho^{\otimes n}\Vert\sigma^{\otimes n})  &
\geq\frac{\alpha}{n(\alpha-1)}\log_{2}\!\left(  \frac{1}{\varepsilon}\right)
+\frac{1}{n}D_{\alpha}(\rho^{\otimes n}\Vert\sigma^{\otimes n})\\
&  =\frac{\alpha}{n(\alpha-1)}\log_{2}\!\left(  \frac{1}{\varepsilon}\right)
+D_{\alpha}(\rho\Vert\sigma). \label{eq:stein-lower}%
\end{align}
Taking the limit as $n\rightarrow\infty$ gives the following inequality
holding for all $\alpha\in(0,1)$:%
\begin{equation}
\lim_{n\rightarrow\infty}\frac{1}{n}D_{H}^{\varepsilon}(\rho^{\otimes n}%
\Vert\sigma^{\otimes n})\geq D_{\alpha}(\rho\Vert\sigma).
\end{equation}
We can then take the limit as $\alpha\rightarrow1$ to get that%
\begin{equation}
\lim_{n\rightarrow\infty}\frac{1}{n}D_{H}^{\varepsilon}(\rho^{\otimes n}%
\Vert\sigma^{\otimes n})\geq D(\rho\Vert\sigma).
\end{equation}
Applying \cite[Lemma~5]{CMW14}, we find that the following holds for all
$\alpha>1$ and $\varepsilon\in(0,1)$:%
\begin{align}
\frac{1}{n}D_{H}^{\varepsilon}(\rho^{\otimes n}\Vert\sigma^{\otimes n})  &
\leq\frac{\alpha}{n(\alpha-1)}\log_{2}\!\left(  \frac{1}{1-\varepsilon
}\right)  +\frac{1}{n}\widetilde{D}_{\alpha}(\rho^{\otimes n}\Vert
\sigma^{\otimes n})\\
&  =\frac{\alpha}{n(\alpha-1)}\log_{2}\!\left(  \frac{1}{1-\varepsilon
}\right)  +\widetilde{D}_{\alpha}(\rho\Vert\sigma). \label{eq:stein-upper}%
\end{align}
Taking the limit $n\rightarrow\infty$, we find that the following holds for
all $\alpha>1$%
\begin{equation}
\lim_{n\rightarrow\infty}\frac{1}{n}D_{H}^{\varepsilon}(\rho^{\otimes n}%
\Vert\sigma^{\otimes n})\leq\widetilde{D}_{\alpha}(\rho\Vert\sigma).
\end{equation}
Then taking the limit as $\alpha\rightarrow1$, we get that%
\begin{equation}
\lim_{n\rightarrow\infty}\frac{1}{n}D_{H}^{\varepsilon}(\rho^{\otimes n}%
\Vert\sigma^{\otimes n})\leq D(\rho\Vert\sigma).
\end{equation}

We note here that a slightly different approach would be to set $\alpha
=1+1/\sqrt{n}$ in both \eqref{eq:stein-lower} and \eqref{eq:stein-upper} and
then take the limit $n\rightarrow\infty$.

\bibliographystyle{alpha}
\bibliography{PBcode}

\end{document}